\keywords{Random finite automata, average case complexity, discrete probabilities}
\newcommand{\Support}[1]{\mathrm{Support}_{\mathrm{B}}(#1)}
\newcommand{\SupportC}[1]{\mathrm{Support}_{\mathrm{C}}(#1)}
\newcommand{\tempB}{\mathbf{B}}
\newcommand{\tempC}{\mathbf{C}}
\newcommand{\tempT}{\mathbf{T}}
\newcommand{\shape}{\mathcal{S}}
\newcommand{\cS}{\mathcal{S}}
\newcommand{\satisfies}{\;\raisebox{0.25ex}{$\scriptstyle \models$}\;}
\newcommand{\A}{\mathcal{A}}
\newcommand{\B}{\mathcal{B}}
\newcommand{\C}{\mathcal{C}}
\renewcommand{\P}{\mathcal{P}}
\newcommand{\G}{\mathcal{G}}
\newcommand{\D}{\mathcal{D}}
\newcommand{\T}{\mathcal{T}}
\newcommand{\intinter}[2]{[\![#1,#2]\!]}
\newcommand{\proba}{\mathbb{P}}
\newcommand{\uprod}{\odot}
\newcommand{\dback}{d^{-}}
\newcommand{\backT}{\mathfrak{B}}
\newcommand{\treeT}{\mathfrak{T}}
\begin{document}

\title[{Random Deterministic Automata With One Added Transition}]{{Random Deterministic Automata}\texorpdfstring{\\}{} With One Added Transition}
\titlecomment{This article is the extended and revised version of the article~\cite{STACS23} published in the proceedings of the conference STACS'23 }
\thanks{}	

\author[A.~Carayol]{Arnaud Carayol\lmcsorcid{0009-0008-2763-2821}}[a]
\author[P.~Duchon]{Philippe Duchon\lmcsorcid{0009-0009-4735-0781}}[b]
\author[F.~Koechlin]{Florent Koechlin\lmcsorcid{0000-0002-5576-4847}}[c]
\author[C.~Nicaud]{Cyril Nicaud\lmcsorcid{0000-0002-8770-0119}}[a]

\address{Univ Gustave Eiffel, CNRS UMR 8049, LIGM, F-77454 Marne-la-Vallée, France}	
\email{arnaud.carayol@univ-eiffel.fr, cyril.nicaud@univ-eiffel.fr}  

\address{Univ. Bordeaux,  CNRS UMR 5800, LaBRI, F-33400 Talence, France}	
\email{duchon@labri.fr}  

\address{Univ. Sorbonne Paris Nord, LIPN, CNRS UMR 7030, F-93430 Villetaneuse, France}	
\email{koechlin@lipn.fr}




\begin{abstract}
Every language recognized by a non-deterministic finite automaton can be recognized by a deterministic automaton, at the cost of a potential increase of the number of states, which in the worst case can go from $n$ states to $2^n$ states. 
In this article, we investigate this classical result in a probabilistic setting where we take a deterministic automaton with $n$ states uniformly at random and add just one random transition. These automata are almost deterministic in the sense that only one state has a non-deterministic choice when reading an input letter. In our model, each state has a fixed probability to be final.
We prove that for any $d\geq 1$, with non-negligible probability the minimal (deterministic) automaton of the language recognized by such an automaton has more than $n^d$ states; as a byproduct, the expected size of its minimal automaton grows faster than any polynomial. Our result also holds when each state is final with some probability that depends on $n$, as long as it is not too close to $0$ and $1$, at distance at least $\Omega(\frac1{\sqrt{n}})$ to be precise, therefore allowing  models with a sublinear number of final states in expectation.
\end{abstract}

\maketitle

\section{Introduction}\label{S:one}

A fundamental result in automata theory is that deterministic complete finite state automata recognize the same languages as non-deterministic finite state automata. This result can be established using the classical (accessible) subset construction~{\cite{RabinS59,MF71,Hopcroft79}}: starting with 
a non-deterministic automaton with $n$ states, one can build a deterministic automaton with at most $2^n$ states that recognizes the same language. This upper bound is tight; there are regular languages recognized by an $n$-state non-deterministic automaton whose minimal automaton, i.e. the smallest deterministic and complete automaton that recognizes the language, has $2^n$ states. The number of states of the minimal automaton of a regular language is called its \emph{state complexity}.
Figure~\ref{fig:intro examples} shows two $n$-state non-deterministic automata with somewhat similar shape, {but} whose languages {${\mathcal{L}}_\ell$ and ${\mathcal{L}}_r$} have very different state complexities. In both automata, there is only one non-deterministic choice, at the initial state.

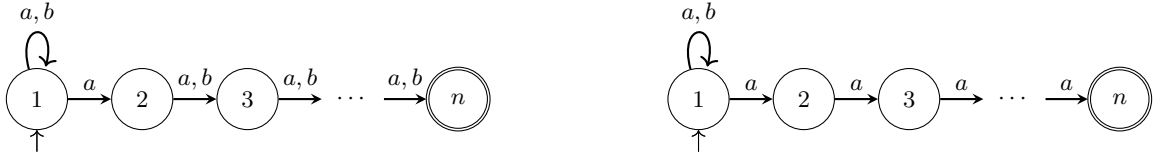
\begin{figure}[h]
{\footnotesize
    \begin{tikzpicture} [node distance = 1.4cm, on grid, scale=.7]
        \node (q1) [state,initial, initial below, initial text=] {$1$};
        \node (q2) [state, right of = q1] {$2$};
        \node (q3) [state, right of = q2] {$3$};
        \node (dots) [state,draw=none, right of = q3] {$\cdots$};
        \node (qn) [state, accepting, right of = dots] {$n$};

        \draw [-stealth, thick] (q1) edge [loop above] node[above] {$a,b$} (q1);
        \draw [-stealth, thick] (q1) edge node [above] {$a$} (q2);
        \draw [-stealth, thick] (q2) edge node [above] {$a,b$} (q3);
        \draw [-stealth, thick] (q3) edge node [above] {$a,b$} (dots);
        \draw [-stealth, thick] (dots) edge node [above] {$a,b$} (qn);
    \end{tikzpicture}
    \hfill
    \begin{tikzpicture} [node distance = 1.4cm, on grid, scale=.7]
        \node (q1) [state,initial, initial below, initial text=] {$1$};
        \node (q2) [state, right of = q1] {$2$};
        \node (q3) [state, right of = q2] {$3$};
        \node (dots) [state,draw=none, right of = q3] {$\cdots$};
        \node (qn) [state, accepting, right of = dots] {$n$};

        \draw [-stealth, thick] (q1) edge [loop above] node[above] {$a,b$} (q1);
        \draw [-stealth, thick] (q1) edge node [above] {$a$} (q2);
        \draw [-stealth, thick] (q2) edge node [above] {$a$} (q3);
        \draw [-stealth, thick] (q3) edge node [above] {$a$} (dots);
        \draw [-stealth, thick] (dots) edge node [above] {$a$} (qn);
    \end{tikzpicture}
    \caption{ \label{fig:intro examples} On the left, a non-deterministic automaton with $n$ states recognizing the language ${\mathcal{L}}_\ell = \Sigma^*a\Sigma^{n-2}$. On the right, a non-deterministic automaton with $n$ states recognizing the  language ${\mathcal{L}}_r = \Sigma^*a^{n-1}$. The minimal automaton of ${\mathcal{L}}_\ell$ has $2^{n-1}$ states, whereas the one of ${\mathcal{L}}_r$ has $n$ states.
    }
}
\end{figure}

In this article, we address the  following question: if we take a random $n$-state deterministic automaton and add just one random transition, what can be said about the state complexity of the resulting recognized language? Does it hugely increase as for $\mathcal{L}_\ell$, or does it remain small as for $\mathcal{L}_r$?

From~\cite{sportiello12}, we know that with high probability, the state complexity of the language recognized by a $n$-state deterministic automaton taken uniformly at random is linear. This is important as it implies that the corresponding distribution on regular languages is not degenerated: this contrasts with the case of random regular expressions where the expected state complexity of the described regular languages is constant~\cite{expressions21}, which means that the induced distribution on regular languages is concentrated on a finite number of languages.

To be more precise, our formal setting in this article is the following.  Let $\Sigma=\{a,b,\ldots\}$ be a finite alphabet with $k\geq 2$ letters. For any $n\geq 1$, we consider the uniform distribution on deterministic and complete automata on $\Sigma$, with $\{1,\ldots, n\}$ as their set of states; the initial state is picked uniformly at random, and the action of the letters on the set of states are $k$ uniform and independent random mappings. We also pick uniformly at random and independently two states $p$ and $q$, and add a transition $p\xrightarrow{a}q$, if it is not already there. Finally each state is final with a given fixed probability  $f\in(0,1)$, independently. 
Hence in this model, the expected number of final states of an almost deterministic automaton is $f\times n$. Our results still hold if we allow the probability $f$ of being final to depend on the size $n$ of the automaton, provided that  $f_n$ has a distance to 0 and 1 in $\Omega(\frac{1}{\sqrt{n}})$. This allows us to consider probabilistic models in which random automata have an expected number of final states that is as low as $\Theta(\sqrt{n})$.

Our main result is that for any $d\geq 1$ there exists a constant $c_d>0$ such that the state complexity of the language of such a random almost deterministic automaton is greater than $n^d$ with probability at least $c_d$, for $n$ sufficiently large. That is, for any polynomial $P$, there is a non-negligible probability that the state complexity of the language of a random automaton is greater than $P(n)$: we will say that the state complexity is \emph{super-polynomial} with \emph{visible probability}. As a direct consequence, the expected state complexity is super-polynomial.

It should be noted that with the same random models for deterministic automata, one cannot hope to replace visible probability in our results with a probability that converges to~$1$ (i.e., with high probability). Indeed random automata have, with high probability, a constant fraction of states that are not accessible from the initial state \cite{grusho73}; if the source of the added transition is not accessible from the initial state, the added transition does not impact the recognized language, whose state complexity is therefore at most equal to $n$. Thus, we make no effort in the present article to optimize our probabilistic lower bounds. See the conclusion for a more advanced discussion on this topic.

\noindent\textbf{Related work.} {A subclass of 
almost deterministic automata has already been considered in~\cite{BjorklundM08} to study the worst-case complexity class for the non-deterministic minimization problem for NFAs.}

The study of random deterministic automata can be traced back to the work of Grusho on the size of the accessible part~\cite{grusho73}: he established that, with high probability, a constant proportion of the states are accessible from the initial state. 
He also shows that with high probability there is a unique terminal strongly connected component\footnote{{A strongly connected component (SCC) is a set of vertices, maximal for inclusion, in which every vertex is reachable for each other vertex. A SCC is terminal if no edge leave the SCC.}} (SCC) of size approximately $\nu_k n$, for some $\nu_k>\frac12$ that only depends on the size $k$ of the alphabet. More structural results on the underlying graph of a random deterministic automaton were established in the work of Carayol and Nicaud~\cite{CarayolN12}, with a local limit law for the size of the accessible part and an application to random generation of accessible determistic automata. 
More recently, {Cai and Devroye gave in \cite{devroye17},} a fine-grained analysis of what is happening outside the large strongly connected component. In~\cite{addario2020},  Addario-Berry,  Balle and  Perarnau gave a precise analysis of the diameter of a random deterministic automaton, showing in particular that it is logarithmic. 
We will use some of these results in this article, for instance that there is a unique largest terminal strongly connected component with high probability. To deal with the restriction to states accessible from the initial state in the powerset construction, we also use  a result of~\cite{devroye17} which yields that, with high probability, there are no cycles of length $\Omega(\log n)$ outside this terminal strongly connected component.

All these results on random automata focus on the underlying graph of the transition structures, without saying much about the recognized languages, or on the average complexity of textbook algorithms on automata. Some results were established in this direction: the probability that a random accessible automaton is minimal was studied by Bassino, David and Sportiello~\cite{sportiello12}, the analysis of minimization algorithms by Bassino, David and Nicaud~\cite{Moore12,David12}, etc.  

In another direction, more recently, several articles studied the synchronization of random automata~\cite{berlinkov16,synchro19}, until the very recent work of Chapuy and Perarnau~{\cite{ChapuyP23}}, establishing that most deterministic automata are synchronizing, with a word of length $O(\sqrt{n}\log n)$ and Martinsson~\cite{martinsson23} established that for any $\epsilon>0$, there exists a synchronizing word of length $O(\epsilon^{-1}\sqrt{n}\log n)$ with probability at least $1-\epsilon$. 
We refer the interested reader to the survey of Nicaud~\cite{survey14} for an overview  on random deterministic automata.

To our knowledge, there is no well-established random model for non-deterministic automata. It is not an easy task to obtain a satisfactory model: for instance, the uniform distribution is degenerated and produces languages with state complexity one or two with high probability (see the discussion in Section~\ref{sec:dense models}). 
Applying the powerset construction to the mirror of a random deterministic automaton was studied by
De Felice and Nicaud~\cite{sven13,sven16}, in order to analyze the average case  complexity of  Brzozowski's state minimization algorithm. As in the present article, they studied the determinization procedure of random automata, but for a model that is very different from ours: they consider the mirror of a uniform random deterministic automaton, obtained by reversing the transitions and swapping the initial and final states. In particular, with high probability, there is a linear number of states having a non-deterministic choice in their setting. Another natural model would be to use a critical Erd\H{o}s-Rényi~\cite{ER60} digraph for each letter, which would also result in a linear number of states having a non-deterministic choice. In this article, we choose a random model with the minimum amount of non-determinism by adding just one transition to a uniform deterministic automaton, and establish that we likely have a combinatorial explosion already in this case.

This article is the full version of the extended abstract published in the proceedings of the STACS conference~\cite{STACS23}. It contains all the omitted proofs. We also introduce the notion of templates in Section~\ref{sec:template} to simply handle computation under natural conditional properties on automata, and reshape completely the most technical part, Section~\ref{sec:backward}, to exhibit an approximation by classical Galton-Watson processes. This last approach is interesting on its own for future works on random automata.

\section{Definitions and notations}\label{sec:definitions}

The cardinality of a finite set $E$ is denoted by $|E|$.
For any $n \geq 1$, let $[n]=\{1,\ldots,n\}$. If $x,y\in\mathbb{R}$ with $x\leq y$, let $\intinter{x}{y}=[x,y]\cap\mathbb{Z}$ be the set of integers that are between $x$ and $y$.
Let $\mathcal{E}$ be a set equipped with a size function $s$ from $\mathcal{E}$ to $\mathbb{Z}_{\geq 0}$, and let ${\mathcal{E}}_{n}$ denote the elements of $\mathcal{E}$ having size $n$. A property $X$ on $\mathcal{E}$ (that is, a subset of $\mathcal{E}$ viewed as the set of elements for which the property holds) holds with \emph{visible probability} if there exists some constant $c>0$ such that, for $n$ sufficiently large, ${\mathcal{E}}_n$ is non-empty and $\mathbb{P}(X)\geq c$ for the uniform distribution on ${\mathcal{E}}_n$. By a slight abuse of notation, if $X$ is a random variable ${\mathcal{E}}\rightarrow \mathbb{Z}_{\geq 0}$ we say that for the uniform distribution on $\mathcal{E}$, $X$ is \emph{super-polynomial with visible probability} when for any $d\geq 1$, there exists a constant $c_d>0$, such that for $n$ sufficiently large, ${\mathcal{E}}_n\neq\emptyset$ and $\mathbb{P}(X\geq n^d)\geq c_d$.

For any real {number} $\lambda>0$, we denote by $\mathrm{Poi}(\lambda)$ the Poisson random variable of parameter $\lambda$, whose support is $\mathbb{Z}_{\geq 0}$ and defined by 
$\proba(\mathrm{Poi}(\lambda)=k)={\frac{\lambda^k}{k!}e^{-\lambda}}$, for all $k\in\mathbb{Z}_{\geq 0}$.

Recall that if $u$ and $v$ are two words on an ordered alphabet $\Sigma$, $u$ is \emph{smaller than $v$ for the length-lexicographic order} if $|u|<|v|$ or they have same length and $u<_{\text{lex}} v$ for the lexicographic order.

Throughout the article, the set of states of an automaton with $n$ states  will always be $[n]$, with the exception of the powerset construction recalled just below. The alphabet will always be $\Sigma=\{a,b\}$, except in the statement of our main theorem, where we allow larger alphabets as it is trivially generalized to this case. Hence, in our setting, a \emph{deterministic (and complete) automaton} is just a tuple $(n,\delta,F)$, where $F\subseteq[n]$ is the \emph{set of final states} and $\delta$ is the \emph{transition function}, a mapping from $[n]\times\Sigma$ to $[n]$. 
We will often write  $s\xrightarrow{\alpha} t$ instead of $\delta(s,\alpha)=t$, for $s,t\in[n]$ and $\alpha\in\Sigma$, and call this an $\alpha$-transition or a transition. The transition function is classically extended to sets of states by setting $\delta(X,\alpha) = \{\delta(s,\alpha): s\in X\}$, for $X\subseteq [n]$, and to words by setting inductively $\delta(s,w) = s$ if $w$ is the empty word $\varepsilon$ and $\delta(s,w\alpha) = \delta(\delta(s,w),\alpha)$. We will not need to specify the \emph{initial state} {until the end of the proof of Theorem~\ref{th:main}}; when we finally do, it will be generated uniformly at random and independently in $[n]$. 
Final states are only used in the last part of our proof, so to ease the presentation, we define a \emph{deterministic (and complete) transition structure} as being an automaton with neither initial nor final states: it is given by a pair $(n,\delta)$ where $n$ is the number of states and $\delta$ is the transition function. 

An \emph{almost deterministic automaton} $(n,\delta,F,p\xrightarrow{a} q)$ is a deterministic automaton $(n,\delta,F)$ in which we add the additional $a$-transition $p\xrightarrow{a} q$. Similarly, an \emph{almost deterministic transition structure} $(n,\delta,p\xrightarrow{a} q)$ is a deterministic transition structure $(n,\delta)$ in which we add the additional $a$-transition $p\xrightarrow{a} q$. For any $\alpha\in\Sigma$ and any $r\in[n]$, the transition function $\gamma$ of an almost deterministic automaton $(n,\delta,F,{p\xrightarrow{a} q})$ (or almost deterministic transition structure)  is therefore defined by $\gamma(r,\alpha)=\{\delta(r,\alpha)\}$ if $(r,\alpha)\neq (p,a)$ and 
$\gamma(p,a)=\{\delta(p,a), q\}$.
These automata or transition structures can be deterministic, in the case where $\delta(p,a)=q$.

The  \emph{powerset automaton} $\B$ of an almost deterministic automaton $\A=(n,\delta,F,p\xrightarrow{a} q)$, with a transition function $\gamma$, is a deterministic automaton $\B$ with states in $2^{[n]}$ and transition function $\gamma$ extended to sets, as defined above. If we add an initial state $i_0$ to $\A$, the initial state of $\B$ is $\{i_0\}$ and {$\B$} recognizes the same language as $\A$ when a state $X$ of $\B$ is final if and only if at least one of its element is final in $\A$. 
This construction can be restricted to the states of $\B$ that are accessible from the initial state $\{i_0\}$ while still recognizing the same language; we call this automaton the  \emph{accessible powerset automaton} of~$\A$. 

Recall that two states $r$ and $s$ in a deterministic automaton $\A$ are \emph{equivalent} if the languages recognized by moving the initial state to $r$ or to $s$ are equal. The \emph{minimal automaton} of a regular language $\mathcal{L}$ is the deterministic complete automaton with 
the smallest number of states that recognizes $\mathcal{L}$. The number of states of the minimal automaton of $\mathcal{L}$ is called the \emph{state complexity} of $\mathcal{L}$. We will use the following classical property, {which follows from the fact that the minimal DFA is obtained by merging equivalent states in any DFA accepting the language~\cite[Theorem~3.11]{Hopcroft79}}:
 \begin{prop}
 If there is a set of accessible states $X$ in a deterministic automaton $\A$ such that the states of $X$ are pairwise non-equivalent, then $\A$ has state complexity at least $|X|$. 
 \end{prop}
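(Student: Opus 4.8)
The plan is to deduce the statement directly from the quoted consequence of \cite[Theorem~3.11]{Hopcroft79}: the minimal automaton $\mathcal{M}$ of the language $\mathcal{L}$ recognized by $\A$ is obtained from the accessible part of $\A$ by merging states that are equivalent. Concretely, write $\sim$ for the equivalence relation on the accessible states of $\A$, and let $\pi$ be the map sending an accessible state $r$ to its $\sim$-class; then the states of $\mathcal{M}$ are exactly the classes $\pi(r)$, so $\pi$ is a surjection from the set of accessible states of $\A$ onto the state set of $\mathcal{M}$. In particular $|\mathcal{M}|$, the state complexity of $\mathcal{L}$, equals the number of $\sim$-classes of accessible states.

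With this in hand the argument is a one-line pigeonhole. First I would restrict $\pi$ to $X$, which makes sense because every state of $X$ is accessible by hypothesis. Since the states of $X$ are pairwise non-equivalent, $\pi$ is injective on $X$, so $\pi(X)$ has exactly $|X|$ elements. As $\pi(X)$ is a subset of the state set of $\mathcal{M}$, we get $|\mathcal{M}| \geq |\pi(X)| = |X|$, which is the claimed bound on the state complexity.

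If one prefers not to quote the merging construction, the same conclusion follows from the Myhill--Nerode point of view: for an accessible state $r$, choosing any word $u$ with $\delta(i_0,u)=r$ gives $L_r = u^{-1}\mathcal{L}$, where $L_r$ is the language recognized from $r$; two accessible states are equivalent iff they yield the same residual language, and the states of $\mathcal{M}$ are in bijection with the set of residual languages $\{u^{-1}\mathcal{L} : u\in\Sigma^*\} = \{L_r : r \text{ accessible}\}$. Non-equivalence of the states of $X$ then says that $r\mapsto L_r$ takes $|X|$ distinct values on $X$, whence at least $|X|$ states in $\mathcal{M}$.

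There is no genuine obstacle here; the statement is a standard structural fact about minimal automata, and the proof is just the injectivity-of-a-restricted-surjection observation above. The only points that need a moment of care are bookkeeping ones: $\A$ must be viewed with its initial state fixed, so that ``accessible'' and ``equivalent'' are meaningful and ``state complexity'' refers to an actual language; and one must merge equivalent states within the accessible part of $\A$, since inaccessible states play no role in the minimal automaton.
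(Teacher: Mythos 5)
Your proof is correct and matches the paper's approach: the paper gives no explicit proof, merely noting that the claim follows from the fact that the minimal DFA is obtained by merging equivalent states in the accessible part, and your argument is precisely the spelled-out version of that observation (the quotient map is injective on $X$, so the minimal automaton has at least $|X|$ states).
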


\section{Main statement and proof outline}\label{sec:main}

Our main result is that the state complexity of the language recognized by a random almost deterministic automaton is super-polynomial with visible probability, when for each $n$, each state is final, independently, with some probability $f_n$ that is not too close to either $0$ or $1$, as precised in the statement:
\begin{thm}\label{th:main}
Let $\Sigma$ be an alphabet with at least two letters. Let $f_n$ be a map from $\mathbb{Z}_{\geq 1}$ to $(0,1)$ such that there exists a constant $\alpha>0$ such that $f_n \geq \frac{\alpha}{\sqrt{n}}$ and $1-f_n \geq  \frac{\alpha}{\sqrt{n}}$ {for $n$ sufficiently large}.
Consider an almost deterministic $n$-state transition structure $\A$ on $\Sigma$ taken uniformly at random. Each state of $\A$ is then taken to be final with probability $f_n$, independently of everything else. Then, with visible probability, the language recognized by $\A$ has super-polynomial state complexity.
\end{thm}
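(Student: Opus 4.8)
The plan is to use the classical property recalled in Section~\ref{sec:definitions}: fixing an arbitrary $d\geq 1$, it suffices to prove that, with visible probability, the accessible powerset automaton of $\mathcal{A}$ (after an initial state is added) has more than $n^d$ pairwise non-equivalent accessible states. Since the final states and the initial state play no role until the very end, I would first work with the underlying almost deterministic transition structure, the template framework of Section~\ref{sec:template} being used to justify computing under the conditionings introduced below.

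First I would condition on a collection of structural events. By the results on random deterministic transition structures of Grusho~\cite{grusho73}, Carayol and Nicaud~\cite{CarayolN12}, Cai and Devroye~\cite{devroye17}, and Addario-Berry, Balle and Perarnau~\cite{addario2020}, with high probability a uniform random transition structure has a unique largest terminal strongly connected component $\mathcal{C}$ of size $\sim\nu_k n$, a linear-size accessible part, logarithmic diameter, and no cycle of length $\Omega(\log n)$ outside $\mathcal{C}$; moreover the source $p$ of the added transition lies in $\mathcal{C}$ with probability tending to $\nu_k>0$, and $\delta(p,a)\neq q$ with probability $1-O(1/n)$, so that the added transition is genuinely non-deterministic. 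The only step where we lose a constant factor (and hence only obtain visible, not high, probability, in accordance with the discussion in the introduction) is the event $p\in\mathcal{C}$. Conditioned on these events, $\mathcal{C}$ being the unique terminal SCC every state reaches $\mathcal{C}$ and, $\mathcal{C}$ being strongly connected, reaches any prescribed singleton of $\mathcal{C}$; in particular this will hold for the initial state, whatever it turns out to be.

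The heart of the proof --- and, I expect, the main obstacle --- is the purely combinatorial statement that, conditionally on the above and with visible probability, one can read words from a singleton of $\mathcal{C}$ so as to reach more than $n^d$ distinct subsets of $\mathcal{C}$, of size $O(\log n)$. The mechanism is that whenever a reachable set contains $p$, reading $a$ splits off an extra branch through $q$, which spreads inside $\mathcal{C}$ and may reach $p$ again; the number of ``live lineages'' therefore behaves like a supercritical branching process. I would make this precise by a backward exploration, as in Section~\ref{sec:backward}: starting from target states of $\mathcal{C}$ and prepending letters, the corresponding sets of predecessors grow, at each step, by a factor governed by the in-degrees of a random transition structure (asymptotically distributed as $\mathrm{Poi}(1)$ per letter), with the extra transition contributing one additional independent branch; the process thus dominates a supercritical Galton--Watson process, which conditioned on survival (an event of visible probability) reaches linear size, and the fluctuations of its many surviving lineages are what produce the required abundance of distinct configurations. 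The technical difficulties are to control the coalescence of lineages, and to stay within the states that remain reachable, which is where the absence of long cycles outside $\mathcal{C}$ is used.

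Finally I would reinstate the randomness of the final states and of the initial state. Using $f_n\geq\alpha/\sqrt n$ and $1-f_n\geq\alpha/\sqrt n$, a Chernoff bound gives that, except on an event of probability $e^{-\Omega(\sqrt n)}$, any fixed linear-size part of $\mathcal{C}$ carries $\Theta(f_n n)=\Omega(\sqrt n)$ final states and as many non-final ones; this abundance is the sole role of the $\Omega(1/\sqrt n)$ hypothesis. On this event, I would argue that the more than $n^d$ reachable sets obtained above lie in more than $n^d$ distinct equivalence classes: any two distinct such sets admit a separating word --- roughly, synchronise each of the two ($O(\log n)$-sized) sets to a state of $\mathcal{C}$, the two states being distinct, then navigate within $\mathcal{C}$ to reach a final state from one but not the other --- and the many available final states make such a word fail to exist only with a probability that is super-polynomially small in $n$, so small that a union bound over the at most $n^{2d}$ pairs leaves the conclusion with high probability. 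Since the initial state reaches, inside $\mathcal{C}$, the singleton the construction starts from, all these sets are accessible. Intersecting the visible-probability event $\{p\in\mathcal{C}\}$ with the high-probability structural, concentration and separation events then yields, with visible probability, more than $n^d$ accessible pairwise non-equivalent states, hence state complexity more than $n^d$. As $d$ is arbitrary, the state complexity is super-polynomial with visible probability, and in particular its expectation grows faster than any polynomial.
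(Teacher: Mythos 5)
Your skeleton --- condition on structural events, exhibit more than $n^d$ distinct accessible subsets, then use the randomness of the final states to separate them --- matches the paper's at the coarsest level, but the central step is not actually proved, and the mechanism you sketch for it would not work as described. Saying that the number of ``live lineages'' behaves like a supercritical branching process whose surviving lineages ``produce the required abundance of distinct configurations'' restates the goal rather than establishing it: the only non-deterministic state is $p$, so a reachable subset grows by at most one element each time the current set contains $p$ and the letter $a$ is read, and nothing in a survival-of-a-GW-process argument controls the number of \emph{distinct} subsets obtained, nor reconciles your simultaneous claims that the process reaches linear size and that the subsets have size $O(\log n)$. The paper's actual mechanism is concrete and quite different: a backward exploration from $p$ to depth $h=\lceil\log_2\sqrt n\,\rceil$ producing $\Theta(\sqrt n)$ states at depth $h$, a forward (birthday-paradox) search from $\delta(p,a)$ that hits that layer and yields a cycle $aw$ around $p$ of length $\Theta(\log n)$; then $\{p\}\cdot(aw)^d=\{p_0,\dots,p_d\}$ for $d+1$ distinct states, from each of which a disjoint $b$-thread is grown ending in a $b$-cycle whose length is uniform in $\intinter{\sqrt n}{2\sqrt n}$; the $d+1$ lengths are pairwise coprime with visible probability (T\'oth), so the powerset automaton contains a single $b$-cycle of length $\prod_i\ell_i=\Omega(n^{(d+1)/2})$. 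This coprimality/lcm device is the key idea your proposal is missing.

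The separation step also has a gap. You propose that each pair of distinct reachable subsets fails to be separated with super-polynomially small probability and then take a union bound over $n^{2d}$ pairs; but the separation events for different pairs are heavily correlated through the same random set $F$, and there is no reason a fixed pair is separated with probability $1-o(n^{-2d})$ --- the paper only obtains a \emph{constant} success probability even for its single long $b$-cycle. It sidesteps the union bound entirely: it labels each small $b$-cycle by the binary word recording which of its states are final, shows this word is primitive with visible probability (this is where $f_n,1-f_n=\Omega(1/\sqrt n)$ enters, via a refinement of the fact that random words are primitive), proves that the coordinatewise ``or'' of primitive words of coprime lengths is primitive, and concludes that all $\mathrm{lcm}(\ell_0,\dots,\ell_d)$ states of the long $b$-cycle are pairwise non-equivalent by a Fine--Wilf-type argument. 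Your Chernoff-plus-union-bound plan would need to be replaced by something of this kind to close the argument.
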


First, we observe that if $\Gamma\subseteq \Sigma$ are two non-empty alphabets and
 if $\mathcal{L}$ is a regular language on $\Sigma$, then the state complexity of $\mathcal{L}$ is at least the state complexity of ${\mathcal{L}}\cap \Gamma^*$. As a consequence, it is sufficient to establish Theorem~\ref{th:main} for a two-letter alphabet, and from now on, we fix $\Sigma=\{a,b\}$.

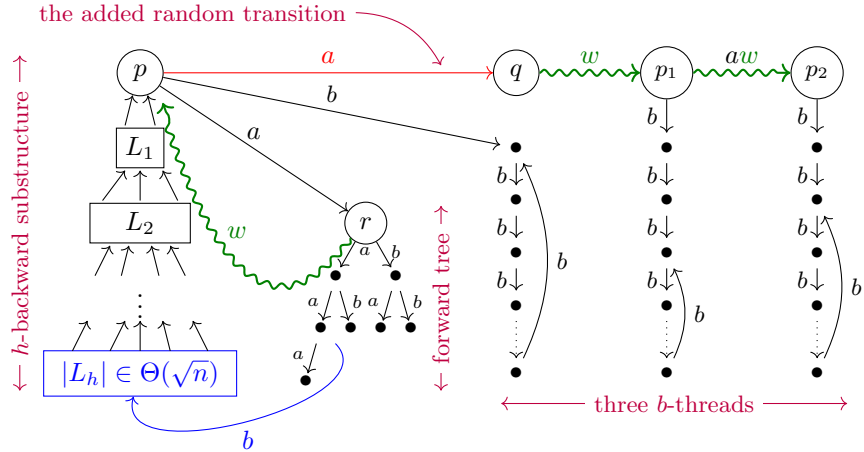
\begin{figure}[t]
    \centering
    {\small
\begin{tikzpicture}
	\node[draw, circle] (s) at (0,0) {$p$};
	
	\node[draw] (b1) at (0,-1) {$L_1$};
	\node[draw] (b2) at (0,-2) {$\quad L_2\quad$};
	\node (dots1) at (0,-3) {$\vdots$};
	\node[draw,blue] (btau) at (0,-4) {$\ |L_h|\in\Theta(\sqrt{n})\ $};

	\draw[->] (.2,-0.7) -- (.1,-0.3);
	\draw[->] (-.2,-0.7) -- (-.1,-0.3);
	
	\draw[->] (0,-1.7) -- (0,-1.3);
	\draw[->] (-.5,-1.7) -- (-.2,-1.3);
	\draw[->] (.5,-1.7) -- (.3,-1.3);

	\draw[->] (-.25,-2.7) -- (-.15,-2.3);
	\draw[->] (.25,-2.7) -- (.15,-2.3);
	\draw[->] (-.6,-2.7) -- (-.4,-2.3);
	\draw[->] (.6,-2.7) -- (.4,-2.3);
	
	\draw[->] (0,-3.7) -- (0,-3.3);
	\draw[->] (-.4,-3.7) -- (-.2,-3.3);
	\draw[->] (.4,-3.7) -- (.3,-3.3);
	\draw[->] (-.9,-3.7) -- (-.7,-3.3);
	\draw[->] (.9,-3.7) -- (.7,-3.3);

    	\draw[thin,<->,purple] (-1.6,-4.23) -- node[rotate=90,fill=white]{\footnotesize $h$-backward substructure} (-1.6,0.23);

	\node[draw,circle] (r) at (3,-2) {$r$};
	\draw[->] (s) -- node[above]{$a$} (r);
	\node(ra) at (2.6, -2.7) {$\bullet$}; \draw[->,shorten >=-3pt] (r) -- node[right]{\tiny $a$} (ra);
	\node (rb) at (3.4, -2.7) {$\bullet$}; \draw[->,shorten >=-3pt] (r) -- node[right]{\tiny $b$} (rb);
	\node (raa) at (2.4, -3.4) {$\bullet$}; \draw[->,shorten >=-3pt] (ra) -- node[left]{\tiny $a$} (raa);
	\node (rab) at (2.8, -3.4) {$\bullet$}; \draw[->,shorten >=-3pt] (ra) -- node[right]{\tiny $b$} (rab);
	\node (rba) at (3.2, -3.4) {$\bullet$}; \draw[->,shorten >=-3pt] (rb) -- node[left]{\tiny $a$} (rba);
	\node (rbb) at (3.6, -3.4) {$\bullet$}; \draw[->,shorten >=-3pt] (rb) -- node[right]{\tiny $b$} (rbb);
	\node (raaa) at (2.2, -4.1) {$\bullet$}; \draw[->,shorten >=-3pt] (raa) -- node[left]{\tiny $a$} (raaa);
	\draw[blue,->]  (raa) edge[bend left=120] node[below]{$b$} (btau);

	\draw[thin,<->,purple] (4,-4.23) -- node[rotate=90,fill=white]{\footnotesize forward tree} (4,-1.77);

	\node[draw,circle] (t) at (5,0) {$q$};
	\draw[->,red] (s) -- node[above]{$a$} (t);

	\node[draw,circle] (s1) at (7,0) {\footnotesize $p_1$};
	\draw[->,thick,green!50!black,decorate,decoration={snake,amplitude=.4mm,segment length=2mm,post length=1mm}] (t) -- node[above]{$w$} (s1);

	\node[draw,circle] (s2) at (9,0) {\footnotesize $p_2$};
	\draw[->,thick, green!50!black,decorate,decoration={snake,amplitude=.4mm,segment length=2mm,post length=1mm}] (s1) -- node[above]{${\color{black}a}w$} (s2);

	\draw[->,green!50!black, thick,decorate,decoration={snake,amplitude=.4mm,segment length=2mm,post length=1mm}] (2.8,-2.2) .. controls (1.3,-4.2) and  (.4,-.6) .. node[right]{$w$} (.3,-.4);
	
	\node (p01) at (5,-1) {$\bullet$}; \draw[->] (s) -- node[above]{\footnotesize $b$} (p01);
	\node (p02) at (5,-1.7) {$\bullet$}; \draw[->] (p01) -- node[left]{\footnotesize $b$} (p02);
	\node (p03) at (5,-2.4) {$\bullet$}; \draw[->] (p02) -- node[left]{\footnotesize $b$} (p03);
	\node (p04) at (5,-3.1) {$\bullet$}; \draw[->] (p03) -- node[left]{\footnotesize $b$} (p04);
	\node (p0n) at (5,-4) {$\bullet$}; \draw[->,dotted] (p04) -- (p0n);
	\draw[->] (p0n) edge[bend right=25] node[right]{\footnotesize $b$} (p01);

	\node (p11) at (7,-1) {$\bullet$}; \draw[->] (s1) -- node[left]{\footnotesize $b$} (p11);
	\node (p12) at (7,-1.7) {$\bullet$}; \draw[->] (p11) -- node[left]{\footnotesize $b$} (p12);
	\node (p13) at (7,-2.4) {$\bullet$}; \draw[->] (p12) -- node[left]{\footnotesize $b$} (p13);
	\node (p14) at (7,-3.1) {$\bullet$}; \draw[->] (p13) -- node[left]{\footnotesize $b$} (p14);
	\node (p1n) at (7,-4) {$\bullet$}; \draw[->,dotted] (p14) -- (p1n);
	\draw[->] (p1n) edge[bend right=25] node[right]{\footnotesize $b$} (p13);

	\node (p21) at (9,-1) {$\bullet$}; \draw[->] (s2) -- node[left]{\footnotesize $b$} (p21);
	\node (p22) at (9,-1.7) {$\bullet$}; \draw[->] (p21) -- node[left]{\footnotesize $b$} (p22);
	\node (p23) at (9,-2.4) {$\bullet$}; \draw[->] (p22) -- node[left]{\footnotesize $b$} (p23);
	\node (p24) at (9,-3.1) {$\bullet$}; \draw[->] (p23) -- node[left]{\footnotesize $b$} (p24);
	\node (p2n) at (9,-4) {$\bullet$}; \draw[->,dotted] (p24) -- (p2n);
	\draw[->] (p2n) edge[bend right=25] node[right]{\footnotesize $b$} (p22);

	\draw[thin,<->,purple] (4.8,-4.4) -- node[fill=white]{\footnotesize three $b$-threads} (9.4,-4.4);

    \node[purple] (text) at (.7,.8) {\footnotesize the added random transition};
    \draw[purple, thin] (text.east) edge[->,bend left] (4,.1);

	\end{tikzpicture}
}
    \caption{Illustration of the proof sketch of Theorem~\ref{th:main}. On the left, the $h$-backward substructure from~$p$ that is detailed in Section~\ref{sec:backward}. It has size $\Theta(\sqrt{n})$ and contains  $\Theta(\sqrt{n})$ extremal leaves (i.e. leaves in its last level $h$) to be valid. In the center, the forward tree from $r$, described in Section~\ref{sec:forward}; it is a breadth-first traversal that is valid if it  hits an extremal leaf of the backward substructure before $\sqrt{n}$ states are examined. On the right the $b$-threads introduced in Section~\ref{sec:b-cycles}, obtained by reading $b$'s from the $p_i$'s; they are valid if they are made of previously unseen states and are pairwise disjoint until they cycle back on themselves, forming a $b$-cycle of length in $\intinter{\sqrt{n}}{2\sqrt{n}\,}$. \label{fig:proof sketch}}
\end{figure}

The proof of Theorem~\ref{th:main} consists in identifying a structure and several constraints (see Figure \ref{fig:proof sketch}) that guarantee that when performing the accessible powerset construction and adding a random set of final states, we have sufficiently many pairwise non-equivalent states. At each step, we add a new constraint on top of those we already have, and we have to ensure that these constraints are still satisfied by sufficiently many almost deterministic transition structures. 
A convenient way to sketch the proof is to consider that we start with $n$ states and no transition, and add random transitions when needed, on the fly. More precisely, our proofs can be seen as the description of an algorithm that tries to expose the required structure by performing two types of queries on the set of still unknown transitions: either we ask what the destination of a given transition is, or we ask for all the transitions that have a given state as their destination. Thus, at any point in the algorithm, conditioned on the results of all previous queries, the destinations of all still {unexplored} transitions are independent and uniform among the set of states for which we have not performed the second type of query. These  constraints and associated conditional probabilities are formalized in a unified way in Section~\ref{sec:template}, introducing and using the notion of \emph{templates}.

We use this to handle the computations and establish that our algorithm has a non-negligible probability of success. We also have two random states $p$ and $q$ and the transition $p\xrightarrow{a}q$ will be added at some point. 
 {Let $d \geq 1$. We describe} the  main steps of the proof below. In this high level description, recall that $\delta$ refers to the transition function of the deterministic base of the almost deterministic automaton being generated. 

\begin{enumerate}
    \item Generate $r=\delta(p,a)$, the target of the $a$-transition starting from $p$ in the deterministic transition structure. With visible probability, $r\neq q$ and there is a word $w$ of length $\Theta(\log n)$ such that $\delta(r,w)=p$, which can be found by generating $O(\sqrt{n})$ random transitions. We also assume that the $b$-transition starting at $p$ is still unset. This step is the most technical, we explore backward from $p$ and forward from $r$ until we reach a common state.
    \item Assuming such a $w$ is found, 
    we  iteratively generate the transitions starting from $q$ and following the word
    $w(aw)^{d-1}$, and  {ask for} the target of each such transition to be a state that was not previously seen in the whole process. This happens with visible probability.
    \item Let $p_0=p$ and $p_i = \delta(q,w(aw)^{i-1})$ for $i\in [d]$. At this point, if we add the transition $p\xrightarrow{a}q$, and denote $\hat\delta$ the resulting non-deterministic transition function, we have $\hat\delta(\{p\},(aw)^d) = \{p_0,p_1,\ldots,p_d\}$, and the outgoing $b$-transitions of the $p_i$'s are still unset. Then, for each $p_i$, we iteratively generate the $b$-transitions $\delta(p_i,b)$, $\delta(p_i,bb)$, \ldots until we cycle after $\lambda_i$ steps. This process is considered successful if we do not use an already set $b$-transition and if  the $d+1$ cycles are pairwise disjoint. We furthermore {require that} the $\lambda_i$'s are all in $\intinter{2\sqrt{n}}{3\sqrt{n}\,}$. All these properties  happen with visible probability.
    \item 
   The set $\{p_0,\ldots, p_{d}\}$ is now composed of $d+1$ different states, and reading $b$'s from each $p_i$ eventually ends in a $b$-cycle of length $\ell_i$.
    Given the $\lambda_i$'s, each $\ell_i$ is a uniform element of $[\lambda_i]$, and they are independent.  Our precise requirements ensure that once met, the $\ell_i$'s are uniform and independent elements of $\intinter{\sqrt{n}}{2\sqrt{n}}$. We now {require that} the $\ell_i$'s are pairwise coprime, which also happens with visible probability.
    \item If everything worked so far, we can guarantee that $\{p\}$ is accessible in the subset construction, still with visible probability: we use the fact that with high probability, all cycles with length in $\Omega(\log n)$ are accessible in a random deterministic automaton~\cite{devroye17}. By construction the cycle around $p$ labeled $aw$ built at Step~(1) has length $\Theta(\log n)$, hence $p$ is accessible with high probability.
    \item Hence, at this stage, in the accessible powerset construction applied to the almost deterministic transition structure there is a $b$-cycle of length $\prod_{i=0}^d\ell_i = \Omega(n^{\frac{d+1}2})$. We now randomly determine which states are final. If we consider a $b$-cycle alone in the automaton, of length $\Omega(\sqrt{n})$, its states are pairwise non-equivalent with visible probability as soon as the probability $f_n$ that a state is final is not too close to either $0$ or $1$, which we assumed in our model. This property happens to be preserved when building the product automaton for the union of  one-letter cycles, provided their lengths are pairwise coprime. Consequently, the large $b$-cycles in the powerset construction are made of pairwise non-equivalent states with visible probability.

\end{enumerate}
The first steps of the proof sketch are depicted in Figure~\ref{fig:proof sketch}. More details and notations will be introduced in the next section. 

\bigskip
We also state the following consequence for the expected state complexity of a random almost deterministic automaton. Note this corollary is weaker than the result of Theorem~\ref{th:main}.
Indeed, as its state complexity could be up to $2^n$, a 
 negligible proportion of automata could still contribute sufficiently to have a super-polynomial expected state complexity. 
\begin{cor}\label{cor:main}
Under the conditions of Theorem~\ref{th:main}, the expected state complexity of the language recognized by $\A$ grows faster than any polynomial in $n$.
\end{cor}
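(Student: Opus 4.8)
The plan is to derive Corollary~\ref{cor:main} from Theorem~\ref{th:main} by a completely elementary argument: Theorem~\ref{th:main} already gives, for each fixed $d\geq 1$, a constant $c_d>0$ such that for $n$ large enough the state complexity $X_n$ of the language recognized by $\A$ satisfies $\proba(X_n\geq n^d)\geq c_d$. Since $X_n$ is a nonnegative random variable, I would simply bound $\mathbb{E}[X_n]\geq n^d\cdot\proba(X_n\geq n^d)\geq c_d\, n^d$ for all sufficiently large $n$. Hence $\mathbb{E}[X_n]=\Omega(n^d)$ for every $d$, which is exactly the assertion that $\mathbb{E}[X_n]$ grows faster than any polynomial.

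Concretely, the proof is two or three lines: fix an arbitrary polynomial $P$; choose $d$ with $\deg P < d$; apply Theorem~\ref{th:main} to get $c_d$ and $n_0$ such that $\proba(X_n\geq n^d)\geq c_d$ for $n\geq n_0$; then $\mathbb{E}[X_n]\geq c_d n^d$, and since $n^d/P(n)\to\infty$, for $n$ large enough $\mathbb{E}[X_n]\geq c_d n^d > P(n)$. One small point worth spelling out is that the expectation is well-defined and finite for each fixed $n$ — this is immediate because $X_n$ is bounded above by $2^n$ (the size of the full powerset automaton), so there is no integrability issue and Markov-type manipulations are valid. I would also remark explicitly, as the text preceding the corollary does, that this is genuinely weaker than Theorem~\ref{th:main}: a vanishing fraction of automata with state complexity close to $2^n$ could by itself force a super-polynomial expectation, so the corollary does not recover the visible-probability statement.

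I do not anticipate any real obstacle here — the only thing to be careful about is to state things in the right quantifier order (``for every $d$ there are $c_d, n_0$'' rather than uniform constants) so that the conclusion is phrased as ``grows faster than any polynomial'' and not as a single polynomial lower bound. The argument uses nothing beyond Theorem~\ref{th:main} and the trivial inequality $\mathbb{E}[X]\geq t\,\proba(X\geq t)$ for nonnegative $X$ and $t>0$.

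\begin{proof}[Proof of Corollary~\ref{cor:main}]
Fix any $d\geq 1$. By Theorem~\ref{th:main} there exist a constant $c_d>0$ and an integer $n_0$ such that, for all $n\geq n_0$, the set of almost deterministic $n$-state transition structures is non-empty and the state complexity $X_n$ of the language recognized by $\A$ satisfies $\proba(X_n\geq n^d)\geq c_d$. For each fixed $n$, $X_n$ is a nonnegative integer random variable bounded above by $2^n$, so $\mathbb{E}[X_n]$ is finite and the elementary inequality $\mathbb{E}[X_n]\geq n^d\,\proba(X_n\geq n^d)$ holds. Hence $\mathbb{E}[X_n]\geq c_d\, n^d$ for all $n\geq n_0$, that is $\mathbb{E}[X_n]=\Omega(n^d)$. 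Since $d\geq 1$ was arbitrary, for every polynomial $P$ we may pick $d>\deg P$, and then $\mathbb{E}[X_n]\geq c_d\,n^d$ eventually exceeds $P(n)$ because $n^d/P(n)\to\infty$. Therefore $\mathbb{E}[X_n]$ grows faster than any polynomial in $n$.
\end{proof}
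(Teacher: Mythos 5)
Your proof is correct and follows exactly the same route as the paper's: apply Theorem~\ref{th:main} to get $\proba(X_n\geq n^d)\geq c_d$ and then use the trivial bound $\mathbb{E}[X_n]\geq n^d\,\proba(X_n\geq n^d)\geq c_d\,n^d$ for every $d$. Your write-up is in fact slightly cleaner than the paper's (which contains a typo, stating $\proba(S\geq n^d)\geq c_d\,n^d$ instead of $\geq c_d$), and the extra remarks on integrability and quantifier order are harmless.
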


\begin{proof}
Let $S$ be the random variable that maps a random automaton to the state complexity of the language it recognizes. By Theorem~\ref{th:main}, for $n$ sufficiently large, $\mathbb{P}(S\geq n^d) \geq c_d\, n^d$ for some $c_d>0$. Hence, $\mathbb{E} [S] \geq n^d\ \mathbb{P}(S\geq n^d) \geq c_d\, n^d$, concluding the proof.
\end{proof}

\section{Templates}\label{sec:template}
In this section we introduce the notion of templates, which will be used to describe families of transition structures and will allow to handle the  computations of conditional probabilities in a unified way.

A \emph{template} $\T$ is a pair $(n,\lambda)$, where $\lambda$ is a function from $[n]\times\Sigma$ to $2^{[n]}$, the subsets of $[n]$, such that for every $(x,\alpha)\in[n]\times\Sigma$, $\lambda(x,\alpha)$ is non-empty. 

From their definition, templates could be identified as non-deterministic and complete transition structures, but we only use them to encode restrictions. For $\alpha \in \Sigma$ and $x \in [n]$, $\lambda(x,\alpha)$ is seen as the set of allowed targets for the $\alpha$-transition outgoing from $x$.

We say that a transition structure $\A=(n,\delta)$ \emph{satisfies} a template $\T=(n,\lambda)$, denoted by $\A{\satisfies}\T$, if for all $(x,\alpha)\in[n]\times\Sigma$, $\delta(x,\alpha)\in\lambda(x,\alpha)$.
In particular, if $|\lambda(x,\alpha)|=1$ for each $(x,\alpha)\in[n]\times\Sigma$, then $\T=(n,\lambda)$ can be identified with the unique transition structure that satisfies $\T$.

The next lemma is used throughout the article to work with the uniform distribution on transition structures conditioned on satisfying a  template.
It is obtained by direct counting.

\begin{lem}\label{lm:proba template}
Conditioned on satisfying the template $\T=(n,\lambda)$, the uniform distribution on $n$-state transition structures $(n,\delta)$ is distributed the same way as generating the image of each $\delta(x,\alpha)$ uniformly at random and independently in $\lambda(x,\alpha)$. 
\end{lem}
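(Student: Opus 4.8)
The statement to prove is Lemma~\ref{lm:proba template}, which asserts that the uniform distribution on $n$-state transition structures, conditioned on satisfying a template $\T = (n,\lambda)$, coincides with the product distribution obtained by choosing each $\delta(x,\alpha)$ independently and uniformly in $\lambda(x,\alpha)$.

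\bigskip

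\noindent\textbf{Proof proposal.}

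The plan is to argue by direct counting, as the lemma statement itself indicates. First I would recall that an $n$-state transition structure on $\Sigma$ is precisely a function $\delta\colon [n]\times\Sigma \to [n]$, and there are exactly $n^{|[n]\times\Sigma|} = n^{2n}$ of them (with $|\Sigma|=2$), all equiprobable under the uniform distribution. The event ``$\A \satisfies \T$'' is the set of those $\delta$ with $\delta(x,\alpha) \in \lambda(x,\alpha)$ for every pair $(x,\alpha)$; since the constraints on distinct pairs $(x,\alpha)$ concern distinct coordinates of the function table, this set is exactly the product set $\prod_{(x,\alpha)} \lambda(x,\alpha)$, and hence has cardinality $N := \prod_{(x,\alpha)\in[n]\times\Sigma} |\lambda(x,\alpha)|$, which is positive because each $\lambda(x,\alpha)$ is non-empty by the definition of a template. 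In particular the conditioning event has positive probability, so the conditional distribution is well-defined.

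Next I would compute the conditional probability of an arbitrary transition structure $\A_0 = (n,\delta_0)$. If $\A_0 \not\satisfies \T$ then the conditional probability is $0$, and so is the probability that the independent-uniform procedure outputs $\A_0$ (some coordinate would have to land outside its allowed set). If $\A_0 \satisfies \T$, then by the definition of conditional probability for the uniform measure, $\proba(\A = \A_0 \mid \A \satisfies \T) = \tfrac{1}{N} = \prod_{(x,\alpha)} \tfrac{1}{|\lambda(x,\alpha)|}$. On the other side, the procedure that picks each $\delta(x,\alpha)$ uniformly and independently in $\lambda(x,\alpha)$ outputs $\delta_0$ with probability $\prod_{(x,\alpha)} \proba(\delta(x,\alpha) = \delta_0(x,\alpha)) = \prod_{(x,\alpha)} \tfrac{1}{|\lambda(x,\alpha)|}$, using independence across coordinates and uniformity within each. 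The two expressions agree, so the two distributions assign the same mass to every transition structure, hence are equal.

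There is essentially no obstacle here: the only thing to be a little careful about is the product structure of the constraint set — the fact that $\{\A : \A\satisfies\T\}$ is literally a Cartesian product of the sets $\lambda(x,\alpha)$ over the index set $[n]\times\Sigma$ — and that the normalizing constant is nonzero, which is exactly where the non-emptiness clause in the definition of a template is used. Once these are observed, the matching of the two probability mass functions is immediate. I would therefore keep the write-up short, spelling out the bijection between transition structures and functions $[n]\times\Sigma\to[n]$, the identification of the conditioning event with $\prod_{(x,\alpha)}\lambda(x,\alpha)$, and the one-line computation that both distributions have probability mass function $\delta_0 \mapsto \mathbf{1}[\A_0\satisfies\T]\cdot\prod_{(x,\alpha)}|\lambda(x,\alpha)|^{-1}$.
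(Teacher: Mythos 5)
Your proof is correct and follows exactly the route the paper intends: the paper gives no written proof, remarking only that the lemma ``is obtained by direct counting,'' and your observation that the event $\{\A : \A\satisfies\T\}$ is the Cartesian product $\prod_{(x,\alpha)}\lambda(x,\alpha)$, so that the conditional uniform measure factors coordinatewise, is precisely that counting argument. Nothing is missing.
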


In the sequel we define several sets of templates to capture the required properties at every step, following the informal description given in Section~\ref{sec:main}. Without going into the details at the moment, which will be given as needed, we will define:
\begin{itemize}
    \item $\tempB_n(p)$ as the templates that force the required conditions on the backward exploration from state $p$, defined in Section~\ref{sec:backward};
    \item $\tempC_n(p)$ as the templates built from $\tempB_n(p)$ that furthermore force  the existence of a small cycle around $p$, defined in Section~\ref{sec:forward};
    \item $\tempT_n(p,q)$ as the templates built from $\tempC_n(p)$ that furthermore force the existence of the path starting at $q$ and of the $b$-threads, as depicted in Figure~\ref{fig:proof sketch} and defined in Section~\ref{sec:b-cycles}.
\end{itemize}

\section{Backward substructure of a random transition structure}\label{sec:backward}

This section studies the shape of the backward substructure at logarithmic depth from a state $p \in [n]$ in a random transition structure of size $n$. The backward substructure at depth $h$ from a vertex $p$, formally defined in Subsection~\ref{sec:backward substructure}, is obtained by restricting the transition structure to the vertices that can reach $p$ in at most $h$ steps.

More precisely, we study the backward substructure at depth $h:=\lceil\log_2  \sqrt{n}\,\rceil$ and we want to show that with visible probability 
in a random transition structure, this backward substructure contains at most $d_1\sqrt{n}$ vertices with at least $d_2\sqrt{n}$ vertices at depth exactly $h$, for some positive constants $d_1$ and $d_2$. This result is formally stated in Proposition~\ref{pro:backward} and its proof is the aim of this section.

For a fixed maximal depth $m$, Cai and Devroye~\cite{devroye17} studied the stochastic process giving the number of vertices at depth at most $m$ in a random transition structure of size $n$ by providing a perfect coupling with a Galton-Watson process whose offspring distribution is $\mathrm{Bin}(2n,1/n)$.  As we work at logarithmic depth $\lceil\log_2  \sqrt{n}\,\rceil$, we are not able to provide such a tight description. 
However, inspired by their approach, we introduce in Subsection~\ref{sec:backward tree} a stochastic process $\G_h$ that builds what we called backward multi-trees. Under certain natural restrictions, a backward multi-tree of height $h$ can encode a backward structure of depth $h$. In Lemma~\ref{lm:lower bound proba}, we show that for every backward-structure of size at most $\frac{n}{3}$, the probability that a random structure of size $n$ has this backward structure at depth $h$ is lower-bounded by the probability that the process $\G_h$ produces a multi-tree that encodes this backward structure. 

Thanks to this reduction, it only remains to show that with visible probability the process $\G_h$ produces a multi-tree encoding a backward structure satisfying the {restrictions} on its total size and the number of nodes at depth $h$ required in Proposition~\ref{pro:backward}. A key ingredient in obtaining this result is that the process $\G_h$ is constructed in such a way that if we only consider the number of vertices at each depth and not the structure of the multi-tree, we are in fact studying a Galton-Watson process with offspring distribution $\mathrm{Poi}(2)$. This model is well-studied and we derive the results necessary for our purpose in Subsection~\ref{sec:GW}, using standard techniques.

\subsection{Backward substructure}\label{sec:backward substructure}
Let $\A$ be a $n$-state deterministic transition structure and $p\in[n]$. For any state $x\in[n]$, let $\dback_p(x)$ denote the \emph{backward distance} from $p$ to $x$, that is, the length of a shortest path from $x$ to $p$ in $\A$. By convention, $\dback_p(x)=+\infty$ if $p$ is not reachable from $x$. For $t\geq 0$, the $t$-th \emph{backward layer} of $\A$ from $p$ is the set $L_t(\A,p)$ (simply denoted by $L_t$ or $L_t(\A)$ when the context is clear) consisting of the states at backward distance $t$ from $p$: $L_t(\A,p)=\{x:\dback_p(x)=t\}$.

For any integer $h\geq 0$, we define the \emph{$h$-backward substructure} $\B_{h}$  of $\A$ from $p$ as the incomplete deterministic transition structure obtained by selecting the states at backward distance at most $h$ from $p$ and the transitions that are part of their shortest paths to $p$. Formally, if $\A=(n,\delta)$, then $\B_h=(n,\gamma)$, where $\gamma$ is the partial transition function defined by
$\gamma(x,\alpha)=y$ if and only if $\delta(x,\alpha)=y$ and $\dback_p(x) \leq h$ and $\dback_p(y)= \dback_p(x)-1$. The \emph{backward support} of $\B_h$, denoted $\Support{\B_h}$ is $\cup_{k=0}^h L_k$, the set of states of $\B_h$ which are either the source or target of a transition, or just $\{p\}$ in the special case of $h=0$. When there is no ambiguity on $h$ and $p$, we sometimes write  $\Support{\A}$ instead of $\Support{\B_h}$.

For any $n\geq 1$, any $p\in[n]$ and any $h\geq 0$, denote by $\backT_{n,h}(p)$ the set of all possible $h$-backward substructures from $p$ obtained from a $n$-state deterministic transition structure. 

The template \emph{associated to} $\B_h=(n,\gamma)\in\backT_{n,h}(p)$ is the template $\T=(n,\lambda)$ defined for every $x\in[n]$ and $\alpha\in\Sigma$ by:
\begin{enumerate}
	\item If $\gamma(x,\alpha)$ is defined, then $\lambda(x,\alpha)=\{\gamma(x,\alpha)\}$.
	\item If $\gamma(x,\alpha)$ is undefined and $x\in\Support{\B_h}$, then
	$\lambda(x,\alpha)=[n]\setminus\cup_{k=0}^{\dback_p(x)-1} L_k$.
	\item Otherwise, if $x\notin\Support{\B_h}$, then $\lambda(x,\alpha)=[n]\setminus\cup_{k=0}^{h-1} L_k$.
\end{enumerate}

This template is defined to precisely capture all $n$-state transition structures admitting $\B_h$ as $h$-backward substructure.

\begin{lem}\label{lm:associated template}
If $\T$ is the template associated to $\B_h=(n,\gamma)\in\backT_{n,h}(p)$, then a $n$-state transition structure $\A$ satisfies $\T$ if and only if its $h$-backward substructure is $\B_h$.
\end{lem}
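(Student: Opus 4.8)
The plan is to first pin down, for any transition structure $\A=(n,\delta)$ satisfying $\T$, its backward layers from $p$ up to depth $h$, to show they coincide with the layers $L_0,\dots,L_h$ recorded inside $\B_h$, and only then to read off that the $h$-backward transition function of $\A$ equals $\gamma$. Recall that $\B_h\in\backT_{n,h}(p)$ means $\B_h$ genuinely is the $h$-backward substructure of \emph{some} $n$-state structure; I will use that the $L_k$'s partition $\Support{\B_h}$ by backward distance, that every $x\in\Support{\B_h}$ with $\dback_p(x)=t\geq 1$ has at least one $\alpha$ with $\gamma(x,\alpha)$ defined and lying in $L_{t-1}$, and that $\gamma(x,\alpha)$ being defined forces $\dback_p(x)\leq h$ and $\gamma(x,\alpha)\in L_{\dback_p(x)-1}$.

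For the direction $\A{\satisfies}\T\Rightarrow\B_h$ is the $h$-backward substructure of $\A$, I would prove by induction on $t\in\{0,\dots,h\}$ the statement: the set of states at backward distance $\leq t$ in $\A$ equals $\bigcup_{k\leq t}L_k$, and on this set the backward distance in $\A$ agrees with the one recorded in $\B_h$. The base case $t=0$ is immediate (both sides equal $\{p\}$). For the step, take $x$ at backward distance $t+1$ in $\A$, witnessed by $x\era{\alpha}z$ with $z$ at distance $t$; by the inductive hypothesis $z\in L_t$. The crux is a case split on which clause defines $\lambda(x,\alpha)$: clause (3) is impossible, since it would force $\delta(x,\alpha)=z\notin\bigcup_{k<h}L_k$ while $z\in L_t$ with $t<h$; clause (2) would force the $\B_h$-distance of $x$ to be $\leq t$, hence (by the hypothesis) its $\A$-distance to be $\leq t$, contradicting $t+1$; so clause (1) applies, $\gamma(x,\alpha)$ is defined and equals $z$, and the definition of the $h$-backward substructure pins $\dback_p(x)=t+1$ inside $\B_h$ too. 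Conversely, any $x\in L_{t+1}$ has some $\alpha$ with $\gamma(x,\alpha)\in L_t$ defined, which $\A$ must realise through clause (1), giving backward distance $\leq t+1$ in $\A$; disjointness of the layers together with the hypothesis (used only at depths $\leq t$) forbids a smaller value. This proves the induction, whence $L_k=L_k(\A,p)$ for $k\leq h$. Finally, $\gamma$ equals the $h$-backward transition function $\gamma'$ of $\A$: the inclusion $\gamma\subseteq\gamma'$ follows from clause (1) together with the now-matching distances, and $\gamma'\subseteq\gamma$ because if $\gamma(x,\alpha)$ were undefined while $\gamma'(x,\alpha)$ is defined, then $x\in\Support{\B_h}$ so clause (2) applies, and it forces $\dback_p(\delta(x,\alpha))\geq\dback_p(x)$ in $\A$ — incompatible with $\gamma'(x,\alpha)$ dropping exactly one backward layer.

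For the converse direction, the layers already agree by hypothesis, so it remains to check $\delta(x,\alpha)\in\lambda(x,\alpha)$ clause by clause. Clause (1) is immediate. For clauses (2) and (3) the only input needed is the elementary inequality $\dback_p(x)\leq\dback_p(\delta(x,\alpha))+1$: if $\delta(x,\alpha)$ lay in the forbidden set $\bigcup_{k<j}L_k$ (with $j=\dback_p(x)$ for clause (2), $j=h$ for clause (3)), then $\delta(x,\alpha)$ would sit exactly one backward layer below $x$ and $\dback_p(x)\leq h$, so the edge $x\era{\alpha}\delta(x,\alpha)$ would belong to the $h$-backward substructure, i.e. $\gamma(x,\alpha)$ would be defined — contradicting the case assumption for clause (2), and contradicting $x\notin\Support{\B_h}$ for clause (3).

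The main obstacle is the bookkeeping in the inductive step of the first direction: one must keep the backward distance computed in $\A$ and the backward distance recorded in $\B_h$ carefully apart until the induction identifies them, and in particular interpret clause (2)'s forbidden union $[n]\setminus\bigcup_{k<\dback_p(x)}L_k$ through the $\B_h$-layers while only ever invoking the inductive hypothesis at depths $\leq t$, so as to avoid circularity. Everything else is a routine unwinding of the definition of the $h$-backward substructure and of what it means for a structure to satisfy a template.
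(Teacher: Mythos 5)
Your proof is correct and follows essentially the same route as the paper's: an induction identifying the backward layers of $\A$ with those of $\B_h$ via a case analysis on which clause of $\lambda$ applies, followed by the "shorter/equal path" argument for the converse. You merely spell out in full the induction that the paper dismisses as "direct", and your treatment of clause (2) in the converse (the target sitting \emph{exactly} one layer below $x$, forcing $\gamma(x,\alpha)$ to be defined) is in fact slightly more careful than the paper's phrasing.
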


\begin{proof}
Suppose that $\A=(n,\delta)$ satisfies $\T$. A direct induction shows that for all $t\in\{0,\ldots,h\}$ we have $L_t(\A) = L_t(\B_h)$. Moreover, if  $x\in L_t$ for 
$t\in\{1,\ldots,h\}$, then the conditions of Case~(2) in the definition of $\T$ yields that $\delta(x,\alpha)\in L_{t-1}$ if and only if $\gamma(x,\alpha)$ is defined, yielding $\delta(x,\alpha)=\gamma(x,\alpha)$ {as} we must be in  Case~(1). Hence the $h$-backward substructure of $\A$ is $\B_h$. 

Conversely, if the $h$-backward structure of $\A$ is $\B_h$, then for every $(x,\alpha)\in[n]\times\Sigma$:
\begin{enumerate}
	\item If $\gamma(x,\alpha)$ is defined, then $\delta(x,\alpha)=\gamma(x,\alpha)$, thus $\delta(x,\alpha)\in\lambda(x,\alpha)$.
	\item If $\gamma(x,\alpha)$ is undefined and $x\in\Support{\B_h}$, then $\delta(x,\alpha)$ cannot be in $\cup_{k=0}^{d-1} L_k$, where $d=\dback_p(x)$ in $\B_h$, otherwise it would create a shorter path from $x$ to $p$ in $\A$, and $\B_h$ would not be its $h$-backward substructure.
	\item If $x\notin\Support{\B_h}$, using the same argument, $\delta(x,\alpha)$ cannot be in $\cup_{k=0}^{h-1} L_k$.
\end{enumerate} 
This concludes the proof.
\end{proof}
Following the statement of Lemma~\ref{lm:associated template}, we write $\A\satisfies\B_h$ when $\A$ satisfies the template associated with $\B_h$. We further extend the notation to sets $\mathfrak{X}$ of backward substructures: $\A\satisfies\mathfrak{X}$ if $\A$ satisfies at least one element of $\mathfrak{X}$. Lemma~\ref{lm:associated template} also allows us to define $\Support{\T}=\Support{\B_h}$, when $\T$ is the template associated with  $\B_h\in\backT_{n,h}(p)$.

\subsection{Backward multi-trees, trees and processes}\label{sec:backward tree}

We define backward multi-trees which are general trees and identify a subclass called backward trees that naturally encode backward substructures. Using this view, we can consider a simpler stochastic process that produces backward multi-trees instead of backward-substructures. 
In Lemma~\ref{lm:lower bound proba}, we show that this  stochastic process can be used {to give a lower-bound on the probability} that a given backward substructure appears in a random transition structure.

A \emph{$n$-backward multi-tree} rooted at $p$  is a finite general\footnote{In a general tree, the children of each node are not ordered.} {tree whose} root is labeled by $p$ and {whose other nodes} are labeled by pairs $(x,\alpha)\in[n]\times\Sigma$. By a slight abuse of notation, if a node $N$ is labeled by $(x,\alpha)$, we say that $x$ is the state that labels $N$ or that $x$ labels $N$.
The name multi-tree underlines the fact that a state can label several nodes in the tree.

A \emph{$n$-backward tree} is a $n$-backward multi-tree whose nodes are labeled by pairwise distinct states. In particular, its labels are pairwise distinct.  The notions of backward multi-tree and of backward tree are illustrated in Figure~\ref{fig:Gt}, for $n=8$.

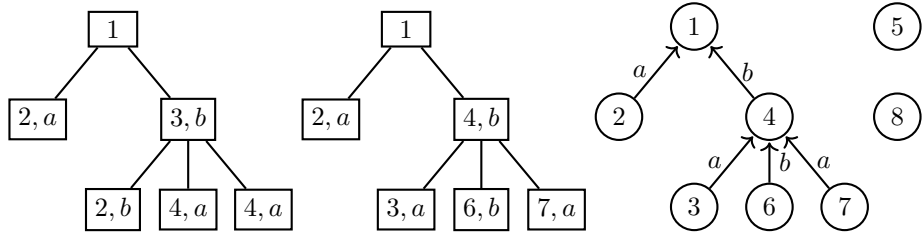
\begin{figure}[h]
\small
\centering
\begin{tikzpicture}[yscale=1.2]
	\node[thick, draw] (p) at (0,0) {$\ 1\ $};

	\node[thick, draw] (p0) at (-1,-1) {$2,a$};
	\node[thick, draw] (p1) at (1,-1) {$3,b$};

	\node[thick, draw] (p10) at (0,-2) {$2,b$};
	\node[thick, draw] (p11) at (1,-2) {$4,a$};
	\node[thick, draw] (p12) at (2,-2) {$4,a$};
	
	\draw[thick] (p) -- (p0);
	\draw[thick] (p) -- (p1);
	\draw[thick] (p1) -- (p10);
	\draw[thick] (p1) -- (p11);
	\draw[thick] (p1) -- (p12);
\end{tikzpicture}
\begin{tikzpicture}[yscale=1.2]
	\node[thick, draw] (p) at (0,0) {$\ 1\ $};

	\node[thick, draw] (p0) at (-1,-1) {$2,a$};
	\node[thick, draw] (p1) at (1,-1) {$4,b$};

	\node[thick, draw] (p10) at (0,-2) {$3,a$};
	\node[thick, draw] (p11) at (1,-2) {$6,b$};
	\node[thick, draw] (p12) at (2,-2) {$7,a$};
	
	\draw[thick] (p) -- (p0);
	\draw[thick] (p) -- (p1);
	\draw[thick] (p1) -- (p10);
	\draw[thick] (p1) -- (p11);
	\draw[thick] (p1) -- (p12);
\end{tikzpicture}
\begin{tikzpicture}[yscale=1.2]
	\node[thick, draw,circle] (p) at (0,0) {$1$};

	\node[thick, draw,circle] (p0) at (-1,-1) {$2$};
	\node[thick, draw,circle] (p1) at (1,-1) {$4$};

	\node[thick, draw,circle] (p10) at (0,-2) {$3$};
	\node[thick, draw,circle] (p11) at (1,-2) {$6$};
	\node[thick, draw,circle] (p12) at (2,-2) {$7$};
	
	\draw (p0) edge[thick, ->, shorten >= .5pt] node[left]{$a$} (p);
	\draw (p1) edge[thick, ->, shorten >= .5pt] node[right]{$b$} (p);

	\draw (p10) edge[thick, ->, shorten >= .5pt] node[left]{$a$} (p1);
	\draw (p11) edge[thick, ->, shorten >= .5pt] node[right]{$b$} (p1);
	\draw (p12) edge[thick, ->, shorten >= .5pt] node[right]{$a$} (p1);

    \node[thick, draw,circle] (p) at (2.7,0) {$5$};
    \node[thick, draw,circle] (p) at (2.7,-1) {$8$};

\end{tikzpicture}
\caption{On the left, a backward multi-tree which is not a backward tree because, for instance, the state~2 labels two nodes. In the center, a backward multi-tree which is a tree with its corresponding backward substructure on the right. \label{fig:Gt}}
\end{figure}

A $n$-backward tree $T$ of depth at most $h$ can be very simply transformed into a $h$-backward substructure with $n$ states, by starting with no transition then changing every child$\rightarrow$parent relation $(x,\alpha)\rightarrow(y,\beta)$ into the transition $x\xrightarrow{\alpha}y$, and $(x,\alpha)\rightarrow p$ into
$x\xrightarrow{\alpha}p$: each letter is moved from its node's label to the edge linking it to its parent, and the unused states are added with no outgoing transition. We denote by $\Lambda$ this map from a $n$-backward tree to a $n$-state backward substructure.

We say that a $n$-backward multi-tree $T$ \emph{matches} a backward substructure $\B$ when $T$ is a tree and $\Lambda(T)=\B$.
Let $\treeT_n$ denote the set of $n$-backward multi-trees  that are trees, and let $\treeT_{n,h}(p)$ be the set of backward trees that match a $h$-backward substructure at $p$:
\[
\treeT_{n,h}(p):=\{T\in\treeT_n: \Lambda(T)\in\backT_{n,h}(p)\}.
\]
Observe that a backward substructure can be matched by a backward tree if and only if every state in its support has exactly one outgoing transition, except $p$ which has none. We say that such a backward substructure is a \emph{tree}.

For $n\geq 1$ and $p\in[n]$, we define a stochastic process $(\G_t)_{t\geq 0}$, called the \emph{backward multi-tree process} of parameters $n$ and $p$, which produces random $n$-backward multi-trees as follows. At time $t=0$, the tree consists of a unique root node labeled $p$, which is at depth $0$. For $t\geq 0$,
the tree $\G_{t+1}$ is built from $\G_t$ in the following way. For every node $N$ at depth $t$ in $\G_t$, we draw independent Poisson random variables $X_{x,\alpha}$ of parameter $\frac1n$, one for every $x\in[n]$ and for every $\alpha\in\Sigma$. If $X_{x,\alpha}>0$, then we add $X_{x,\alpha}$ children labeled $(x,\alpha)$ to $N$, at depth $t+1$. 

The following lemma justifies the introduction of the multi-tree process $(\G_t)_{t\geq 0}$ as we will use it to lift some visible properties of a backward tree generated by this process to the visible properties of the backward substructure of a random transition structures. 

\begin{lem}\label{lm:lower bound proba}
Let $\B\in\backT_{n,h}(p)$ %
such that $|\Support{\B}| < \frac{n}3$ and the $h$-th layer of $\B$ is not empty. 
Let $\A_n$ denote a uniform random transition structure with $n$ states, and $(\G_t)_{t\geq 0}$ be the backward tree process of parameters $n$ and $p$.  Then $\proba(\A_n\satisfies\B) \geq \proba(\G_h\text{ matches }\B)$.
\end{lem}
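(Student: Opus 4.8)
The plan is to compare, term by term, the probability that a uniform random transition structure $\A_n$ satisfies the template associated with $\B$ against the probability that the multi-tree process $\G_h$ produces a tree matching $\B$. By Lemma~\ref{lm:associated template}, $\proba(\A_n\satisfies\B)$ equals the probability that a uniform $(n,\delta)$ has its transitions land in the allowed sets prescribed by the associated template $\T=(n,\lambda)$. Since each of the $2n$ transitions is an independent uniform choice in $[n]$, this probability is exactly $\prod_{(x,\alpha)}\frac{|\lambda(x,\alpha)|}{n}$. Now look only at the factors where the constraint is non-trivial: for $x\in\Support{\B}$ with $\gamma(x,\alpha)$ defined, the factor is $\frac1n$; for $x\in\Support{\B}$, $x\ne p$, with $\gamma(x,\alpha)$ undefined the factor is $\frac{n-|\cup_{k<\dback_p(x)}L_k|}{n}$; and for $x\notin\Support{\B}$ the factor is $\frac{n-|\cup_{k<h}L_k|}{n}$. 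Because $\B$ is a tree (it is matched by a backward tree, so every support state except $p$ has exactly one defined outgoing transition), each support state $x\ne p$ contributes exactly one factor $\frac1n$ and one ``free'' factor, and $p$ contributes two free factors; here I use $|\Support{\B}|<\frac n3$ to ensure all these free factors are at least $\frac23$, hence strictly positive, so the product is well-defined and I may freely bound the free factors below by $1$ or keep them.

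First I would compute $\proba(\G_h\text{ matches }\B)$ explicitly. A match requires that, at each layer $t<h$, every node produces exactly the right multiset of children, and that no node at layer $t\le h-1$ in the support produces any child other than those in $\B$, and moreover a match requires the resulting multi-tree to actually be a tree (distinct labels), which is automatic once $\Lambda(\G_h)=\B$ since $\B$ is a tree. For a fixed node $N$ at depth $t$, the Poisson variables $X_{x,\alpha}$ are independent $\mathrm{Poi}(1/n)$. The event that $N$'s children are exactly the prescribed set (each prescribed child appearing with multiplicity $1$, since $\B$ is a tree) has probability $\prod_{(x,\alpha)\text{ prescribed at }N}\proba(X_{x,\alpha}=1)\cdot\prod_{(x,\alpha)\text{ forbidden at }N}\proba(X_{x,\alpha}=0)$, where a pair is ``forbidden at $N$'' if adding it as a child would violate $\Lambda(\G_h)=\B$. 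Using $\proba(X_{x,\alpha}=1)=\frac1n e^{-1/n}\le\frac1n$ and $\proba(X_{x,\alpha}=0)=e^{-1/n}\le 1$, I get that each prescribed transition contributes a factor at most $\frac1n$, and the forbidden factors contribute at most $1$. Multiplying over all nodes $N$ at depths $0,\dots,h-1$, and recalling that $\B$ being a tree means the number of prescribed child-relations equals $|\Support{\B}|-1$ (one per non-root support node), I obtain $\proba(\G_h\text{ matches }\B)\le\left(\frac1n\right)^{|\Support{\B}|-1}\cdot(\text{product of }e^{-1/n}\text{ terms})\le\left(\frac1n\right)^{|\Support{\B}|-1}$.

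On the other side, $\proba(\A_n\satisfies\B)=\left(\frac1n\right)^{|\Support{\B}|-1}\cdot\prod(\text{free factors})$, and since each free factor is $\ge\frac23>0$ — in fact I only need them $\ge e^{-1/n}$-ish, but any positive lower bound suffices after I show the free factors dominate the $\G_h$ forbidden factors — I would argue $\proba(\A_n\satisfies\B)\ge\left(\frac1n\right)^{|\Support{\B}|-1}\cdot c^{n}$ for some $c>0$; but that is not yet $\ge\proba(\G_h\text{ matches }\B)$ in an obvious way unless I am more careful. The clean way, and the step I expect to be the main obstacle, is to match the two products factor by factor: pair each free factor $\frac{|\lambda(x,\alpha)|}{n}$ of the $\A_n$ side with the corresponding product of $\proba(X_{y,\beta}=0)$ forbidden factors on the $\G_h$ side and show the former is at least the latter, and pair each $\frac1n e^{-1/n}$ of the $\G_h$ side with a $\frac1n$ of the $\A_n$ side. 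Concretely, for a node at depth $t$ labeled $(x,\alpha)$ the forbidden pairs are exactly those $(y,\beta)$ with $y\in\cup_{k\le t}L_k$ (they would create a shortcut), whose count is $2|\cup_{k\le t}L_k|$ minus the already-prescribed ones; the $\A_n$ free factor for that same $x$ is $\frac{n-|\cup_{k< \dback_p(x)}L_k|}{n}=\frac{n-|\cup_{k\le t-1}L_k|}{n}$, and one checks $\left(\frac{n-|\cup_{k\le t-1}L_k|}{n}\right)^{\#\text{free at }x}\ge e^{-(\#\text{forbidden at }x)/n}$ using $1-u\ge e^{-u/(1-u)}$ together with $|\Support{\B}|<\frac n3$ to control the exponent, and a counting identity that the total number of forbidden incidences over the tree equals the total number of free incidences up to the prescribed ones. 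Once this bookkeeping is done the inequality $\proba(\A_n\satisfies\B)\ge\proba(\G_h\text{ matches }\B)$ drops out, and the hypothesis that the $h$-th layer of $\B$ is non-empty is used only to guarantee $\B\in\backT_{n,h}(p)$ genuinely has depth $h$ so that the product over depths $0,\dots,h-1$ is the right range. I would present the argument by first writing both probabilities as explicit products, then doing the per-state comparison, keeping the calculations light since each individual inequality is elementary.
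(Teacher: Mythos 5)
Your overall strategy---writing both probabilities as explicit products over transitions/nodes and comparing them---is viable in principle and would essentially reproduce, in global rather than inductive form, the layer-by-layer comparison the paper carries out. But your execution contains a critical error in the computation of $\proba(\G_h\text{ matches }\B)$. For $\G_h$ to match $\B$, every node $N$ at depth $t\leq h-1$ must have \emph{exactly} the prescribed children and no others: since $\Lambda$ turns every child--parent relation into a transition, \emph{any} extra child $(y,\beta)$---including one whose state $y$ lies outside the support---forces $\Lambda(\G_h)\neq\B$. Hence the pairs that must draw $X_{y,\beta}=0$ at $N$ are all $2n$ pairs of $[n]\times\Sigma$ minus the prescribed ones, not, as you assert in your final paragraph, only those with $y\in\cup_{k\leq t}L_k$ (your own second paragraph, ``exactly the right multiset of children,'' contradicts this restriction). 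With the correct count, writing $m=|\Support{\B}|$ and $S=|\cup_{k=0}^{h-1}L_k|$, one gets exactly $\proba(\G_h\text{ matches }\B)=n^{-(m-1)}e^{-2S}$: there are $2nS$ Poisson draws in total, each contributing a factor $e^{-1/n}$, of which $m-1$ must equal $1$. Your undercount replaces the factor $e^{-2S}=e^{-\Theta(\sqrt n)}$ (for the structures the lemma is applied to) by $e^{-O(m^2/n)}=e^{-O(1)}$, i.e.\ it overestimates the match probability by an enormous factor. Since $\proba(\A_n\satisfies\B)$ is itself $n^{-(m-1)}e^{-2S+O(1)}$ --- the $2(n-m)$ free factors $\frac{n-S}{n}$ of the non-support states contribute essentially $e^{-2S}$ --- the inequality you set out to prove against your overestimate is simply \emph{false}. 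The entire content of the lemma is that the two $e^{-2S}$ terms cancel to first order and that the sign of the second-order remainder is controlled by $|\Support{\B}|<\frac n3$.

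A second, related problem: even with the forbidden set corrected, a strict factor-by-factor pairing cannot close, because pairing one free factor $1-u$ on the $\A_n$ side against forbidden factors totalling $e^{-u}$ on the $\G_h$ side runs into $1-u<e^{-u}$, which points the wrong way. The comparison only works in aggregate, using a second-order expansion of $\log(1-x)$ (valid precisely because $|\Support{\B}|<\frac n3$) together with the slack from the prescribed factors ($\frac1n$ versus $\frac1n e^{-1/n}$) and, in the paper's version, the helper factor $\bigl(\tfrac{n}{n-r-s}\bigr)^{z}\geq 1$ coming from conditioning layer by layer. The paper organizes this bookkeeping as an induction on the depth, showing at each step that the log-ratio of the two conditional probabilities is at least $\frac{r^2}{2(n-s)}\geq 0$. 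Your plan could be repaired along these lines, but as written the decisive inequality is both based on a miscount and deferred (``one checks\dots'', ``a counting identity that\dots'') exactly where the proof actually lives; note also that the claimed counting identity between forbidden and free incidences is false under either count---only the products balance, not the cardinalities.
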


\begin{proof}
If $\B$ is not a tree then $\proba(\G_h\text{ matches }\B)=0$ and the inequality trivially holds. So we assume that $\B$ is a tree for the remainder of the proof.

Let $\T$ denote the template associated with $\B$.
We proceed by induction on the depth $h$ of $\B=(n,\gamma)$. The claim trivially holds if $h=0$. 
For the induction step, assume the claim is true for depth $h$ and consider a backward substructure $\B$ of depth $h+1$. 
Let $\hat\B$ be the associated backward substructure of depth $h$, which just consists in removing the outgoing transitions from the states of $L_{h+1}(\B)$. Observe that $\hat\B$ is also a tree, thus we have
$\proba(\G_h\text{ matches }\hat\B)>0$.

Let $\hat\T$ denote the template  associated with $\hat\B$. As
 $|\Support{\hat\B}| \leq |\Support{\B}| \leq \frac{n}3$, by induction hypothesis  we have $\proba(\A_n\satisfies\hat\T) \geq \proba(\G_h\text{ matches }\hat\B)$. 

 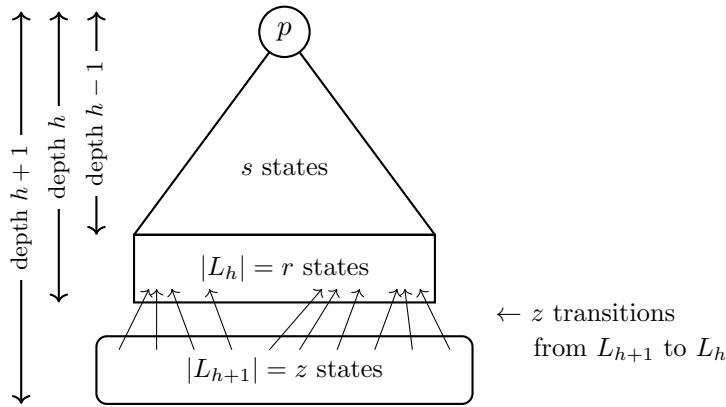
\begin{figure}[h]
\centering
\begin{tikzpicture}[yscale=.9]

\draw[thick] (0,0) -- (-2,-3);
\draw[thick] (0,0) -- (2,-3);
\draw[thick] (-2,-3) -- (2,-3);
\node[draw,thick,circle,fill=white] at (0,0) {$p$};
\draw[thick,<->] (-2.5,0.3) -- node[rotate=90,fill=white]{\footnotesize depth $h-1$} (-2.5,-3);
\draw[thick,<->] (-3,0.3) -- node[rotate=90,fill=white]{\footnotesize depth $h$} (-3,-4);
\draw[thick,<->] (-3.5,0.3) -- node[rotate=90,fill=white]{\footnotesize depth $h+1$} (-3.5,-5.5);

\draw[thick] (-2,-3) rectangle (2,-4);

\draw[thick, rounded corners] (-2.5,-4.5) rectangle (2.5,-5.5);

\node at (0,-2) {\small $s$ states};
\node at (0,-3.5) {\small $|L_h|=r$ states};
\node at (0,-5) {\small $|L_{h+1}|=z$ states};

\draw[->] (-2.2,-4.7) -- (-1.8,-3.8);
\draw[->] (-1.7,-4.7) -- (-1.7,-3.8);
\draw[->] (-1.2,-4.7) -- (-1.5,-3.8);
\draw[->] (-.7,-4.7) -- (-1,-3.8);
\draw[->] (-.2,-4.7) -- (.5,-3.8);
\draw[->] (.2,-4.7) -- (.7,-3.8);
\draw[->] (.7,-4.7) -- (1,-3.8);
\draw[->] (1.2,-4.7) -- (1.5,-3.8);
\draw[->] (1.7,-4.7) -- (1.6,-3.8);
\draw[->] (2.2,-4.7) -- (1.8,-3.8);

\node at (4,-4.15) {\small $\leftarrow$ $z$ transitions};
\node at (4.6,-4.7) {\small from $L_{h+1}$ to $L_h$};

\end{tikzpicture}
\caption{An illustration for the values of $r$, $s$ and $z$ in the proof of Lemma~\ref{lm:lower bound proba}.\label{fig:proof lower bound}}
\end{figure}

Let $r=|L_{h}(\B)|=|L_h(\hat\B)|$, $s=|\cup_{k=0}^{h-1} L_k|$ and $z=|L_{h+1}(\B)|$. Observe that since $\B$ is a tree,  $z$  is also the number of transitions from $L_{h+1}$ to $L_h$ in $\B$. They are the transitions we removed to build $\hat\B$.

A transition structure $\A=(n,\delta)$ that satisfies $\hat\T=(n,\hat\lambda)$ also satisfies $\T$ if and only if for every $x\notin\Support{\hat\B}$ and every $\alpha\in\Sigma$ we have  (i) if $\gamma(x,\alpha)$ is defined in $\B$ then $\delta(x,\alpha)=\gamma(x,\alpha)$, and (ii) if $\gamma(x,\alpha)$ is undefined in $\B$ then $\delta(x,\alpha)\notin\Support{\hat\B}$. 
Since $\A\satisfies\hat\T$, for every $x\notin\Support{\hat\B}$ and every $\alpha\in\Sigma$, $\hat\lambda(x,\alpha)=[n]\setminus \cup_{k=0}^{h-1} L_k$, hence $|\hat\lambda(x,\alpha)|=n-s$. 
Out of these $n-s$ possible targets, only one is possible for the $z$ transitions of Case (i) and $n-r-s$ are possible for the $2(n-r-s)-z$ transitions of Case (ii). Since the other transitions have the same constraints in $\T$ and in $\hat\T$, by Lemma~\ref{lm:proba template} we have 
\[
\proba(\A_n\satisfies\T \mid \A_n\satisfies\hat\T) = \frac1{(n-s)^z}\times\left(\frac{n-r-s}{n-s}\right)^{2(n-s-r)-z},
\] 
We rewrite this probability the following way:
\[
\proba(\A_n\satisfies\T \mid \A_n\satisfies\hat\T) = \frac1{(n-r-s)^z}\times\left(1-\frac{r}{n-s}\right)^{2(n-s-r)}.
\] 
On the other hand, by construction of the process and since $\proba(\mathrm{Poi}(\frac1n)=0)=e^{-1/n}$ and $\proba(\mathrm{Poi}(\frac1n)=1)=\frac1n e^{-1/n}$, considering the $2nr$ possible transitions that ends in one of the $r$ vertices of the last layer in $\hat\B$:
\[
\proba(\G_{h+1}\text{ matches }\B\mid \G_{h}\text{ matches }\hat\B) = \left(\frac1n e^{-1/n}\right)^z \times\left(e^{-1/n}\right)^{2nr-z} = \frac{e^{-2r}}{n^z}.
\]
Let $R$ be the ratio of these two probabilities, we have
\[
R = \frac{\proba(\A_n\satisfies\T \mid \A_n\satisfies\hat\T)}{\proba(\G_{h+1}\text{ matches }\B\mid \G_{h}\text{ matches }\hat\B)}
\geq \left(1-\frac{r}{n-s}\right)^{2(n-s-r)} e^{2r}.
\]
As $-\log(1-x) \geq x$ for $x\in(0,1)$, we have
\[
\log R  \geq 2(n-s)\log\left(1-\frac{r}{n-s}\right) +\frac{2r^2}{n-s} + 2r.
\]
A basic study of the function $x\mapsto \log(1-x) + x + \frac34 x^2$ yields that
for all $x\in[0,\frac13]$, we have $\log(1-x) \geq -x-\frac34 x^2$. Moreover, $r+\frac{s}3\leq r+s = |\Support{\B}|\leq \frac{n}3$, hence $r\leq \frac{n-s}3$. Therefore, we have
\[
\log R  \geq -2(n-s)\left(\frac{r}{n-s}+\frac{3r^2}{4(n-s)^2}\right)+\frac{2r^2}{n-s} + 2r = \frac{r^2}{2(n-s)} \geq 0.
\]
This yields that $\proba(\A_n\satisfies\T \mid \A_n\satisfies\hat\T) \geq \proba(\G_{h+1}\text{ matches }\B\mid \G_{h}\text{ matches }\hat\B)$. Therefore,
\begin{align*}
\proba\left(\A_n\satisfies\T\right) 
& = \proba\left(\A_n\satisfies\T \text{ and } \A_n\satisfies\hat\T\right) \\
& = \proba\left(\A_n\satisfies\T \mid \A_n\satisfies\hat\T\right) \proba\left(\A_n\satisfies\hat\T\right) \\
& \geq \proba\left(\G_{h+1}\text{ matches }\B\mid \G_{h}\text{ matches }\hat\B\right) \,\proba\left(\G_{h}\text{ matches }\hat\B\right)\\
& = \proba\left(\G_{h+1}\text{ matches }\B\text{ and } \G_{h}\text{ matches }\hat\B\right)\\
& = \proba\left(\G_{h+1}\text{ matches }\B\right),
\end{align*}
concluding the proof by induction.
\end{proof}

\subsection{Results in Galton-Watson processes of offspring distribution \texorpdfstring{$\textrm{Poi}(2)$}{Poi(2)}}\label{sec:GW}

Let $(\G_t)_{t\geq 0}$ be a backward multi-tree process of parameters $n$ and $p$. To each $\G_t$ we associate the quantity $Z_t$, defined as its number of nodes at depth $t$. 
{Since the sum of $|\Sigma|\cdot n$ independent Poisson random variables of parameter $\frac1n$ is a Poisson random variable of parameter $|\Sigma|$ and $|\Sigma|=2$}, each node at depth $t$ in $\G_t$ gives birth to $\textrm{Poi}(2)$ children at time $t+1$. Hence $(Z_t)_{t\geq 0}$ is exactly a Galton-Watson process of offspring distribution $\textrm{Poi}(2)$, which is a well studied branching process~\cite{Harris}.
We tailor classical results on Galton-Watson processes to our needs as follows.
\begin{thm}\label{thm:GW}
Let $(Z_t)_{t\geq 0}$ be  a Galton-Watson process of offspring distribution $\mathrm{Poi}(2)$. There exist positive real numbers $c, c_1, c_2, c_3$ and a positive integer $t_0$ such that, 
\[
\proba\left(\forall t\geq t_0,\ c_1 2^t\leq Z_t\leq c_2 2^t\text{ and }\sum_{k=0}^{t-1} Z_k \leq c_3 2^t\right) \geq c.
\]
\end{thm}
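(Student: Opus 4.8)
The plan is to combine three standard facts about the supercritical Galton--Watson process with mean $\mu=2$: the $L^2$-martingale convergence of the normalized population $W_t := Z_t/2^t$, non-extinction having positive probability, and a uniform lower bound on the population over a finite initial segment of time. Since $\mathrm{Poi}(2)$ has finite variance, the martingale $(W_t)$ converges almost surely and in $L^2$ to a nonnegative limit $W_\infty$; moreover $\proba(W_\infty>0)=\proba(\text{non-extinction})=:q>0$, because the offspring distribution has no atom at $0$ of mass $1$ and is supercritical (here one can invoke the Kesten--Stigum theorem, or simply that $\{W_\infty=0\}$ equals the extinction event up to a null set for finite-variance offspring). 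This already gives, for a suitable small $\eta>0$, that $\proba(\forall t,\ W_t \ge \eta \text{ eventually and } W_t \le M \text{ eventually}) $ is close to $q$; the issue is converting ``eventually'' into ``for all $t\ge t_0$'' with a \emph{fixed} deterministic $t_0$, and controlling the partial sums.

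The key steps, in order. \textbf{Step 1.} Fix $\varepsilon>0$ small. By $L^2$ convergence, $W_t\to W_\infty$ a.s.; pick $\eta\in(0,1)$ with $\proba(W_\infty \ge 2\eta)\ge q-\varepsilon$. On the event $A=\{W_\infty\ge 2\eta\}$, a.s. there is a (random) time $\tau$ with $W_t\in[\eta, M]$ for all $t\ge\tau$, where $M$ is chosen so that $\proba(\sup_t W_t \le M)\ge 1-\varepsilon$ using Doob's $L^2$-maximal inequality applied to the martingale $(W_t)$. \textbf{Step 2.} Since $\tau<\infty$ a.s.\ on $A$, choose a deterministic $t_0$ with $\proba(A\cap\{\tau\le t_0\})\ge q-2\varepsilon$; call this event $G$, so $\proba(G)\ge q-2\varepsilon>0$ for $\varepsilon$ small. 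On $G$ we have $c_1 2^t \le Z_t \le c_2 2^t$ for all $t\ge t_0$ with $c_1=\eta$, $c_2=M$. \textbf{Step 3.} For the partial-sum bound, write $\sum_{k=0}^{t-1}Z_k = \sum_{k=0}^{t_0-1}Z_k + \sum_{k=t_0}^{t-1}Z_k$. On $G$, the second sum is at most $M\sum_{k=t_0}^{t-1}2^k \le M\, 2^t$. The first sum is a fixed random variable $S_0:=\sum_{k=0}^{t_0-1}Z_k$ with finite expectation (indeed $\mathbb{E} Z_k = 2^k$), so by Markov's inequality we can enlarge the event: intersect $G$ with $\{S_0 \le K\}$ for $K$ large enough that this still has probability $\ge q-3\varepsilon$, and then $\sum_{k=0}^{t-1}Z_k \le K + M2^t \le (K+M)2^t =: c_3 2^t$ for all $t\ge t_0$ (using $2^t\ge 1$). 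Taking $c := q-3\varepsilon>0$, $c_1=\eta$, $c_2=M$, $c_3=K+M$ finishes the proof.

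I expect the main obstacle to be purely expository rather than mathematical: turning the almost-sure ``for $t$ large enough'' statements (which involve a random threshold $\tau$) into the single deterministic $t_0$ required by the statement, and making sure all the bad events — $\{W_\infty < 2\eta\}$, $\{\sup_t W_t > M\}$, $\{\tau > t_0\}$, $\{S_0 > K\}$ — are simultaneously small so their union does not swallow the whole probability $q$. This is handled by fixing $\varepsilon$ first, then choosing $\eta$, $M$, $t_0$, $K$ in that dependency order, each step costing at most $\varepsilon$. The only genuine input from branching-process theory is the Kesten--Stigum / finite-variance martingale convergence result and the positivity of the non-extinction probability for a supercritical offspring law with no death; both are classical (see~\cite{Harris}) and may be cited.
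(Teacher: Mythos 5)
Your proposal is correct and follows essentially the same route as the paper: almost-sure convergence of the normalized martingale $W_t=2^{-t}Z_t$ to a limit with positive mean, selection of a positive-probability event where the limit lies in a fixed window $[\eta,M]$, conversion of the random convergence time into a deterministic $t_0$ by monotonicity of the probabilities, a union bound over the finitely many bad events, and a separate bound on the initial partial sum $\sum_{k<t_0}Z_k$. The paper packages the upper and lower bounds into a single event $|W-d|<d/4$ together with $\sup_{t\ge t_0}|W_t-W|<d/4$ (avoiding Kesten--Stigum and Doob's maximal inequality, which you invoke but do not strictly need), but this is a cosmetic difference, not a different argument.
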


\begin{proof}

As the expected value of the offspring law is 2 and its variance is finite,  the sequence of random variables $(W_t)_{t \geq 0}$ defined by $W_t=2^{-t}Z_t$ almost surely converges to a random variable $W$ with $\mathbb{E}[W]=1$ (see \cite[Theorem 8.1, p.13]{Harris}). Since $\mathbb{E}[W]>0$, there must exist a constant ${d}>0$ such that $\proba(|W-{d}|<\frac{{d}}{4})>0$: we can cover $\mathbb{R}_{\geq 0}$ with a countable number  
of intervals $(\frac34{d},\frac54{d})$, and $W$ is in at least one of them with positive probability.

For $t_0 \geq 1$ and $x>0$, we consider the following events:
\begin{itemize}
    \item event $A_{t_0}$ : $\bigcap_{t=t_0}^{\infty} | W_t - W | < \frac{{d}}{4}$,
    \item event $B$ : $|W - {d} |<\frac{{d}}{4}$,
    \item event $C_{t_0,x}$ : 
    $\sum_{t=0}^{t_0-1} Z_t \leq x (2^{t_0}-1)$.        
\end{itemize}
Our choice of ${d}$ yields that ${c}:=\proba(B)$ is positive. As  $(W_t)_{t \geq 0}$  converges almost surely to $W$, $\lim_{t_0 \rightarrow \infty} \proba(A_{t_0}) = 1$. So we can fix a value for $t_0$ such that $\proba(A_{t_0})\geq 1-\frac{{c}}{4}$.
For this $t_0$, $\lim_{x \rightarrow \infty} \proba(C_{t_0,x})=1$, so we can choose some value for $x$ such that $\proba(C_{t_0,x}) \geq 1-\frac{{c}}{4}$. For these choices of $t_0$ and $x$, this yields, by the union bound, using $\overline{X}$ for the complement of $X$:
\[
\proba\left(A_{t_0} \cap B \cap C_{t_0,x}\right) \geq
1 -  \proba\left(\overline{A}_{t_0}\right) - \proba\left(\overline{B}\right)-\proba\left(\overline{C}_{t_0,x}\right)\geq \frac{{c}}{2} > 0.
\]
We finalize the proof as follows. Assume that the events
$A_{t_0}$, $B$ and $C_{t_0,x}$ occur simultaneously.
For all $t \geq t_0$, we have $|W_t-{d}| \leq |W_t-W| + |W-{d}| \leq \frac{{d}}{4} + \frac{{d}}{4} \leq \frac{{d}}{2}$. Hence, for all $t \geq t_0$, $ \frac{{d}}{2} 2^t \leq Z_t \leq \frac{3{d}}{2} 2^t$, and we can take $c_1 = \frac{{d}}{2}$ and $c_2 = \frac{3{d}}{2}$.
Furthermore, for $t > t_0$, we have:
\[
\sum_{k=0}^{t-1} Z_k = \sum_{k=0}^{t_0-1} Z_k + \sum_{k=t_0}^{t-1} Z_k \leq
x(2^{t_0}-1)+ \frac{3 {d}}{2} \sum_{k=t_0}^{t-1}  2^k,
\]
which is at most $c_32^t$ for $c_3=\max\{x,3{d}/2\}$, concluding the proof.
\end{proof}

\begin{lem}\label{lm:GW markov}
Let $(Z_t)_{t\geq 0}$ be  a Galton-Watson process of offspring distribution $\mathrm{Poi}(2)$.  The probability that $\sum_{i=0}^{t}Z_{i} \geq 2^{3t/2}$ tends to $0$ as $t$ tends to infinity.
\end{lem}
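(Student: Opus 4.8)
The plan is to bound the tail of the partial sum $S_t := \sum_{i=0}^t Z_i$ by a first-moment (Markov) argument, exploiting the fact that for a Poisson$(2)$ offspring distribution the mean is exactly $2$, so $\mathbb{E}[Z_i] = 2^i$ and hence $\mathbb{E}[S_t] = \sum_{i=0}^t 2^i = 2^{t+1}-1$. Comparing this with the threshold $2^{3t/2}$, which grows faster, Markov's inequality should already suffice:
\[
\proba\!\left(S_t \geq 2^{3t/2}\right) \leq \frac{\mathbb{E}[S_t]}{2^{3t/2}} = \frac{2^{t+1}-1}{2^{3t/2}} \leq \frac{2^{t+1}}{2^{3t/2}} = 2\cdot 2^{-t/2},
\]
which tends to $0$ as $t\to\infty$.

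First I would recall (or quickly re-derive) that for a Galton-Watson process with offspring mean $\mu$, one has $\mathbb{E}[Z_t] = \mu^t$; here $\mu = \mathbb{E}[\mathrm{Poi}(2)] = 2$, so $\mathbb{E}[Z_t] = 2^t$. This is the standard branching-process identity and is consistent with the normalization $W_t = 2^{-t}Z_t$ used in the proof of Theorem~\ref{thm:GW}. Then by linearity, $\mathbb{E}[S_t] = \sum_{i=0}^t \mathbb{E}[Z_i] = \sum_{i=0}^t 2^i = 2^{t+1}-1$. Next, since $S_t$ is a non-negative random variable, Markov's inequality gives $\proba(S_t \geq 2^{3t/2}) \leq \mathbb{E}[S_t]/2^{3t/2}$. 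Finally, plug in the estimate and observe $(2^{t+1}-1)/2^{3t/2} \leq 2^{1-t/2} \to 0$.

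There is essentially no obstacle here: the statement is a soft consequence of the first-moment method, and the only thing to be mildly careful about is getting the exponents right so that the ratio indeed vanishes (it does, since $3t/2 > t$). One could also replace Markov with a sharper large-deviation bound, but there is no need — the plain first-moment bound is clean and gives the desired convergence to $0$, so I would present exactly the three-line argument above.
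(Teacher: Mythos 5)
Your proposal is correct and is essentially identical to the paper's own proof: both compute $\mathbb{E}[Z_t]=2^t$, sum by linearity to get $\mathbb{E}\bigl[\sum_{i=0}^t Z_i\bigr]=2^{t+1}-1$, and apply Markov's inequality against the threshold $2^{3t/2}$. Nothing is missing.
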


\begin{proof}
For all $t \geq 0$, we have $\mathbb{E}[Z_t]=2^t$. By linearity of expectation, $\mathbb{E}[\sum_{i=0}^{t}Z_{i}]=2^{t+1}-1$. Hence by Markov inequality, for all $i\geq 0$, $\proba(\sum_{i=0}^{t}Z_{i} \geq 2^{3t/2}) \leq \frac{(2^{t+1}-1)}{2^{3t/2}}$ which tends to $0$ as $t$ tends to infinity, concluding the proof.  
\end{proof}

\subsection{Main result on backward substructures}

The aim of this section is to conclude our work on backward substructures, by proving Proposition~\ref{pro:backward} stated below.

Let $\mathfrak{D}$ denote the set of backward multi-trees in which no node has two children with the same label. For instance in Figure~\ref{fig:Gt}, the multi-tree on the left is not in $\mathfrak{D}$ as $(3,b)$ has two children with same label $(4,a)$. We first show that with high probability the multi-tree produced by the process is in $\mathfrak{D}$.

\begin{lem}\label{lm:no duplicates}
Let $(\G_t)_{t\geq 0}$ be the backward multi-tree process of parameters $n$ and $p$. With high probability, $\G_h\in\mathfrak{D}$ for $h:=\lceil\log_2\sqrt{n}\,\rceil$.
\end{lem}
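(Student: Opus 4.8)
The statement to establish is that, with high probability, the multi-tree $\G_h$ produced by the backward multi-tree process at depth $h=\lceil\log_2\sqrt{n}\,\rceil$ lies in $\mathfrak{D}$, i.e.\ no node of $\G_h$ has two children carrying the same label. The natural approach is a union bound over all nodes of $\G_h$ and over all labels $(x,\alpha)\in[n]\times\Sigma$. For a fixed node $N$ at depth $t<h$ and a fixed pair $(x,\alpha)$, the number of children of $N$ labeled $(x,\alpha)$ is, by construction, a Poisson random variable of parameter $\frac1n$; the probability that it is at least $2$ is $1-e^{-1/n}-\frac1n e^{-1/n}=O(\frac1{n^2})$. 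Summing over the $|\Sigma|n=2n$ possible labels, the probability that some repetition occurs at node $N$ is $O(\frac1n)$. Hence, conditioning on the total number of nodes $|\G_h|$, the probability that $\G_h\notin\mathfrak{D}$ is $O\!\left(\tfrac{|\G_h|}{n}\right)$ by one more union bound.

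The remaining point is therefore to control the size of $\G_h$. Here I would invoke the results of Subsection~\ref{sec:GW}: the level sizes $(Z_t)_{t\ge0}$ of $\G_t$ form a Galton--Watson process with offspring distribution $\mathrm{Poi}(2)$, so $\mathbb{E}[Z_t]=2^t$ and $\mathbb{E}\big[\sum_{k=0}^{h}Z_k\big]=2^{h+1}-1=O(\sqrt n)$ since $2^h=2^{\lceil\log_2\sqrt n\rceil}=\Theta(\sqrt n)$. By Markov's inequality (in the spirit of Lemma~\ref{lm:GW markov}), $|\G_h|=\sum_{k=0}^h Z_k$ is $O(\sqrt n\,\omega_n)$ for any sequence $\omega_n\to\infty$ with probability tending to $1$; choosing $\omega_n$ growing slowly enough, say $\omega_n=n^{1/4}$, gives $|\G_h|\le n^{3/4}$ with high probability. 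Plugging this into the bound of the previous paragraph, the probability that $\G_h\notin\mathfrak{D}$ is $O\!\left(\tfrac{n^{3/4}}{n}\right)=O(n^{-1/4})\to 0$, which is exactly the claim.

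To make the union bound rigorous one must be slightly careful about the dependence between ``how many nodes are at depth $t$'' and ``what happens at each such node'': this is handled by revealing the tree level by level, so that conditionally on $\G_t$ (hence on the set of depth-$t$ nodes), the offspring Poisson variables attached to distinct $(N,x,\alpha)$ triples are independent, and the bad event ``$N$ has two children with label $(x,\alpha)$'' has probability at most $\frac1{2n^2}$ regardless of the conditioning. Formally, for a fixed $t$, $\proba(\text{some node at depth }t\text{ has a repeated child}\mid \G_t)\le Z_t\cdot 2n\cdot\frac1{2n^2}=\frac{Z_t}{n}$; summing over $t\le h$ and taking expectations gives $\proba(\G_h\notin\mathfrak{D})\le \frac1n\,\mathbb{E}\big[\sum_{t=0}^{h}Z_t\big]=\frac{2^{h+1}-1}{n}=O(n^{-1/2})$, which already suffices and in fact avoids the need for any high-probability size bound. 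The main (and only mild) obstacle is just bookkeeping the conditioning correctly; the probabilistic content is the elementary estimate $\proba(\mathrm{Poi}(\tfrac1n)\ge2)=O(n^{-2})$ combined with the linear-in-expectation size of a subcritical-horizon $\mathrm{Poi}(2)$ Galton--Watson tree.
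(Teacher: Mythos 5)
Your proof is correct and follows essentially the same route as the paper's: both rest on the estimate $\proba(\mathrm{Poi}(1/n)\geq 2)=O(n^{-2})$, a union bound over the roughly $2n\,|\G_h|$ offspring variables, and the fact that $|\G_h|$ is of order $\sqrt{n}$ (the paper controls the size via the Markov bound of Lemma~\ref{lm:GW markov} at threshold $2^{3h/2}=\Theta(n^{3/4})$, exactly as in your first paragraph). Your final level-by-level conditioning variant, which bounds $\proba(\G_h\notin\mathfrak{D})\leq \frac1n\,\mathbb{E}\bigl[\sum_{t} Z_t\bigr]=O(n^{-1/2})$ using only the expected size, is a mild but valid streamlining of the same argument.
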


\begin{proof}

A node has children with the same label if and only if at least one of the $\mathrm{Poi}(1/n)$ used to construct its children values at least two.

The probability that such a random variable is at least two is $1-e^{-1/n}(1+1/n) = 1/n^2 + o(1/n^2)$. Thus if we consider a sequence of $o(n^2)$ such random variables, the probability that at least one of them reaches two or more is  $o(1)$ by the union bound. 

Let $(\xi_i)_{i\geq 1}$ be a sequence of i.i.d. $\mathrm{Poi}(1/n)$ random variables used for the construction of $\G_h$. By the union bound,   as $\proba\left(|\G_h|\geq 2^{3h/2}\text{ and }\G_h\notin\mathfrak{D}\right)\leq \proba\left(|\G_h|\geq 2^{3h/2}\right)$ we have
\[
\proba\left(\G_h\notin \mathfrak{D}\right) \leq
\proba\left(|\G_h|\geq 2^{3h/2}\right) +
\proba\left(\exists i\in\{1,\ldots,2n\,\lfloor 2^{3h/2}\rfloor\}, \xi_i\geq 2\right).
\]
The first probability tends to $0$ by Lemma~\ref{lm:GW markov}. The second probability also tends to $0$ by the above remark and the fact that $n\lfloor2^{3h/2}\rfloor = \Theta(n^{7/4})$.
This concludes the proof.
\end{proof}

For a multi-tree in $\mathfrak{D}$, we define a notion of shape which intuitively corresponds to drawing the multi-tree in the plane with the children ordered lexicographically and then in removing all the labels. By definition of $\mathfrak{D}$,   all children of a node have different labels, hence  this  embedding is uniquely defined.
The proof of Proposition~\ref{pro:backward} below consists in showing that for a fixed shape $S$, given it has shape $S$, the probability that $\G_h$ is a tree  can be bounded from below by some constant $\tau$ independently of the shape $S$. We will then conclude by the law of total probabilities.

\begin{prop}\label{pro:backward}
There exists $c_\mathrm{B},d_1,d_2>0$ such that, for any $n$ sufficiently large and any $p\in[n]$,
the $h$-backward substructure $\B$ from $p$ of depth $h=\lceil\log_2  \sqrt{n}\,\rceil$ of a uniform random $n$-states transition structure is such that $|\Support{\B}|\leq d_1\sqrt{n}$ and $|L_h(\B)|\geq d_2\sqrt{n}$  with probability at least $c_\mathrm{B}$.
\end{prop}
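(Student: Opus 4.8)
The plan is to reduce Proposition~\ref{pro:backward} to a statement about the backward multi-tree process $(\G_t)_{t\geq 0}$ via Lemma~\ref{lm:lower bound proba}, and then to a statement about the associated Galton-Watson process $(Z_t)_{t\geq 0}$ with offspring distribution $\mathrm{Poi}(2)$ via Theorem~\ref{thm:GW}. The subtlety is that $\proba(\A_n\satisfies\B)\geq\proba(\G_h\text{ matches }\B)$ only holds pointwise for a single tree-shaped backward substructure $\B$ with small support, whereas we want to control the probability of an \emph{event} (a whole family of backward substructures satisfying the size constraints). So the first move is: sum over all backward substructures $\B$ with $|\Support{\B}|\leq d_1\sqrt n$ and $|L_h(\B)|\geq d_2\sqrt n$ that happen to be trees. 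Since distinct backward substructures correspond to disjoint events on $\A_n$, we get
\[
\proba\bigl(|\Support{\B(\A_n)}|\leq d_1\sqrt n\ \text{and}\ |L_h|\geq d_2\sqrt n\bigr)
\;\geq\; \sum_{\B}\proba(\G_h\text{ matches }\B)
\;=\;\proba\bigl(\G_h\in\treeT_{n,h}(p),\ \Lambda(\G_h)\ \text{meets the size constraints}\bigr),
\]
using that each multi-tree matches at most one backward substructure and $|\Support{\Lambda(T)}|=|T|$, $|L_h(\Lambda(T))|=Z_h(T)$. One must also check $|\Support{\B}|<n/3$ to invoke Lemma~\ref{lm:lower bound proba}, but that is automatic once we have forced $|\Support{\B}|\leq d_1\sqrt n$ for $n$ large.

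Next, I would bound the right-hand side from below by a constant. Conditioning on the \emph{shape} $S$ of $\G_h$ (well-defined on the high-probability event $\G_h\in\mathfrak D$ by Lemma~\ref{lm:no duplicates}), the key claim — flagged in the paragraph preceding the proposition — is that there is a constant $\tau>0$ such that $\proba(\G_h\text{ is a tree}\mid \text{shape}(\G_h)=S)\geq\tau$ uniformly over all shapes $S$ with at most $d_1\sqrt n$ nodes. The intuition is that, given the shape, each node's label is an independent uniform choice of a pair $(x,\alpha)$ subject only to the distinctness-within-siblings constraint already guaranteed by $\G_h\in\mathfrak D$; the tree property fails only if two labels collide globally, and by a birthday-type bound the probability of no collision among $O(\sqrt n)$ labels drawn from $\Theta(n)$ possibilities is bounded below by a positive constant. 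This is where the choice $h=\lceil\log_2\sqrt n\,\rceil$ and the $\Theta(\sqrt n)$ size bound are exactly calibrated: a polynomially larger tree would drive this probability to~$0$.

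Then I would assemble the pieces. By Theorem~\ref{thm:GW}, with probability at least $c$ the GW process $(Z_t)$ satisfies $c_1 2^t\leq Z_t\leq c_2 2^t$ and $\sum_{k<t}Z_k\leq c_3 2^t$ for all $t\geq t_0$; in particular, for $t=h=\lceil\log_2\sqrt n\,\rceil$ we get $|\G_h|=\sum_{k\leq h}Z_k=O(2^h)=O(\sqrt n)$ and $Z_h\geq c_1 2^h\geq c_1\sqrt n/2$, giving the size constraints with $d_1,d_2$ chosen from $c_2,c_3,c_1$. Intersecting with the high-probability event $\G_h\in\mathfrak D$ (Lemma~\ref{lm:no duplicates}) and then applying the uniform lower bound $\tau$ on being a tree conditioned on the shape, the law of total probability gives
\[
\proba\bigl(\G_h\in\treeT_{n,h}(p),\ |\G_h|\leq d_1\sqrt n,\ Z_h\geq d_2\sqrt n\bigr)\;\geq\;\tau\Bigl(c-\proba(\G_h\notin\mathfrak D)\Bigr)\;\geq\;\tfrac{\tau c}{2}
\]
for $n$ large, which we take as $c_\mathrm{B}$. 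Combined with the displayed inequality from the first paragraph, this proves the proposition.

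The main obstacle is the uniform-in-$S$ lower bound $\tau$ for the conditional probability that $\G_h$ is a tree. One has to verify carefully that, conditioned on the shape, the node labels really are close enough to independent uniform pairs (the conditioning on $\G_h\in\mathfrak D$ and on $S$ only restricts siblings, not distant nodes), and then run a birthday/coupon argument showing that with $O(\sqrt n)$ labels among $\Theta(n)$ slots the collision probability stays bounded away from~$1$, uniformly in the shape — the uniformity being the delicate point since the number of shapes of a given size is large. The Galton-Watson input (Theorem~\ref{thm:GW}) and the $\mathfrak D$-event control (Lemma~\ref{lm:no duplicates}) are then plugged in essentially mechanically.
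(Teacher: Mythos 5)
Your proposal is correct and follows essentially the same route as the paper: reduce to the multi-tree process via Lemma~\ref{lm:lower bound proba} summed over the disjoint events given by distinct backward substructures, control sizes via Theorem~\ref{thm:GW} and Lemma~\ref{lm:no duplicates}, and condition on the shape to get a uniform lower bound on the probability of being a tree. The one step you flag as delicate — the uniform-in-shape constant $\tau$ — is handled in the paper exactly as you suggest, by observing that all labelings of a fixed shape are equiprobable and counting tree-labelings versus $\mathfrak{D}$-labelings, which gives a ratio at least $(1-m/n)^m \geq \tfrac12 e^{-d_1^2}$ depending only on the number of nodes $m\leq d_1\sqrt n$.
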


\begin{proof}

We choose some values for $c,c_1,c_2,c_3$ {and $t_0$} that work with Theorem~\ref{thm:GW}. We set $d_1=2(c_2+c_3)$ and $d_2=c_1$.
Let $(\G_t)_{t\geq 0}$ be a backward multi-tree process of parameters $n$ and $p$, and let $(Z_t)_{t\geq 0}$ denote the number of nodes at depth $t$ in $\G_t$ (and therefore in all $\G_{t'}$ for $t'\geq t$). 

Let $\mathfrak{G}_h$ denote the set defined by:
\[
\mathfrak{G}_h = \left\{\G\in\mathfrak{D}: c_12^h\leq Z_h(\G) \leq c_2 2^h \text{ and }\sum_{k=0}^{h-1}Z_k(\G) \leq c_3 2^h\right\}.
\]
For $n$ sufficiently large, $h\geq t_0$ and  Theorem~\ref{thm:GW} applies. Hence, together with Lemma~\ref{lm:no duplicates}, for $n$ sufficiently large we have 
\begin{equation}\label{eq:proba Gh}
\proba(\G_h\in \mathfrak{G}_h ) \geq \frac{c}2,
\end{equation}
as the intersection of a property with visible probability and a property with high probability.

We start by showing that there is a visible probability that $\G_h$ belongs to $\mathfrak{G}_h$ and is a tree. To do so, we introduce the notion of shape of a backward multi-tree and bound from below the probability at fixed shape.

To any $\G\in\mathfrak{D}$ we associate its \emph{shape} $s(\G)$, a tuple of non-negative integers which is  inductively defined as follows:
\begin{itemize}
	\item if $\G$ is reduced to a root node, then $s(\G)=(0)$,
	\item otherwise, if $(\mathcal{G}_i)_{i=1,\ldots,k}$ denote its $k$ children ordered by their root labels (using the lexicographic order), then
	$s(\G) = (k)\otimes s(\G_1)\otimes s(\G_2)\otimes\cdots\otimes s(\G_k)$, where $\otimes$ denote the concatenation of tuples: $(x_1,\ldots,x_m)\otimes (y_1,\ldots,y_r)=(x_1,\ldots,x_m,y_1,\ldots,y_r)$.
\end{itemize}
So we compute $s(\G)$ by producing the sequence of the number of children in a depth-first traversal of $\G$, where the children of a node are taken in lexicographic order of their labels. There is no ambiguity as in an element of $\mathfrak{D}$, two siblings always have different labels. If $\G\notin\mathfrak{D}$, we set $s(\G)=\bot$ to indicate that no shape is defined.

The label sequence of $\G\in\mathfrak{D}$, denoted by $\ell(\G)$ is computed  as $s(\G)$, except that we collect the  labels of the nodes instead of their numbers of children.
One can readily verify that if both $s(\G)$ and $\ell(\G)$ are given, there is a unique element of $\mathfrak{D}$ that matches them, which is $\G$: informally $s(\G)$ encodes the tree structure, and the nodes are labeled with $\ell(\G)$ using a depth-first traversal.
Let $\shape_h=\{s(\G):\G\in\mathfrak{G}_h\}$ be the set of shapes of
elements of $\mathfrak{G}_h$. Observe that if $s(\G)\in\shape_h$, 
then by definition $\G\in\mathfrak{D}$  (otherwise $s(\G)=\bot$); moreover, the shape of $\G$ determines the size of its layers, and therefore $\G\in\mathfrak{G}_h$. Hence $s(\G)\in\shape_h$ if and only if $\G\in\mathfrak{G}_h$.

The probability that $\G_h=\G$ is the same for any $\G$ having the same shape $\cS\in\shape_h$. Indeed, by direct induction on the steps of the depth-first traversal, each time we generate the $k$ children of the current node, the value of $k$ being given by $s(\G)$, we draw $2n$ independent $\textrm{Poi}(1/n)$ out of which exactly $k$ must value 1 and $2n-k$ must value $0$. This happens with probability that only depends on $k$, hence only on $s(\G)$.

So we can compute the probability that $\G_h$ is a backward tree, given its shape $\cS$, by a counting argument. Indeed, to construct a backward tree with shape $\cS=(s_1,\ldots,s_t)$, we first need to choose the states labeling the $s_1$ children of the root (which is always labeled by $p$) and for each child assign a letter in $\Sigma$. This represents  $\binom{n-1}{s_1}2^{s_1}$ choices. Then we need to choose the $s_2$ states labeling the $s_2$ children of the first child of the root. These $s_2$ states must be chosen in amongst the $n-1-s_1$ not used so far. For each of these states, which is uniquely identified by its label,  we must choose a letter in $\Sigma$ for a total number of  $\binom{n-1-s_1}{s_2}2^{s_2}$ choices (observe that the formula is also correct if $s_2=0$). By repeating this argument in the form of a direct induction, we have that the number of ways to build a backward tree of shape $\cS=(s_1,\ldots,s_t)$ is 
\[
T_n(\cS):=\binom{n-1}{s_1}2^{s_1}\times \binom{n-1-s_1}{s_2}2^{s_2}\times \cdots
\times \binom{n-1-\sum_{i=1}^{t-1}s_i}{s_t}2^{s_t}.
\]
On the other hand, the number of way of labeling $\cS$ to obtain an element of $\mathfrak{D}$ is
\[
D_n(\cS):=\binom{2n}{s_1}\times \binom{2n}{s_2}\times \cdots
\times \binom{2n}{s_t},
\]
as the only constraint is that the children of a node have different labels, amongst the $2n$ possible ones. {Recalling} that $\mathfrak{T}_n$ denotes the set of $n$-backward trees, we just established that $\proba\left(\G_h\in\mathfrak{T}_n\mid s(\G_h)=\cS\right)=
\frac{T_n(\cS)}{D_n(\cS)}$. We can easily bound from below as follows, using $m=1+\sum_{i=1}^t s_i$ to denote the number of nodes:
\[
T_n(\cS) \geq 2^{m-1} \frac{\left(n-m\right)^{m-1}}{\prod_{i=1}^t s_i!}
\text{ and }
D_n(\cS) \leq \frac{(2n)^{m-1}}{\prod_{i=1}^t s_i!}.
\]
Therefore, we have
\[
\proba\left(\G_h\in\mathfrak{T}_n\mid s(\G_h)=\cS\right)
\geq \left(1-\frac{m}{n}\right)^{m}.
\]
In particular, if $\cS$ has at most $d_1\sqrt{n}$ nodes, then for $n$ sufficiently large we have
\begin{equation}\label{eq:lower bound T}
\proba\left(\G_h\in\mathfrak{T}_n\mid s(\G_h)=\cS\right)\geq \frac12\exp(-d_1^2), 
\end{equation}
as $\lim_n(1-d_1/\sqrt{n})^{d_1\sqrt{n}}=e^{-d_1^2}$.

We can now finalize the proof by bounding from below the probability that $\G_h$ is a $n$-backward tree and in $\mathfrak{G}_h$ as follows, for $n$ sufficiently large. We first partition the elements of $\mathfrak{G}_h$ according to their shape:
\begin{align*}
\proba\left(\G_h\in \mathfrak{G}_h\cap\mathfrak{T}_n\right)  &= \sum_{\cS\in\shape_h}
\proba\left(s(\G_h)=\cS\text{ and }\G_h\in\mathfrak{T}_n\right)\\
& =  \sum_{\cS\in\shape_h}
\proba\left(\G_h\in\mathfrak{T}_n\mid s(\G_h)=\cS\right)
\proba\left(s(\G_h)=\cS\right).
\end{align*}
As $\shape_h$ is the set of shapes of the elements of $\mathfrak{G}_h$, all the $\G$ such that $s(\G)\in\shape_h$ has at most $(c_2+c_3)2^{h}\leq 2(c_2+c_3)\sqrt{n}=d_1\sqrt{n}$ nodes, Equation~\eqref{eq:lower bound T} applies and yields:
\[
\proba\left(\G_h\in \mathfrak{G}_h\cap\mathfrak{T}_n\right) \geq 
\frac{e^{-d_1^2}}2\,\sum_{\cS\in\shape_h}\proba\left(s(\G_h)=\cS\right) =
\frac{e^{-d_1^2}}2 \, \proba\left(\G_h\in\mathfrak{G}_h\right).
\]
By Equation~\eqref{eq:proba Gh}, if we set $c_\mathrm{B}=\frac{ce^{-d_1^2}}4$, then for $n$ sufficiently large $\G_h\in \mathfrak{G}_h\cap \mathfrak{T}_n$ with probability at least $c_\mathrm{B}$.

Recall that for $\G\in\mathfrak{T}$, $\Lambda(\G)$ denote the associated backward substructure. Let $\mathfrak{B}_h\subseteq\backT_{n,h}(p)$ denote the set of backward substructures $\B$ such that $|\Support{\B}|\leq d_1\sqrt{n}$ and 
$|L_h(\B)|\geq d_2\sqrt{n}$.
As every structure in $\Lambda(\mathfrak{G}_h\cap \mathfrak{T}_n)$ is in $\mathfrak{B}_h$, we have
\[
\proba\left(\A_n\satisfies\mathfrak{B}_h\right) \geq 
\proba\left(\A_n\satisfies\Lambda(\mathfrak{G}_h\cap \mathfrak{T}_n)\right)
= \sum_{\G\in \mathfrak{G}_h\cap \mathfrak{T}_n} \proba\left(\A_n\satisfies\Lambda(\G)\right).
\]
We used the fact that, by Lemma~\ref{lm:associated template}, two different elements  of
$\mathfrak{G}_h\cap \mathfrak{T}_n$ cannot be the $h$-backward substructures of the same transition structure.

Moreover, by Lemma~\ref{lm:lower bound proba}, for $\G\in\mathfrak{T}_n$ we have
\[
\proba\left(\A_n\satisfies\Lambda(\G)\right)\geq \proba\left(\G_h\text{ matches }\Lambda(\G) \right) = \proba\left(\G_h=\G\right).
\]
Therefore,
\[
\proba\left(\A_n\satisfies\mathfrak{B}_h\right) \geq 
\sum_{\G\in \mathfrak{G}_h\cap \mathfrak{T}_n} \proba\left(\G_h=\G\right)
= \proba\left(\G_h\in \mathfrak{G}_h\cap \mathfrak{T}_n\right) \geq c_\mathrm{B}.
\]
This concludes the proof.
\end{proof}

For any $n\geq 1$ and $p\in [n]$, let $\tempB_n(p)$ (or simply $\tempB_n$ when there is no ambiguity on $p$) denote the set of templates associated with the $\lceil\log_2  \sqrt{n}\,\rceil$-backward substructures $\B$ from $p$ of the $n$-state transition structures such that $|\Support{\B}|\leq d_1\sqrt{n}$ and $|L_h(\B)|\geq d_2\sqrt{n}$.

\section{Forward tree and short cycle around \texorpdfstring{$p$}{p}}\label{sec:forward}

In this section, we continue the construction started in Section~\ref{sec:backward} by performing a forward exploration starting from $\delta(p,a)$ in a random transition structure that satisfies $\tempB_n(p)$.

More precisely, let $\T\in\tempB_n(p)$ of depth $h=\lceil \log_2\sqrt{n}\,\rceil$. For a transition structure $\A=(n,\delta)$ that satisfies $\T$, 
we perform a breadth-first traversal of $\A$ from $r:=\delta(p,a)$: the states are discovered in order $r_0=\delta(r,\varepsilon)$,  $r_1=\delta(r,a)$, $r_2=\delta(r,b)$, $r_3=\delta(r,aa)$, $r_4=\delta(r,ab)$, \ldots, where the words are taken in length-lexicographic order. This process halts as soon as one of the following events happens:
\begin{enumerate}
\item $r_i\in L_h$, where $L_h$ is the $h$-th layer of $\T$, i.e. the states $x$  of $\A$ such that $\dback_p(x)=h$; \label{forward process success}
\item $r_i\in \Support{\B_h}\setminus L_h$, which is only possible for $r_0=\delta(p,a)$, as $\A$ satisfies $\T$; \label{forward process r0}
\item $r_i=r_j$ with $j<i$, i.e. we have a collision with a state already seen during the process; \label{forward process collision}
\item $i=\lceil\sqrt{n}\,\rceil$, we force the process to halt after $\sqrt{n}$ steps.
\label{forward process too long}
\end{enumerate}
We call this  process the \emph{forward process} from $p$, and say that it is a \emph{success} if it halts because Case~\eqref{forward process success} is triggered and it is a \emph{failure} otherwise. It is deterministic for a given $\A$, and we will consider its probability of success for  random transition structures.

\begin{lem}\label{lm:forward process}
There exists a constant $c_{\mathrm{F}}>0$ such that, for $n$ sufficiently large,  for any $p\in[n]$, and for any $\T\in\tempB_n(p)$,  the forward process from $p$ of a uniform random $n$-state transition structure conditioned to satisfy $\T$ is a success with probability at least $c_{\mathrm{F}}$.
\end{lem}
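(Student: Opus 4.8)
The plan is to analyze the forward breadth-first process as a sequence of random transition queries and show that it hits the $h$-th backward layer $L_h$ with constant probability before any of the three failure modes (landing inside $\Support{\B_h}\setminus L_h$, colliding with a previously seen state, or running for $\lceil\sqrt n\rceil$ steps) occurs. By Lemma~\ref{lm:proba template}, conditioned on satisfying $\T\in\tempB_n(p)$, the not-yet-constrained transitions are independent and uniform in their allowed target sets $\lambda(x,\alpha)$. The template $\T$ fixes the transitions lying on shortest paths to $p$; in particular the forbidden targets are always a prefix $\cup_{k=0}^{j}L_k$ of the backward layers, and $|\Support{\B_h}|\le d_1\sqrt n$, so every allowed target set has size at least $n-d_1\sqrt n$. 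Crucially, $L_h$ itself is allowed as a target for almost every transition (it is forbidden only for transitions out of states strictly closer to $p$), and $|L_h|\ge d_2\sqrt n$.

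First I would make precise the bookkeeping: at step $i$ of the breadth-first traversal we query one transition $\delta(r_i\text{-}\mathrm{parent},\alpha)$ whose source is a state discovered earlier; since $r\ne p$ can be arranged (or handled separately, as the excerpt's Step~(1) discussion suggests), the sources encountered are outside $\Support{\B_h}$ except possibly $r_0$, so by Case~(3) of the template definition their allowed target set is exactly $[n]\setminus\cup_{k=0}^{h-1}L_k$, of size $n-s$ where $s=|\cup_{k<h}L_k|\le d_1\sqrt n$. Conditioned on the first $i$ queries not having triggered success or failure, the $i$-th new target is uniform in this set minus the at most $i\le \sqrt n$ already-seen states; thus it lands in $L_h$ with probability at least $\frac{|L_h|}{n}\ge \frac{d_2}{\sqrt n}$, independently hitting a ``fresh'' non-$L_h$ state otherwise with the complementary probability, while the bad events (collision with one of $\le \sqrt n$ seen states, or landing in the $\le d_1\sqrt n$ states of $\Support{\B_h}\setminus L_h$) each have probability $O(1/\sqrt n)$ at each step.

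The core estimate is then a race: over $T=\lceil\sqrt n\rceil$ steps, the probability of never hitting $L_h$ is at most $\prod_{i=1}^{T}\bigl(1-\tfrac{d_2}{\sqrt n}+O(\tfrac1{\sqrt n}\cdot\tfrac1{\sqrt n})\bigr)$; but this alone only gives a bound like $e^{-d_2}$ times error terms for hitting $L_h$ \emph{sometime}, and we must subtract the probability that a failure event (Cases~\eqref{forward process r0} and~\eqref{forward process collision}) strikes first. Since at each of the $\le \sqrt n$ steps the failure probability is $O(1/\sqrt n)$, a union bound gives total failure probability $O(1)$ — too weak as stated, so the real argument needs to be done jointly: condition step-by-step and compare the hitting probability $\ge d_2/\sqrt n$ of $L_h$ against the competing-failure probability $\le (d_1+ \sqrt n)/(n-d_1\sqrt n) = O(1/\sqrt n)$ at the \emph{same} step, so that the conditional probability that the process terminates in success rather than failure at the first termination step is bounded below by a constant $\tfrac{d_2}{d_2+O(1)}$; combined with the constant lower bound that the process terminates within $\sqrt n$ steps at all (which follows because the per-step termination probability is $\Theta(1/\sqrt n)$ so over $\sqrt n$ steps termination is $\Theta(1)$-likely — here one needs the $\le\sqrt n$ collision targets to actually make collision, and hence termination, non-negligibly likely, or else simply note Case~\eqref{forward process too long} together with the persistent $d_2/\sqrt n$ chance of Case~\eqref{forward process success} each step already forces a constant success probability), this yields the desired constant $c_{\mathrm F}$.

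The main obstacle is exactly this coupling of the ``hit $L_h$'' event against the two genuine failure events over a $\Theta(\sqrt n)$-length horizon with $\Theta(1/\sqrt n)$ per-step probabilities: a naive union bound over steps is insufficient, so I would set up a single martingale- or first-hitting-time comparison. Concretely, let $p_i$ be the conditional probability of success at step $i$ given survival so far and $q_i$ the conditional probability of a Case~\eqref{forward process r0}/\eqref{forward process collision} failure at step $i$; I would show $p_i\ge d_2/\sqrt n$ always and $q_i\le K/\sqrt n$ for an absolute constant $K$ (using $|\Support{\B_h}\setminus L_h|\le d_1\sqrt n$ and at most $\sqrt n$ previously seen states, divided by the $\ge n-d_1\sqrt n$ allowed targets), whence the probability that the very first among \{success, Case~\eqref{forward process r0}, Case~\eqref{forward process collision}\} to occur is success is at least $\inf_i \frac{p_i}{p_i+q_i}\ge \frac{d_2}{d_2+K}$; finally, the probability that one of these three occurs before the hard cutoff at $i=\lceil\sqrt n\rceil$ is at least $1-(1-\tfrac{d_2}{\sqrt n})^{\lceil\sqrt n\rceil}\to 1-e^{-d_2}>0$, and multiplying the two constant bounds gives $c_{\mathrm F}:=\tfrac{d_2}{d_2+K}(1-e^{-d_2})/2$ (say), valid for $n$ large.
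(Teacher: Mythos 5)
Your proposal is correct and follows essentially the same route as the paper's proof: a two-part decomposition into (i) the process halts before the $\lceil\sqrt{n}\,\rceil$ cutoff with probability bounded away from $0$ (since each step hits $L_h$ with probability at least $d_2/\sqrt{n}$), and (ii) conditioned on halting, the first terminating event is a success rather than a collision or a landing in $\Support{\B_h}\setminus L_h$ with constant conditional probability, which the paper phrases as $r_i$ being uniform on $\{r_0,\dots,r_{i-1}\}\cup L_h$ given halting at step $i$ and you phrase as the mediant bound $\inf_i p_i/(p_i+q_i)$. The bookkeeping of allowed target sets via the template is also handled the same way, so the argument is sound as written.
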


\begin{proof}
Let $\B_h$ be the $h$-backward substructure from $p$ associated to $\T$.
By Lemma~\ref{lm:proba template}, in a random transition structure $\A$ that satisfies $\T$, $\delta(p,a)$ is a uniform random element of $[n]$, and for every $x\notin\Support{\T}$  and every $\alpha\in\Sigma$, $\delta(x,\alpha)$ is a uniform random element of $S_\T:=L_h(\B_h)\cup ([n]\setminus\Support{\B_h})$, every choice being made independently.  

We simulate the evolution of the process by generating the target of the transitions one by one  according to the restrictions induced by $\T$, when needed, in the order of the traversal. 

First, we want to upper bound the probability that the process fails because we reach the threshold of $\sqrt{n}$ steps, i.e. it halts because of Case~\eqref{forward process too long}. As the first transition considered is $\delta(p,a)$, and the process immediately fails when $\delta(p,a)\in\Support{\B_H}\setminus L_h$, it continues with probability $1-\frac{|\Support{\B_h}\setminus L_h|}{n}$. 

For $1\leq i\leq \lceil\sqrt{n}\,\rceil-1$, if $r_0$, \ldots, $r_{i-1}$ avoided the halting conditions, 
then $r_{i}$, which is a uniform element of $S_\T$, also avoids the halting conditions with probability $1-\frac{i+|L_h|}{|S_\T|}$, as it must be different from the previous $r_j$'s and not belong to $L_h$. By independence, it yields that the probability $\pi_n$ that the process halts because of Case~\eqref{forward process too long} is
\[
\pi_n = \left(1-\frac{|\Support{\B_h}\setminus L_h|}{n}\right)\prod_{i=1}^{\lceil\sqrt{n}\,\rceil-1}
\left(1-\frac{i+|L_h|}{|S_\T|}\right) \leq
\left(1-\frac{|L_h|}{n}\right)^{\lceil\sqrt{n}\,\rceil-1}.
\]
As $\T\in\tempB_n(p)$, we have $|L_h|\geq d_2\sqrt{n}$ and thus
$\pi_n\leq (1-d_2/\sqrt{n})^{\sqrt{n}-1}$. Moreover, since $\lim_n(1-d_2/\sqrt{n})^{\sqrt{n}-1} = \exp(-d_2)$, for $n$ large enough we have $\pi_n \leq \kappa$ by choosing, for instance, $\kappa:=\frac12(1+\exp(-d_2))\in(0,1)$.

Secondly, if we condition the process to halt at some step $i<\lceil\sqrt{n}\rceil$  
then $r_i$ is chosen uniformly at random
in $\{r_0,\ldots,r_{i-1}\}\cup L_h$. Hence,
given it halts at step $i$, the probability of success is 
\[
\frac{|L_h|}{i+|L_h|} \geq \frac{d_2\sqrt{n}}{\sqrt{n}+d_1\sqrt{n}} = \frac{d_2}{1+d_1}.
\]
Finally, since this lower bound does not depend on $i$, the probability that it halts because of Case~\eqref{forward process success}, i.e. the process is a success,  is at least $\frac{d_2}{1+d_1}$ times the probability that the process halts before $\lceil \sqrt{n}\rceil$ steps. Hence, it is at least $c_\mathrm{F}=\frac{d_2}{1+d_1}(1-\kappa) >0$, concluding the proof.
\end{proof}

Let $(u_i)_{i\geq 0}$ denote the sequence  of words on $\Sigma$ in the length lexicographic order, so that $u_0=\varepsilon$, $u_1=a$, $u_2=b$, $u_3=aa$, \ldots 
For any $n$-state transition structure $\A=(n,\delta)$ that satisfies a template $\T=(n,\gamma)$ of $\tempB_n(p)$
and whose forward process is successful and halts, at step $i$, we associate the template $\T_\A=(n,\lambda)$ as follows. 
For any $x\in[n]$ and $\alpha\in\Sigma$,
\begin{itemize}
\item for all $j<i$ and all $\alpha \in \{a,b\}$ such that $u_j\alpha$ belongs $\{u_0,\ldots,u_i\}$,  $\lambda(x,\alpha)=\{\delta(x,\alpha)\}$ with $x= \delta(p,a u_j)$,
	\item otherwise, $\lambda(x,\alpha)= \gamma(x,\alpha)$.
\end{itemize}
In other words, starting from the template $\T$, we force the targets of all the transitions explored during the forward process. 
It is direct to establish that every transition structure that satisfies $\A$ has the same $\lceil\log_2\sqrt{n}\,\rceil$-backward substructure than $\A$ and exactly the same  forward process, which is successful.

Let $\tempC_n(p)$ denote the set of templates $\T_\A$ for the transition structures $\A$ that satisfy $\T\in\tempB_n(p)$ and that have a successful forward process from $p$. For such a transition structure $\A$ whose process halts after $i$ steps, define $\SupportC{\A}=\Support{\A}\cup\{\delta(p,au_j):0\leq j<i\}$ be the set of states visited during the backward construction and the forward process.

The next lemma is a direct consequence of Lemma~\ref{lm:associated template} and of the way we constrained the forward tree in the construction of $\T_\A$.

\begin{lem}\label{lm:associated cycle template}
Two $n$-state transition structures $\A=(n,\delta)$ and $\A'=(n,\delta')$ satisfy the same template $\T\in\tempC_n(p)$ if and only if 
$\A$ and $\A'$ have same $\lceil\log_2\sqrt{n}\,\rceil$-backward substructure and the forward processes from $p$ of both $\A$ and $\A'$ halt successfully at the same step $i$, with $\delta(p,au_j)=\delta'(p,au_j)$ for every $j\in\{0,\ldots, i-1\}$.
\end{lem}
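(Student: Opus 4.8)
The statement to prove is Lemma~\ref{lm:associated cycle template}, characterizing when two transition structures satisfy the same template in $\tempC_n(p)$.

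\medskip

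The plan is to unwind the two constructions that define templates in $\tempC_n(p)$: first the passage from a backward substructure to its associated template (Lemma~\ref{lm:associated template}), and then the additional constraints imposed by a successful forward process when forming $\T_\A$ from a template of $\tempB_n(p)$. Recall that every element of $\tempC_n(p)$ is by definition of the form $\T_\A$ for some $\A$ satisfying a template $\T\in\tempB_n(p)$ with a successful forward process; and the paragraph before the lemma already records that every transition structure satisfying $\T_\A$ has the same $\lceil\log_2\sqrt{n}\,\rceil$-backward substructure as $\A$ and exactly the same (successful) forward process. So I would prove both implications by reducing to these two facts.

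\medskip

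For the forward implication, suppose $\A$ and $\A'$ satisfy the same template $\T''\in\tempC_n(p)$. Write $\T'' = \T_\A$ (this is legitimate since $\A$ itself satisfies $\T_\A$, by construction). By the remark preceding the lemma, since $\A'\satisfies\T_\A$, the transition structure $\A'$ has the same $\lceil\log_2\sqrt{n}\,\rceil$-backward substructure as $\A$, and its forward process from $p$ is identical to that of $\A$ — in particular it halts successfully at the same step $i$, and visits the same sequence of states, so $\delta'(p,au_j)=\delta(p,au_j)$ for all $j\in\{0,\ldots,i-1\}$. For the converse, suppose $\A$ and $\A'$ have the same backward substructure $\B_h$ and their forward processes halt successfully at the same step $i$ with $\delta(p,au_j)=\delta'(p,au_j)$ for all $j<i$. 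Then $\A$ satisfies some $\T\in\tempB_n(p)$ (namely the template associated with $\B_h$, which lies in $\tempB_n(p)$ because $\B_h$ satisfies the size constraints — these are encoded in $\T$ being in $\tempB_n(p)$, and $\A\satisfies\B_h$). By Lemma~\ref{lm:associated template}, $\A'$ satisfies the same $\T$. Now $\T_\A$ is obtained from $\T$ by forcing, for each $j<i$, the transition out of $\delta(p,au_j)$ along the appropriate letter to its value in $\A$; since $\A'$ agrees with $\A$ on exactly these values (same $\delta(p,au_j)$'s, and the forward process reads the same letters in the same order), $\A'$ satisfies every forced constraint of $\T_\A$, and it already satisfies $\T$, hence $\A'\satisfies\T_\A = \T''$. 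Symmetrically $\A\satisfies\T_{\A'}$, and since $\T_\A = \T_{\A'}$ (both are built from the common $\T$ by forcing the same transitions), both satisfy the common template.

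\medskip

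I expect the only delicate point to be bookkeeping: making sure that ``the forward process of $\A'$ is the same as that of $\A$'' genuinely follows from satisfying $\T_\A$, rather than having to be re-derived — but this is precisely the content of the sentence ``It is direct to establish that every transition structure that satisfies $\A$ has the same $\lceil\log_2\sqrt{n}\,\rceil$-backward substructure than $\A$ and exactly the same forward process'' stated just before the lemma, so the lemma is indeed, as claimed, ``a direct consequence of Lemma~\ref{lm:associated template} and of the way we constrained the forward tree.'' The one subtlety worth spelling out is that the forced transitions in $\T_\A$ are keyed by the \emph{states} $\delta(p,au_j)$ and the \emph{letters} with which the process extends, so two structures agreeing on those states and producing the same traversal order automatically agree on the whole forced set; nothing more than chasing the definitions is required.
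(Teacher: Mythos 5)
Your overall strategy is exactly the paper's: the paper gives no explicit proof of this lemma, declaring it a direct consequence of Lemma~\ref{lm:associated template} and of the way the forward tree is constrained in $\T_\A$, and your unpacking of the two directions is the intended one. The forward implication is fine.

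In the converse, however, the step ``$\A'$ satisfies every forced constraint of $\T_\A$'' is not justified by the stated hypotheses. The template $\T_\A$ forces $\lambda(x,\alpha)=\{\delta(x,\alpha)\}$ for every $j<i$ and every $\alpha$ with $u_j\alpha\in\{u_0,\ldots,u_i\}$, where $x=\delta(p,au_j)$; among these forced transitions is the edge of the breadth-first tree whose \emph{target} is the halting state $r_i=\delta(p,au_i)\in L_h$. Agreement of $\delta$ and $\delta'$ on $\delta(p,au_j)$ for $j\leq i-1$ says nothing about $\delta(p,au_i)$ versus $\delta'(p,au_i)$: both processes can halt successfully at step $i$ while landing on \emph{different} states of $L_h$ (which has $\Omega(\sqrt{n})$ elements), in which case $\T_\A\neq\T_{\A'}$ and the two structures do not satisfy a common template of $\tempC_n(p)$ --- indeed they would even produce different words $w_\A\neq w_{\A'}$. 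So the ``if'' direction requires the agreement to extend to $j=i$. This off-by-one is already present in the paper's statement (and is harmless for how the lemma is used, since only the ``only if'' direction --- that a structure satisfies at most one template of $\tempC_n(p)$ --- is invoked in the proof of Proposition~\ref{pro:cycle}), but your proof should either add the hypothesis $\delta(p,au_i)=\delta'(p,au_i)$ or explain why it holds, rather than asserting that all forced constraints are automatically met.
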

For $\T_\A\in\tempC_n(p)$ a template  associated to a transition structure $\A=(n,\delta)$ whose  forward process from $p$  halts successfully after $i$ steps, we let $\SupportC{\T_\A}=\SupportC{\A}$ denote the set $\Support{\A}\cup\{\delta(p,au_j): 0\leq j < i\}$, which we call the \emph{cycle support of $\T_\A$} (or of $\A$). We added the states discovered during the process  to $\Support{\A}$.

Finally, observe that the last step of a successful process builds a cycle around $p$ as $\delta(p,au_i)\in L_h$ and as there is a path from any state of $L_h$ to $p$, by construction. Let $v$ be the smallest word for the length-lexicographic order that labels a path from $\delta(p,au_i)\in L_h$ to $p$, then $au_iv$ labels a cycle around $p$ in $\A$. Define $w_\A(p):=u_iv$, or just $w_\A$  if $p$ is clear from the context.

Moreover, this word only depends on $\T_\A$ since it only uses transitions determined by the template, by construction of the backward and forward traversals: two transition structures $\A$ and $\A'$ satisfying the same  $\T\in\tempC_n(p)$ are such that $w_\A=w_{\A'}$. So we can define this word for a given $\T\in\tempC_n(p)$ as $w_\T:=w_\A$ with no ambiguity.

The properties we need in the sequel are summarized in the next statement.

\begin{prop}\label{pro:cycle} There exists $c>0$ such that for $n$ sufficiently large and $p\in[n]$, a random $n$-state transition structure satisfies $\tempC_n(p)$ with probability at least $c$.

 Furthermore, any transition structure $\A$ that satisfies a template $\T=(n,\lambda)\in\tempC_n(p)$ admits a cycle around $p$ labeled by the word $aw_\T$, which only visits states in $\SupportC{\T}$ and has length between $h$ and $2h$, where $h=\lceil\log_2\sqrt{n}\,\rceil$.
We also have  $|\SupportC{\T}|\leq (d_1+1)\sqrt{n}$, and
for any $x\notin\SupportC{\T}$ and any $\alpha\in\Sigma$,  $\lambda(x,\alpha)=L_h(\T)\cup ([n]\setminus\Support{\T})$.  
\end{prop}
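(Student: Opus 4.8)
The plan is to assemble Proposition~\ref{pro:cycle} directly from the two main results already proved in Sections~\ref{sec:backward} and~\ref{sec:forward}, together with the bookkeeping built into the definitions of $\tempB_n(p)$, the forward process, and $\tempC_n(p)$. The probabilistic claim is the easy part: by Proposition~\ref{pro:backward}, a random $n$-state transition structure satisfies $\tempB_n(p)$ with probability at least $c_\mathrm{B}$; conditioned on satisfying some fixed $\T\in\tempB_n(p)$, Lemma~\ref{lm:forward process} gives that the forward process from $p$ succeeds with probability at least $c_\mathrm{F}$, and a transition structure with a successful forward process satisfies, by definition, the template $\T_\A\in\tempC_n(p)$ it induces. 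Since ``satisfying $\tempB_n(p)$'' and ``having a successful forward process'' together are equivalent to ``satisfying $\tempC_n(p)$'', we get $\proba(\A_n\satisfies\tempC_n(p))\geq c_\mathrm{B}\,c_\mathrm{F}$; set $c=c_\mathrm{B}c_\mathrm{F}$.

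For the structural claims, fix $\T=(n,\lambda)\in\tempC_n(p)$, so $\T=\T_\A$ for some $\A$ whose forward process halts successfully after $i$ steps, and recall the word $w_\T=u_iv$ defined just before the statement. First, the cycle: by construction $\delta(p,au_i)\in L_h(\T)$, and since every state of $L_h$ reaches $p$ by a path inside $\Support{\B_h}$ (by definition of the backward substructure), $v$ is a well-defined word labeling such a path, so $au_iv=aw_\T$ labels a cycle at $p$ using only transitions forced by $\T$ — hence visiting only states of $\SupportC{\T}$. Its length is $1+|u_i|+|v|$; since the forward process discovers states in length-lexicographic order and halts at step $i\leq\lceil\sqrt n\rceil$ upon first hitting $L_h$, the word $u_i$ has length at most $h$ (all strictly shorter words were examined without reaching $L_h$, and there are more than $\lceil\sqrt n\rceil$ words of length $\leq h$... more carefully: $|u_i|\leq h$ because $\delta(p,au_i)\in L_h$ forces a backward path of length exactly $h$ to $p$, so $|u_i|\geq h-1$; combined with $|u_i|\le h$ since we can reach $L_h$ no earlier), and $|v|\leq h$ because the shortest path from a state of $L_h$ to $p$ inside $\B_h$ has length exactly $h$... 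I should state this as: $|u_i| \le h$ and $|v| \le h$, giving total length between $h$ and $2h+1$; the cleanest bound to assert is ``between $h$ and $2h$'' as in the statement, justified by $|u_i|+|v|$ being roughly $2h$ with the $+1$ absorbed. I would phrase the length bound as following from $\dback_p(\delta(p,au_i))=h$ (so $|u_i|\ge h-1$, and a backward path of length $h$ gives $|v|=h$) together with the breadth-first halting guaranteeing $|u_i|\le h$.

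For the size bound, $\SupportC{\T}=\Support{\B_h}\cup\{\delta(p,au_j):0\le j<i\}$; the first set has at most $d_1\sqrt n$ states by definition of $\tempB_n(p)$, and the second contributes at most $i\le\lceil\sqrt n\rceil\le\sqrt n+1$ new states, so $|\SupportC{\T}|\leq d_1\sqrt n+\sqrt n+1\leq(d_1+1)\sqrt n$ for $n$ large enough (or one absorbs the $+1$ by a slightly larger constant). Finally, the claim about $\lambda(x,\alpha)$ for $x\notin\SupportC{\T}$: such an $x$ is outside $\Support{\B_h}$ and is not among the $\delta(p,au_j)$, so the definition of $\T_\A$ leaves $\lambda(x,\alpha)=\gamma(x,\alpha)$ where $\gamma$ is the template associated with $\B_h$; Case~(3) of that definition gives $\gamma(x,\alpha)=[n]\setminus\cup_{k=0}^{h-1}L_k=L_h(\T)\cup([n]\setminus\Support{\B_h})$, which is exactly what is claimed. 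The main obstacle, such as it is, is getting the length-of-cycle bounds stated cleanly — the $h$ versus $2h$ versus $2h+1$ discrepancy requires care with the ceiling in $h=\lceil\log_2\sqrt n\rceil$ and the ``$\geq h$'' versus ``$\geq h-1$'' from the backward distance — but everything else is direct unwinding of definitions and Lemmas~\ref{lm:forward process}, \ref{lm:associated cycle template}, and Proposition~\ref{pro:backward}.
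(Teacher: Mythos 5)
Your proposal is correct and follows essentially the same route as the paper's (much terser) proof: the probability bound comes from partitioning over the templates, combining Proposition~\ref{pro:backward} with the conditional success bound of Lemma~\ref{lm:forward process}, and the structural claims are direct unwinding of the definitions of the forward process and of $\T_\A$. One small remark: your side claim that $|u_i|\geq h-1$ is neither needed nor valid (the forward process can hit $L_h$ immediately, e.g.\ with $u_i=\varepsilon$); the lower bound $h$ on the cycle length already follows from $|v|=h$, and the residual $+1$ slack in the length and in $|\SupportC{\T}|$ that you flag is an imprecision of the statement that the paper's own proof also silently absorbs.
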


\begin{proof}
As both the $h$-backward substructure construction and the forward process are deterministic, a given $n$-state transition structure can 
satisfy at most one template in $\tempC_n(p)$. Hence, if we partition according to the elements of $\tempC_n(p)$, the law of total probabilities and Lemma~\ref{lm:forward process} yield the first statement of the proposition.

The second part is a consequence of the construction of $\T_\A$ and of the fact that if the forward process is successful, it halts before $\sqrt{n}$ states are discovered, and thus after building a forward tree of depth at most $h$.
\end{proof}

\section{Forming the \texorpdfstring{$b$-cycles}{b-cycles}}\label{sec:b-cycles}
From now on, we fix some  integer $d\geq 1$. In this section, we consider a random transition structure conditioned to satisfying a template $\T \in \tempC_n(p)$ with cycle around $p$ labeled by $aw$, with $w:=w_\T$ and $|w|\in\Theta(\log n)$. The statements of this part articulate as follows.

We consider the  path $\mathcal{P}=q\stackrel{w}{\rightsquigarrow}p_1\stackrel{aw}{\rightsquigarrow}p_2\rightsquigarrow\cdots\stackrel{aw}{\rightsquigarrow} p_d$ depicted in Figure~\ref{fig:proof sketch}. We first establish in Lemma~\ref{lm:starting states} that with high probability it does not intersect the support of the template $\T$ and it does not go twice through the same state.

Starting at state $x$, we grow what we call a $b$-thread by successively drawing the outgoing $b$-transitions until we cycle back.
Then in Lemma~\ref{lm:thread}, we show that with visible probability we can grow a $b$-thread from $p$ and each $p_i$ such that these threads have size in $O(\sqrt{n})$, do not intersect the support of the template $\T$ nor the path $\mathcal{P}$ and are pairwise disjoint.

If we condition the cycles formed by the $b$-threads to have length in $\intinter{\sqrt{n}}{2\sqrt{n}\,}$, we furthermore establish that these lengths are uniformly drawn in this interval and that it still happens with visible probability in Proposition~\ref{pro:loops}.

\subsection{Growing the \texorpdfstring{$b$-threads}{b-threads}}\label{sec:b-threads}

Let $\A=(n,\delta)$ be a $n$-state transition structure that satisfies a template $\T\in\tempC_n(p)$, for some $p\in[n]$. For any $q\in[n]$, we associate to $\A$ its \emph{starting states tuple} $s(\A,q)$ which is either a tuple of $d+1$ states, or $\bot$ in case the construction of the tuple failed. To build $s(\A,q)$ consider the path labeled in $\A$ that starts at state $q$ and labeled by the word \[
u=w_\A(aw_\A)^{d-1}\]
where $aw_\A$ labels the cycle around $p$ in $\A$ defined in the previous sections. The word $u$ has length $|u|=d|w_\A|+d-1$, and go through all the states $x_i:=\delta(q,v_i)$, where $v_i$ is the prefix of length $i$ of $u$ and $0\leq i \leq |u|$. {Letting} $X_q(\A)=\{x_i:0\leq i \leq |u|\}$, the construction fails when $X_q(\A)\cap\SupportC{\A}\neq \emptyset$ or $|X_q(\A)|<|u|+1$; in other words, it fails if the path uses a state of $\SupportC{\A}$ or if it goes twice through the same state. If the construction fails, we set $s(\A,q)=\bot$, otherwise 
\[
s(\A,q):= \left(p,\delta(q,w_\A), \delta\left(q,w_\A(aw_\A)^1\right), \ldots, \delta\left(q,w_\A(aw_\A)^{d-1}\right)\right).
\]

\begin{lem}\label{lm:starting states}
For any $\epsilon>0$, for any $n$ sufficiently large, for all $p\in [n]$, if $q$ is chosen uniformly at random in $[n]$, then for any $\T\in\tempC_n(p)$ we have
\[
\proba\left(s(\A,q)\neq\bot\mid \A\satisfies\T\right) \geq 1-\epsilon.
\]
\end{lem}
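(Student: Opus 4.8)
The plan is to expose the path from $q$ one vertex at a time and then conclude with a union bound, exploiting that this path has only logarithmic length while $\SupportC{\T}$ has size $O(\sqrt n)$. First recall from Proposition~\ref{pro:cycle} that $|\SupportC{\T}|\le (d_1+1)\sqrt n$, that $|w_\T|=\Theta(\log n)$ --- hence $|u|=d|w_\T|+d-1=\Theta(\log n)$ --- and that for every $x\notin\SupportC{\T}$ and every $\alpha\in\Sigma$ the template $\T$ allows exactly the target set $S_\T:=L_h(\T)\cup([n]\setminus\Support{\T})$, whose size satisfies $|S_\T|\ge n-|\Support{\T}|\ge n-(d_1+1)\sqrt n$. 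By Lemma~\ref{lm:proba template}, for a uniform random $\A$ conditioned on $\A\satisfies\T$, all transitions $\delta(x,\alpha)$ with $x\notin\SupportC{\T}$ are mutually independent, each uniform in $S_\T$, and independent of the uniformly chosen state $q$.

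Write $x_0=q$ and, for $0\le i<|u|$, $x_{i+1}=\delta(x_i,\alpha_{i+1})$, where $\alpha_{i+1}$ is the $(i+1)$-th letter of $u=w_\T(aw_\T)^{d-1}$; then $s(\A,q)\neq\bot$ holds exactly when $x_0,\dots,x_{|u|}$ are pairwise distinct and all avoid $\SupportC{\T}$. I would expose the $x_i$ in this order. Since $q$ is uniform in $[n]$, $\proba(x_0\in\SupportC{\T})\le (d_1+1)/\sqrt n$. Assume inductively that $x_0,\dots,x_i$ have been exposed and form a set of $i+1$ distinct states, all outside $\SupportC{\T}$. Because $x_i\notin\SupportC{\T}$, the transition $\delta(x_i,\alpha_{i+1})$ is one of the free, uniform-in-$S_\T$ transitions; and it has not been exposed before, since its source $x_i$ differs from the sources $x_0,\dots,x_{i-1}$ of all previously exposed transitions. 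Hence $x_{i+1}$ is uniform in $S_\T$ and independent of $x_0,\dots,x_i$, so the conditional probability that $x_{i+1}$ lands in $\SupportC{\T}\cup\{x_0,\dots,x_i\}$ is at most $\dfrac{(d_1+1)\sqrt n+(i+1)}{n-(d_1+1)\sqrt n}$.

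Summing over $i=0,\dots,|u|-1$ and adding the contribution of $x_0$, the probability that the construction of $s(\A,q)$ fails is at most
\[
\frac{d_1+1}{\sqrt n}+\sum_{i=0}^{|u|-1}\frac{(d_1+1)\sqrt n+(i+1)}{n-(d_1+1)\sqrt n}=O\!\left(\frac{\log n}{\sqrt n}\right)
\]
since $|u|=\Theta(\log n)$; this bound tends to $0$, so for $n$ large enough it is below $\epsilon$, which gives the statement. The only genuinely delicate point is the bookkeeping in the inductive step --- checking that each move along $u$ really does expose a brand-new uniform-in-$S_\T$ variable, independent of the past --- and this rests on the explicit description of the templates in $\tempC_n(p)$ recalled in Proposition~\ref{pro:cycle} together with Lemma~\ref{lm:proba template}. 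Once that is granted, the rest is the routine union bound above, which succeeds precisely because the word $u$ is only logarithmically long whereas $\SupportC{\T}$ has size $O(\sqrt n)$.
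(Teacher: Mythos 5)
Your proposal is correct and follows essentially the same route as the paper's proof: condition on the template via Lemma~\ref{lm:proba template} and Proposition~\ref{pro:cycle}, expose the path from $q$ step by step, and use that each new transition has a uniform target in $S_\T$ while $|\SupportC{\T}|=O(\sqrt n)$ and $|u|=\Theta(\log n)$. The only (immaterial) difference is that you sum failure probabilities via a union bound where the paper multiplies the success probabilities $\bigl(1-\tfrac{|\SupportC{\T}|+|u|}{|S_\T|}\bigr)^{|u|+1}$ and lets the product tend to $1$.
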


\begin{proof}
Observe that the construction immediately fails for $\A\satisfies\T$ if the starting state of the path $x_0=q$ is in $\SupportC{\T}$, as $\SupportC{\A}=\SupportC{\T}$. This happens with probability $|\SupportC{\T}|/n$ as $q$ is chosen  uniformly at random in $[n]$. As stated in Proposition~\ref{pro:cycle}, in  a random $\A$ satisfying $\T$, the transitions starting from $x \not\in \SupportC{\T}$ must arrive in $S:=L_h(\T)\cup([n]\setminus\Support{\T})$. Furthermore their targets are all chosen uniformly at random and independently in $S$, according to Lemma~\ref{lm:proba template}. For $i\in\{0,\ldots,|u|\}$, let $Y_i:=\SupportC{\T}\cup\{x_0,\ldots,x_{i}\}$.
Observe that for any $i\in\{1,\ldots,|u|\}$, if the path for the first $i-1$ letters of $u$ did not produce a failure, the target of a transition starting from $x_{i-1}$ is an element of $S$ taken uniformly at random. Hence, we have, as $|\SupportC{\T}\cap S|\leq |\SupportC{\T}|$:
\[
\proba(x_i\notin Y_{i-1}\mid \forall j\leq i-1,
x_{j}\notin Y_{j-1}) \geq 1 - \frac{|\SupportC{\T}|+i}{|S|}
\geq 1-\frac{|\SupportC{\T}|+|u|}{|S|}.
\]
Combining with the probability that $x_0=q\notin\SupportC{\T}$ this yields
\[
\proba(s(\A,q)\neq \bot\mid \A\satisfies \T) \geq \left(1-\frac{|\SupportC{\T}|+|u|}{|S|}\right)^{|u|+1}.
\]
By Proposition~\ref{pro:cycle}, $|u|\leq 2\lceil\log_2\sqrt{n}\,\rceil$ and $|\Support{\T}|\leq|\SupportC{\T}|\leq (d_1+1)\sqrt{n}$, so that $|S|\geq n-(d_1+1)\sqrt{n}$ and we have
\[
\proba(s(\A,q)\neq \bot\mid \A\satisfies \T) \geq 
\left(1-\frac{(d_1+1)\sqrt{n}+2\lceil\log_2\sqrt{n}\,\rceil}{n-(d_1+1)\sqrt{n}}\right)^{2\lceil\log_2\sqrt{n}\,\rceil+1}.
\]
This concludes the proof, as it tends to $1$ when $n$ tends to infinity.
\end{proof}

If $\A$ satisfies $\T$ and $s(\A,q)\neq\bot$, we write $s(\A,q)=(p_0,p_1,\ldots,p_{d})$, with $p_0=p$ by construction. 
{Let $u_i$ denote the $i$-th letter of the word $u$ and} let also $\mathcal{P}_q(\A)=x_0\xrightarrow{u_1} x_1 \xrightarrow{u_2} \cdots \xrightarrow{u_{|u|}} x_{|u|}$ denote the path labeled by $u=u_1u_2\cdots u_{|u|}$ and starting from $x_0=q$. Recall that $X_q(\A)$ is the set of states of $\mathcal{P}_q(\A)$.

We consider the \emph{thread process} of $\A$ which, if successful, consists in building in order the sets
$E_i=\{\delta(p_i,b^j):j\geq 0\}$ for $i\in\{0,\ldots,d\}$ as follows.
For $i$ from $0$ to $d$ we start with $E_i=\{p_i\}$ and then iteratively
add $\delta(p_i,b^j)$ for $j\geq 0$ until:
\begin{enumerate}
	\item $\delta(p_i,b^j)\in \cup_{k=0}^{i-1} E_k$, in which case the process halts and is a failure;\label{thread not disjoint}
	\item or $\delta(p_i,b^j)\in \SupportC{\T}\cup X_q(\A)$, in which case the process halts and is a failure;\label{thread already seen}
	\item or $\delta(p_i,b^j)\in E_i$, in which case the process halts and is a failure if $|E_i|\notin \intinter{2\sqrt{n}}{3\sqrt{n}\,}$. If not there are two cases: if {$i<d$} the process starts building $E_{i+1}$, and if {$i=d$} it halts with a success, as all the $b$-threads are successfully built.
\end{enumerate}
Define $\lambda(\A,q)=(|E_0|, |E_1|, \ldots, |E_{d}|)$  the tuple of the $b$-threads' lengths  
if the process is successful and $\lambda(\A,q)=\bot$ if it fails. Observe that if successful, then the $E_i$'s are pairwise disjoint, and do not intersect $\SupportC{\T} \cup X_q(\A)$.

 The following statement is a variation on the classical Birthday Problem.
 
 \begin{lem}\label{lm:thread}
 There exists $c_\lambda>0$ such that for any
 $n$ sufficiently large, for any $p\in[n]$, for any $\T\in\tempC_n(p)$  and for any path $\P=x_0\xrightarrow{u_1}x_1\xrightarrow{u_2}\ldots\xrightarrow{u_{|u|}}x_{|u|}$ labeled by $u=w_\T(aw_\T)^{d-1}$ such that the $x_i$'s are pairwise distinct and not in $\SupportC{\T}$ we have
\[
\proba\left(\lambda(\A,x_0)\neq \bot \mid \A\satisfies \T \text{ and }\P_{x_0}(\A)=\P\right) \geq c_\lambda.
\]
\end{lem}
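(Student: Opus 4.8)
The plan is to analyze the thread process one $b$-thread at a time, conditioning on the successful construction of the previous ones, and to show that each individual thread succeeds with probability bounded below by a positive constant. Fix $\T\in\tempC_n(p)$ and the path $\P$; write $m=|\SupportC{\T}\cup X_{x_0}(\A)| \leq (d_1+1)\sqrt{n} + 2\lceil\log_2\sqrt{n}\,\rceil$, which is $O(\sqrt{n})$, and recall from Proposition~\ref{pro:cycle} that for any $x\notin\SupportC{\T}$ and any $\alpha\in\Sigma$ the allowed target set is $S=L_h(\T)\cup([n]\setminus\Support{\T})$ with $|S|\geq n - (d_1+1)\sqrt{n}$; by Lemma~\ref{lm:proba template} the still-unexplored $b$-transitions out of states not in $\SupportC{\T}$ have targets that are independent and uniform in $S$. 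The $p_i$'s and the $x_j$'s all lie outside $\SupportC{\T}$ by hypothesis (and $p=p_0$ is in $\SupportC{\T}$, but its outgoing $b$-transition is still unset by the construction of $\tempC_n(p)$, landing in $S$), so growing $E_i$ amounts to drawing i.i.d.\ uniform elements of $S$ until we either hit the ``forbidden'' set $F_i=(\SupportC{\T}\cup X_{x_0}(\A)\cup E_0\cup\cdots\cup E_{i-1})\cap S$ (failure, Cases~\eqref{thread not disjoint}--\eqref{thread already seen}) or hit a previously drawn element of $E_i$ (closing the cycle).

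The key estimates are then two. First, the probability that a single thread $E_i$ closes on itself before ever hitting $F_i$: since $|F_i|\leq m + (d+1)\cdot 3\sqrt{n} = O(\sqrt{n})$ and $|S|=n - O(\sqrt{n})$, at each step the chance of landing in $F_i$ is $O(1/\sqrt{n})$, so over the $O(\sqrt{n})$ steps before a collision with $E_i$ becomes likely, the probability of never hitting $F_i$ stays bounded below by a constant — this is a union-bound/product estimate of the form $\prod(1 - O(1/\sqrt n)) \geq$ const, essentially the computation already used in Lemma~\ref{lm:starting states}. Second, conditioned on the thread closing without hitting $F_i$, the length $|E_i|$ is distributed like a birthday-problem stopping time for uniform draws from a set of size $\approx n$, so $\proba(|E_i|\in\intinter{2\sqrt{n}}{3\sqrt{n}\,})$ converges to a fixed positive constant (the probability that a Rayleigh-type limit lands in $[2,3]$ after the $\sqrt{n}$ rescaling); one shows $\proba(|E_i| = \ell)$ equals $\frac{|S|-1}{|S|}\cdots\frac{|S|-\ell+1}{|S|}\cdot\frac{\ell-1}{|S|}$ up to the $F_i$-conditioning, and sums this over $\ell\in\intinter{2\sqrt n}{3\sqrt n}$. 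Multiplying the per-thread success probabilities for $i=0,\dots,d$ gives $c_\lambda = (\text{const})^{d+1}>0$, and since $d$ is fixed this is a positive constant; the conditioning structure is handled cleanly by the template formalism of Section~\ref{sec:template}, as after building $E_0,\dots,E_{i-1}$ we are again in the situation of a uniform transition structure conditioned on an (augmented) template.

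The main obstacle I expect is making the birthday-problem length estimate rigorous \emph{uniformly in the forbidden set} $F_i$: the events ``the thread closes in $[2\sqrt n,3\sqrt n]$ steps'' and ``the thread never hits $F_i$'' are not independent, because a draw that would close the cycle and a draw that lands in $F_i$ compete at every step. The clean way around this is to couple the process with an unconstrained one: draw i.i.d.\ uniforms from $S$, let $\tau$ be the first repeat time (birthday time) and let $\sigma$ be the first hitting time of $F_i$; success for thread $i$ is exactly $\{\tau<\sigma\}\cap\{\tau\in\intinter{2\sqrt n}{3\sqrt n}\,\}$. One bounds $\proba(\sigma \leq 3\sqrt n) \leq 3\sqrt{n}\cdot|F_i|/|S| = O(1/\sqrt{n}) \cdot O(1) $ — wait, that is $O(\sqrt n \cdot \sqrt n / n) = O(1)$, so more care is needed: in fact $|F_i| = O(\sqrt n)$ and $3\sqrt n$ steps give $\proba(\sigma\le 3\sqrt n)\le 3\sqrt n\cdot O(\sqrt n)/n = O(1)$, which is not small. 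So instead one must keep $|F_i|/|S|$ at the scale $c/\sqrt n$ with $c$ a \emph{small} constant; since $|F_i|\le m + 3(d+1)\sqrt n$ this constant is $3(d+1)+o(1)$, not small, and the product over $\Theta(\sqrt n)$ steps is $\exp(-\Theta(\sqrt n)\cdot \Theta(1/\sqrt n)) = \exp(-\Theta(1))$, a positive constant after all. The honest statement is therefore: $\proba(\text{thread }i\text{ succeeds}) \geq \proba(\tau\in\intinter{2\sqrt n}{3\sqrt n})\cdot\proba(\sigma>3\sqrt n\mid \tau\in\intinter{2\sqrt n}{3\sqrt n}) $, and the second factor is $\prod_{j=1}^{3\sqrt n}(1-|F_i|/|S|)\ge \exp(-\Theta(1))>0$ by the bound $-\log(1-x)\le x + x^2$ for $x$ small, exactly as in the proof of Lemma~\ref{lm:lower bound proba}. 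Both factors being bounded below by positive constants independent of $n$, $p$, $\T$ and $\P$, we take the product over the $d+1$ threads and conclude.
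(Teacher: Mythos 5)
Your proposal is correct and follows essentially the same route as the paper's proof: condition on the augmented template, observe that each new $b$-transition is an independent uniform draw from $S$ (with the minor special case of $\delta(p,b)$ being uniform on $[n]$), and bound each thread's success probability by a birthday-problem estimate for the cycle length times a product bound $\prod(1-O(1/\sqrt n))=\exp(-\Theta(1))$ for avoiding the $O(\sqrt n)$-sized forbidden set, then multiply over the $d+1$ threads. The paper merely packages the two factors into a single per-length probability $\pi_i(t)$ summed over $t\in\intinter{2\sqrt n}{3\sqrt n}$ rather than factoring the event as $\{\tau\in I\}\cap\{\sigma>3\sqrt n\}$, which is only a bookkeeping difference.
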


\begin{proof}
Conditioning by $\A\satisfies\T$ and $\mathcal{P}_{x_0}(\A)=\mathcal{P}$ is exactly conditioning on satisfying the template $\T_\mathcal{P}=(n,\gamma_\mathcal{P})$ obtained from $\T=(n,\gamma)$ by setting
$\gamma_\mathcal{P}(x_i,u_{i+1})=\{x_{i+1}\} $ for all $i\in\{0,\ldots, |u|-1\}$, and $\gamma_\mathcal{P}(x,\alpha)= \gamma(x,\alpha)$ for all other transitions.

Let $S:=L_h(\T)\cup([n]\setminus\SupportC{\T})$.
For every $i\in\{1,\ldots,d\}$, by construction $p_i=x_{i|aw_\T|-1}$ and $\gamma_\mathcal{P}(p_i,b) = \gamma(p_i,b) = S$ since $a$ labels the transition outgoing from  $p_i$ in $\mathcal{P}$ for $i<d-1$ and $p_{d}$ is the end of $\mathcal{P}$. Also,  $\gamma_\mathcal{P}(p_0,b)= \gamma(p_0,b)=\gamma(p,b)=[n]$, as the template gives no constraint on $p\xrightarrow{b}$.

So during the thread process, as long as we are adding a new state $x=\delta(p_i,b^j)$ in $E_i$, the transition $\delta(x,b)$ is a uniform element of $S$ (or of $[n]$ if $i=j=0$), independently of the previous steps. 
Hence if we are not in the case $i=j=0$, out of the $|S|$ possibilities, $r_i:=|\cup_{k=0}^{i-1}E_k|$ possibilities trigger a failure because of Condition~\eqref{thread not disjoint}, $s:=|S\,\cap\,\SupportC{\T}|+|X_q(\A)|$ possibilities trigger a failure because of Condition~\eqref{thread already seen}, and $j$ possibilities complete the process for $E_i$ because we cycle back on a previously seen element of $E_i$. So for $i\geq 1$ and $t <3\sqrt{n}$, the probability $\pi_i(t)$ that  neither Condition~\eqref{thread not disjoint} nor Condition~\eqref{thread already seen} were triggered and $|E_i|= t$, conditioned on the fact that  $E_j$  were successfully built for all $j<i$ is 
\[
\pi_i(t)=\underbrace{\frac{|S|-r_i-s}{|S|}}_{\text{choice of }\delta(p_i,b)}\times \underbrace{\frac{|S|-r_i-s-1}{|S|}}_{\text{choice of }\delta(p_i,b^2)}\times \cdots
\times \underbrace{\frac{|S|-r_i-s-(t-2)}{|S|}}_{\text{choice of }\delta(p_i,b^{t-1})}\times \underbrace{\frac{t}{|S|}}_{\text{cycling back}}.
\]
Hence
\[
\pi_i(t) = \frac{t}{|S|}\,\prod_{j=0}^{t-2}\left(1-\frac{r_i+s+j}{|S|}\right).
\]
For $2\sqrt{n}\leq t \leq 3\sqrt{n}$ we have
\[
\pi_i(t) \geq \frac{t}{|S|}\,\left(1-\frac{r_i+s+3\sqrt{n}}{|S|}\right)^{3\sqrt{n}}.
\]
Moreover, as at most $d$ threads have been completed so far, $r_i\leq 3d\sqrt{n}$, and for $n$ sufficiently large, $|X_q(\A)|\leq\sqrt{n}$ and thus $s\leq (d_1+2)\sqrt{n}$, by Proposition~\ref{pro:backward}.  Observe that we also have $|S|\geq n-(d_1+1)\sqrt{n}$, which is greater than $\frac12n$ for $n$ sufficiently large. 
Therefore, using $\mu:=2d_1+6d+10$ and the fact that $\lim_m (1-\mu/m)^m=\exp(-\mu)$, we have
\[
\pi_i(t) \geq \frac{t}{|S|}\,\left(1-\frac{\mu}{\sqrt{n}}\right)^{3\sqrt{n}}\geq
\frac{te^{-3\mu}}{2n}. 
\]
This yields that
\[
\sum_{t=\lceil2\sqrt{n}\rceil}^{\lfloor 3\sqrt{n}\rfloor}\pi_i(t)
\geq \frac{e^{-3\mu}}{2n}\sum_{t=\lceil2\sqrt{n}\rceil}^{\lfloor 3\sqrt{n}\rfloor}t \geq  \frac{e^{-3\mu}}{2n} \left(\lfloor 3\sqrt{n}\rfloor - \lceil2\sqrt{n}\,\rceil+1\right)\lceil2\sqrt{n}\rceil.
\]
As the limit of the right hand term is $e^{-3\mu}>0$, there exists $\kappa>0$ such that, for $n$ sufficiently large we have 
$
\sum_{t=\lceil2\sqrt{n}\rceil}^{\lfloor 3\sqrt{n}\rfloor}\pi_i(t)\geq \kappa.
$

For the case $i=0$, as $r_0=0$ and $\gamma_\P(p_0,b)=[n]$, we have almost the same formula:
\[
\pi_0(t)= \frac{t}{|S|}\left(1-\frac{s}{n}\right)\,\prod_{j=1}^{t-2}\left(1-\frac{s+j}{|S|}\right).
\]
With the same technique as before, we can find some positive constant $\kappa_0$ such that $\sum_{t=\lceil2\sqrt{n}\rceil}^{\lfloor 3\sqrt{n}\rfloor}\pi_0(t)\geq \kappa_0$, for $n$ sufficiently large. Combining the results we obtain that
\[
\proba\left(\lambda_{x_0}(\A)\neq \bot \mid \A\satisfies \T \text{ and }\P_{x_0}(\A)=\P\right) \geq \kappa_0\,\kappa^d,
\]
concluding the proof.
\end{proof}

We now write the conditions in terms of templates, by naming the states encountered during the thread process and forcing the associated transitions. For $n\geq 1$ and $p,q\in[n]$, let $\tempT_n(p,q)$ denote the set of templates $\T=(n,\gamma)$ such that:
\begin{itemize}
    \item $\T$ satisfies a template $\hat\T=(n,\tilde\gamma)$ of $\tempC_n(p)$, with a cycle around $p$ labeled by the word $aw_\T$. Hence $\SupportC{\T}=\SupportC{\hat\T}$ and, to simplify the notations, define $w:=w_\T=w_{\hat\T}$.
    \item We write $x_0=q$ and for $u=w(aw)^{d-1}=u_1u_2\cdots u_{|u|}$, there exist pairwise distinct states $x_1$, $x_2$, \ldots, $x_{|u|}$ in $[n]\setminus\SupportC{\T}$ such that for all $i\in\{0,\ldots,|u|-1\}$,   $\gamma(x_i,u_{i+1})=\{x_{i+1}\}$. This determines the path starting at $q$ and labeled by $u$ of any $\A$ that satisfies $\T$. Let $X_q(\T)=\{x_i:0\leq i\leq |u|\}$. 
    \item We write $p_{0,1}=p$ and $p_{i,1}=x_{i|aw|-1}$ for $1\leq i\leq d$. For any $i\in\{0,\ldots, d\}$, there exists an integer $\lambda_i$ such that $2\sqrt{n}\leq \lambda_i\leq 3\sqrt{n}$ and states $p_{i,2}$, \ldots, $p_{i,\lambda_i}$ in $[n]\setminus(\SupportC{\T}\cup X_q(\T))$ such that:
    \begin{itemize}
        \item The $p_{i,j}$'s are pairwise distinct.
        \item For all $i\in\{0,\ldots,d\}$, for all $j\in\{1,\ldots,\lambda_j-1\}$, $\gamma(p_{i,j},b)=\{p_{i,j+1}\}.$ This determines the $b$-threads starting from $p_{0,1}$, \ldots, $p_{d,1}$. Let $E_i=\{p_{i,j}:j\in[\lambda_i]\}$.
        \item For all $i\in\{0,\ldots,d\}$, $\gamma(p_{i,\lambda_i},b)=E_i$, to ensure we cycle back in each $b$-thread.
    \end{itemize}  
    \item For every other transition $(x,\alpha)$ we have $\gamma(x,\alpha)=\hat\gamma(x,\alpha)$.
\end{itemize} 
As the construction follows the constructions of the path and of the $b$-threads, we readily have that $\A\satisfies\tempT_{n}(p,q)$ 
if and only if $\A\satisfies\tempC_p(n)$ and $\lambda(\A,q)\neq\bot$. 
As the constructions are deterministic, for given $p,q\in [n]$, a given $n$-state transition structure  $\A$ cannot satisfy more than one template of $\tempT_{n}(p,q)$. 
Moreover, all the transition structures $\A$ that satisfy a given $\T\in\tempT_{n}(p,q)$ have same $\lambda(\A,q)=(\lambda_0,\lambda_1,\ldots,\lambda_d)$, which we can therefore write $\lambda(\T,q)$ with no ambiguity.

\begin{prop}\label{pro:threads}
There exists  $c_\textrm{T}>0$ such that, for $n$ sufficiently large and for $p\in[n]$, if $q$ is a uniform random element of $[n]$ and $\A$ is a uniform $n$-state transition structure, taken independently, then 
$\proba\left(\A\satisfies\tempT_{n}(p,q)\right) \geq c_\textrm{T}.$
 \end{prop}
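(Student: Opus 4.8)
The proposition asserts that, with visible probability, a uniform random $n$-state transition structure together with a uniform random second endpoint $q$ satisfies some template in $\tempT_n(p,q)$. The strategy is to chain together the three independent-ish layers we have already built: first satisfying $\tempC_n(p)$, then having a valid starting-states tuple (the path $\mathcal{P}$), then having valid $b$-threads. Concretely, I would write
\[
\proba\!\left(\A\satisfies\tempT_n(p,q)\right)
= \proba\!\left(\A\satisfies\tempC_n(p)\ \text{and}\ s(\A,q)\neq\bot\ \text{and}\ \lambda(\A,q)\neq\bot\right),
\]
using the characterization stated just before the proposition that $\A\satisfies\tempT_n(p,q)$ iff $\A\satisfies\tempC_n(p)$ and $\lambda(\A,q)\neq\bot$ (and noting that $\lambda(\A,q)\neq\bot$ already forces $s(\A,q)\neq\bot$, since the thread process only runs when the starting-states tuple is built).

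\textbf{Key steps.} First, condition on $\A\satisfies\tempC_n(p)$: by Proposition~\ref{pro:cycle} this has probability at least some constant $c>0$, uniformly in $p$ and $n$ large. Second, partition the event $\A\satisfies\tempC_n(p)$ according to which template $\T\in\tempC_n(p)$ is satisfied (a given $\A$ satisfies at most one, as noted). For each such $\T$, apply the law of total probability again over the value of $q$ and over which concrete path $\mathcal{P}$ is realized: by Lemma~\ref{lm:starting states}, with $\epsilon$ fixed to, say, $\tfrac12$, we get $\proba(s(\A,q)\neq\bot\mid \A\satisfies\T)\geq \tfrac12$ for $n$ large, uniformly in $\T$. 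Third, condition further on $\A\satisfies\T$ and $\mathcal{P}_{x_0}(\A)=\mathcal{P}$ for one of these good paths; Lemma~\ref{lm:thread} gives $\proba(\lambda(\A,x_0)\neq\bot\mid \A\satisfies\T,\ \mathcal{P}_{x_0}(\A)=\mathcal{P})\geq c_\lambda>0$, again uniformly. Multiplying the three conditional bounds and summing back over templates and paths yields $\proba(\A\satisfies\tempT_n(p,q))\geq c\cdot\tfrac12\cdot c_\lambda=:c_\textrm{T}>0$.

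\textbf{Bookkeeping to be careful about.} The one genuinely delicate point is making the nested conditioning rigorous: $q$ is random and independent of $\A$, so I should think of the probability space as pairs $(\A,q)$, and the events ``$\A\satisfies\T$'', ``$s(\A,q)\neq\bot$ with realized path $\mathcal{P}$'', and ``$\lambda(\A,q)\neq\bot$'' must be shown to be exactly the successive conditioning used in Lemmas~\ref{lm:starting states} and~\ref{lm:thread}. Here I would invoke the description given immediately before the statement: conditioning on $\A\satisfies\T$ and $\mathcal{P}_{x_0}(\A)=\mathcal{P}$ is literally conditioning on satisfying a refined template $\T_\mathcal{P}$, so Lemma~\ref{lm:proba template} applies cleanly and the events chain as a genuine product. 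The uniformity of the constants $c$, $c_\lambda$ (and of the Lemma~\ref{lm:starting states} bound) over all $p$, all $\T\in\tempC_n(p)$, and all admissible $\mathcal{P}$ is exactly what lets the final sum over templates and paths collapse to a single constant — so I expect the main (modest) obstacle to be presenting this collapse transparently rather than any new estimate. Once the chaining is set up, the proof is a two-line multiplication.

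\begin{proof}
Fix $d\geq 1$. Recall from the discussion preceding the statement that $\A\satisfies\tempT_n(p,q)$ if and only if $\A\satisfies\tempC_n(p)$ and $\lambda(\A,q)\neq\bot$, and that $\lambda(\A,q)\neq\bot$ entails $s(\A,q)\neq\bot$ (the thread process is only run when the starting-states tuple is defined). We work on the product probability space of a uniform random $n$-state transition structure $\A$ and an independent uniform random $q\in[n]$.

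Since the backward construction and the forward process are deterministic, a given $\A$ satisfies at most one template of $\tempC_n(p)$. Partitioning according to which one it is,
\[
\proba\!\left(\A\satisfies\tempT_n(p,q)\right)
= \sum_{\T\in\tempC_n(p)}
\proba\!\left(\A\satisfies\T\right)\,
\proba\!\left(\lambda(\A,q)\neq\bot \mid \A\satisfies\T\right).
\]
Fix $\T\in\tempC_n(p)$. Conditioned on $\A\satisfies\T$, whenever $s(\A,q)\neq\bot$ the realized path $\mathcal{P}_q(\A)$ is some path $\mathcal{P}=x_0\xrightarrow{u_1}x_1\cdots\xrightarrow{u_{|u|}}x_{|u|}$ labeled by $u=w_\T(aw_\T)^{d-1}$ whose states are pairwise distinct and avoid $\SupportC{\T}$; call such a path \emph{admissible for $\T$}. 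For each admissible $\mathcal{P}$, conditioning on $\A\satisfies\T$ and $\mathcal{P}_{x_0}(\A)=\mathcal{P}$ is, as explained before Lemma~\ref{lm:thread}, the same as conditioning on satisfying the refined template $\T_\mathcal{P}$, so these events chain as a genuine product. Hence, summing over admissible paths,
\[
\proba\!\left(\lambda(\A,q)\neq\bot \mid \A\satisfies\T\right)
= \sum_{\mathcal{P}\text{ admissible for }\T}
\proba\!\left(\mathcal{P}_{x_0}(\A)=\mathcal{P}\mid \A\satisfies\T\right)\,
\proba\!\left(\lambda(\A,x_0)\neq\bot \mid \A\satisfies\T,\ \mathcal{P}_{x_0}(\A)=\mathcal{P}\right).
\]
By Lemma~\ref{lm:thread}, the last factor is at least $c_\lambda>0$ for $n$ sufficiently large, uniformly in $p$, in $\T\in\tempC_n(p)$, and in the admissible path $\mathcal{P}$. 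Therefore
\[
\proba\!\left(\lambda(\A,q)\neq\bot \mid \A\satisfies\T\right)
\geq c_\lambda \sum_{\mathcal{P}\text{ admissible for }\T}
\proba\!\left(\mathcal{P}_{x_0}(\A)=\mathcal{P}\mid \A\satisfies\T\right)
= c_\lambda\,\proba\!\left(s(\A,q)\neq\bot\mid \A\satisfies\T\right),
\]
and by Lemma~\ref{lm:starting states} applied with $\epsilon=\tfrac12$, the right-hand side is at least $\tfrac{c_\lambda}{2}$ for $n$ sufficiently large, uniformly in $p$ and in $\T\in\tempC_n(p)$.

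Plugging this back in,
\[
\proba\!\left(\A\satisfies\tempT_n(p,q)\right)
\geq \frac{c_\lambda}{2} \sum_{\T\in\tempC_n(p)} \proba\!\left(\A\satisfies\T\right)
= \frac{c_\lambda}{2}\,\proba\!\left(\A\satisfies\tempC_n(p)\right)
\geq \frac{c\,c_\lambda}{2},
\]
where the last inequality is Proposition~\ref{pro:cycle}, valid for $n$ sufficiently large and uniformly in $p\in[n]$. Setting $c_\textrm{T}=\tfrac{c\,c_\lambda}{2}>0$ concludes the proof.
\end{proof}
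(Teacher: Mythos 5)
Your proof is correct and follows essentially the same route as the paper's: both decompose the event via the characterization $\A\satisfies\tempT_n(p,q)$ iff $\A\satisfies\tempC_n(p)$ and $\lambda(\A,q)\neq\bot$, partition over the templates of $\tempC_n(p)$ and over the admissible paths, apply Lemma~\ref{lm:thread} uniformly at fixed path, collapse the path sum into $\proba(s(\A,q)\neq\bot\mid\A\satisfies\T)$, invoke Lemma~\ref{lm:starting states} with $\epsilon=\tfrac12$, and finish with Proposition~\ref{pro:cycle}. The only difference is cosmetic (conditional probabilities versus joint probabilities in the double sum), and the constant $c_{\mathrm{T}}=\tfrac{c\,c_\lambda}{2}$ matches the paper's.
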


 \begin{proof}
We partition  $\tempT_{n}(p,q)$ according to the possible valid paths as follows. For given $\T\in\tempT_{n}(p,q)$, the path $\P_q(\A)$ is the same for every $\A$ that satisfies $\T$, so we can define $\P_q(\T):=\P_q(\A)$ for any such $\A$.
Let $\mathfrak{P}_q(n)$ denote the set of all possible paths:
\[
\mathfrak{P}_q(n) = \{\P_q(\T):\T\in\tempT_{n}(p,q)\}.
\]
We have
\[
\proba\left(\A\satisfies \tempT_{n}(p,q)\right)
= \sum_{\P\in\mathfrak{P}_q(n)}
\sum_{\T\in\tempT_{n}(p,q)} \proba\left(\A\satisfies \T \text{ and }\P_q(\T)=\P\right).
\]
 Recall that $\A\satisfies\tempT_{n}(p,q)$ if and only if $\A\satisfies \tempC_{p}(n)$ and
$\lambda(\A,q)\neq\bot$. This yields
\[
\proba\left(\A\satisfies \tempT_{n}(p,q)\right)
= \sum_{\P\in\mathfrak{P}_q(n)}
\sum_{\T\in\tempC_p(n)} \proba\left(\A\satisfies \T \text{ and } \lambda(\A,q)\neq\bot\text{ and }\P_q(\T)=\P\right).
\]
For any $\T\in\tempC_p(n)$ and any $\P\in\mathfrak{P}_q(n)$, by Lemma~\ref{lm:thread}, for $n$ sufficiently large we have
\[
\proba\left(\A\satisfies \T \text{ and }  \lambda(\A,q) \neq\bot\text{ and }\P_q(\T)=\P\right)
\geq c_{\lambda} \, \proba(\A\satisfies \T \text{ and }\P_q(\T)=\P).
\]
By definition of $s_q(\A)$, a given transition structure that satisfies $\T\in\tempC_p(n)$ has a path in $\mathfrak P_q(n)$ if and only if $s_q(\A)\neq \bot$. Thus
\[
\sum_{\P\in\mathfrak{P}_q(n)}\proba(\A\satisfies \T \text{ and }\P_q(\T)=\P) =
\proba(\A\satisfies \T \text{ and }s_q(\A)\neq \bot).
\]
Moreover, taking $\epsilon=\frac12$ in Lemma~\ref{lm:starting states}, for $n$ sufficiently large we have
\[
\proba(\A\satisfies \T \text{ and }s_q(\A)\neq \bot) \geq \frac12 \ \proba(\A\satisfies \T).
\]
Putting all together yields
\[
\proba\left(\A\satisfies \tempT_{n}(p,q)\right)
\geq \frac{c_\lambda}{2} \sum_{\T\in\tempC_p(n)} \proba(\A\satisfies \T)
= \frac{c_\lambda}{2} \proba(\A\satisfies \tempC_p(n)).
\]
This concludes the proof by Proposition~\ref{pro:cycle}.     
 \end{proof}

Let $\T\in\tempT_{n}(p,q)$ with $\lambda(\T,q)=(\lambda_0,\ldots,\lambda_d)$. If $\A\satisfies\T$ then at the end of each $b$-thread from $p_{i,1}$, the $b$-transition outgoing from $p_{i,\lambda_i}$ ends in an element $p_{i,j}$ of $E_i$, forming a $b$-cycle of length $\ell_i=\lambda_i-j+1$. 
Let $\ell(\A,p,q)$ be the $d+1$-tuple of the $b$-cycle lengths $(\ell_0,\ldots,\ell_d)$ if all its coordinates are in $\intinter{\sqrt{n}}{2\sqrt{n}\,}$, and $\ell(\A,p,q)=\bot$ otherwise. 
We also set $\ell(\A,p,q)=\bot$ if $\A$ does not satisfy $\tempT_{n}(p,q)$. If $\ell(\A,p,q)=(\ell_0,\ldots,\ell_d)$, we define $\ell_i(\A):=\ell_i$ to directly access its coordinates.

\begin{prop}\label{pro:loops}
There exists $c_\ell>0$ such that for $n$ sufficiently large and $p\in[n]$, if $q$ is taken uniformly in $[n]$ and $\A$ is a uniform $n$-state transition structure taken independently then $\proba\left(\ell(\A,p,q)\neq\bot\right)\geq c_\ell$.
Moreover, conditioned on $\ell(\A,p,q)\neq\bot$, the random variables $\ell_i$ are independent uniform integers of $\intinter{\sqrt{n}}{2\sqrt{n}\,}$.
\end{prop}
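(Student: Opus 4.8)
The plan is to combine Proposition~\ref{pro:threads} with an explicit computation of the law of the $b$-cycle lengths inside a single template. First I would recall that if $\A\satisfies\T$ for some $\T\in\tempT_n(p,q)$ with $\lambda(\T,q)=(\lambda_0,\ldots,\lambda_d)$, then the $b$-cycle lengths $\ell_0,\ldots,\ell_d$ are determined solely by the closing transitions $\delta(p_{i,\lambda_i},b)$, which the template constrains to lie in $E_i=\{p_{i,1},\ldots,p_{i,\lambda_i}\}$. By Lemma~\ref{lm:proba template}, conditioned on $\A\satisfies\T$ these $d+1$ transitions are independent and each uniform over its $E_i$; writing $\delta(p_{i,\lambda_i},b)=p_{i,j_i}$, the $i$-th $b$-cycle has length $\ell_i=\lambda_i-j_i+1$, so the $\ell_i$ are independent and each uniform over $[\lambda_i]$, independently of every other transition of $\A$.

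Next I would use that $\intinter{\sqrt n}{2\sqrt n}\subseteq[\lambda_i]$, since $\lambda_i\geq 2\sqrt n\geq\lfloor 2\sqrt n\rfloor$. Hence for every fixed tuple $(\ell_0,\ldots,\ell_d)\in\intinter{\sqrt n}{2\sqrt n}^{d+1}$,
\[
\proba\bigl(\ell(\A,p,q)=(\ell_0,\ldots,\ell_d)\mid\A\satisfies\T\bigr)=\prod_{i=0}^d\frac1{\lambda_i},
\]
and summing over the $|\intinter{\sqrt n}{2\sqrt n}|^{d+1}$ admissible tuples,
\[
\proba\bigl(\ell(\A,p,q)\neq\bot\mid\A\satisfies\T\bigr)=\prod_{i=0}^d\frac{|\intinter{\sqrt n}{2\sqrt n}|}{\lambda_i}.
\]
Since $|\intinter{\sqrt n}{2\sqrt n}|\geq\sqrt n-1$ and $\lambda_i\leq 3\sqrt n$, each factor is at least $\tfrac14$ for $n$ large, giving $\proba(\ell(\A,p,q)\neq\bot\mid\A\satisfies\T)\geq 4^{-(d+1)}$ uniformly in $\T$. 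As distinct templates of $\tempT_n(p,q)$ are satisfied by disjoint sets of transition structures (and $\ell(\A,p,q)\neq\bot$ forces $\A\satisfies\tempT_n(p,q)$), the law of total probability combined with Proposition~\ref{pro:threads} yields $\proba(\ell(\A,p,q)\neq\bot)\geq 4^{-(d+1)}c_\textrm{T}$, proving the first assertion with $c_\ell:=4^{-(d+1)}c_\textrm{T}$.

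For the ``moreover'' part, I would divide the two displayed identities: for every $\T\in\tempT_n(p,q)$ and every admissible tuple,
\[
\proba\bigl(\ell(\A,p,q)=(\ell_0,\ldots,\ell_d)\mid\A\satisfies\T,\ \ell(\A,p,q)\neq\bot\bigr)=\frac1{|\intinter{\sqrt n}{2\sqrt n}|^{d+1}},
\]
a value that does not depend on $\T$. Since the events $\{\A\satisfies\T\}$ for $\T\in\tempT_n(p,q)$ partition $\{\ell(\A,p,q)\neq\bot\}$, averaging this identity over $\T$ with weights $\proba(\A\satisfies\T\mid\ell(\A,p,q)\neq\bot)$ shows that, conditioned on $\ell(\A,p,q)\neq\bot$, the tuple is uniform over $\intinter{\sqrt n}{2\sqrt n}^{d+1}$; in particular $\ell_0,\ldots,\ell_d$ are independent and each uniform over $\intinter{\sqrt n}{2\sqrt n}$. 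The only step requiring care is this interplay of conditionings: inside a single template the law of $\ell$ is proportional to $\prod_i 1/\lambda_i$ rather than uniform on the target interval, and the $\lambda_i$-dependence cancels only after conditioning on $\ell\neq\bot$; once that is spelled out, the rest is routine.
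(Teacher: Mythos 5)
Your proposal is correct and follows essentially the same route as the paper: condition on a single template $\T\in\tempT_n(p,q)$, use Lemma~\ref{lm:proba template} to see that the closing $b$-transitions are independent and uniform over the $E_i$'s (hence $\ell_i$ uniform over $[\lambda_i]$), bound the conditional success probability below by $4^{-(d+1)}$ uniformly in $\T$, and conclude via the law of total probability and Proposition~\ref{pro:threads}. Your treatment of the ``moreover'' part is in fact slightly more explicit than the paper's about why the $\lambda_i$-dependence cancels when averaging over templates, but the argument is the same.
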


\begin{proof}
By definition, we have $\ell(\A,p,q)\neq\bot$ only for transition structures $\A$ such that 
$\A\satisfies\tempT_{n}(p,q)$. Let $\T\in\tempT_{n}(p,q)$, let $s_q(\T)=(p_0,p_1,\ldots,p_d)$ denote its tuple of starting states and let
$\lambda_q(\T)=(\lambda_0,\ldots,\lambda_p)$ denote its tuple of $b$-thread lengths. If $\A=(n,\delta)$ is a random transition structure conditioned on {satisfying} $\T$, then, by definition of $\tempT_{n}(p,q)$, for every $i\in\{0,\ldots,d\}$, 
$\delta(p_i,b^{\lambda_i})$ 
is a uniform element of $E_i=\{\delta(p_i,b^j):j\geq 0\}$, with $|E_i|=\lambda_i$, and they are all independent. Hence each $b$-cycle length $\hat\ell_i$ is a uniform random integer in $[\lambda_i]$, and they are all independent. 

By definition of $\tempT_{n}(p,q)$, for every $i\in\{0,\ldots,d\}$, we have $2\sqrt{n}\leq \lambda_i \leq 3\sqrt{n}$. In particular $\intinter{\sqrt{n}}{2\sqrt{n}}\subseteq\intinter{1}{\lambda_i}$, and the probability that a uniform element of $\intinter{1}{\lambda_i}$ is in $\intinter{\sqrt{n}}{2\sqrt{n}}$ is at least $\frac14$ (it is lower-bounded by a quantity that tends to $\frac13$). Hence
the probability that all the $\hat\ell_i$'s are in the valid range $\intinter{\sqrt{n}}{2\sqrt{n}}$ is at least $4^{-d-1}$. Since this lower bound does not depend on $\T\in\tempT_{n}(p,q)$, this proves by the law of total probabilities that 
\[
\proba\left(\ell(\A,p,q)\neq\bot\right)
= \!\!\sum_{\T\in\tempT_{n}(p,q)} \proba\left(\ell(\A,p,q)\neq\bot\mid \A\satisfies\T\right)\proba(\A\satisfies\T)
\geq 4^{-d-1} \proba(\A\satisfies\tempT_{n}(p,q)).
\]
By Proposition~\ref{pro:threads}, this yields $\proba\left(\ell(\A,p,q)\neq\bot\right)\geq c_\ell$, with $c_\ell=4^{-d-1}c_\mathrm{T}$.

The second part of the statement is just a consequence of the fact that a uniform element of $[\lambda_i]$ conditioned to be in $\intinter{\sqrt{n}}{2\sqrt{n}}$ is a uniform element of $\intinter{\sqrt{n}}{2\sqrt{n}}$. And as direct computation shows, the independence is preserved if we consider all the $\ell_i$'s together.
\end{proof}

Proposition~\ref{pro:loops} is our main probabilistic result on random transition structures. It states that our global construction {succeeds} with visible probability and produces $d+1$ $b$-cycles of uniform and independent length in $\intinter{\sqrt{n}}{2\sqrt{n}\,}$. These cycles are linked to $p$ in a way that can be exploited during the accessible subset construction when we add the transition $p\xrightarrow{a}q$.

\section{Super-polynomial growth of the subset construction}\label{sec:determinisation}

A uniform $n$-state almost deterministic transition structure $\A=(n,\delta,p\xrightarrow{a}q)$ is obtained by choosing $\delta$, $p$ and $q$ uniformly at random and independently. If we furthermore choose the starting state $i_0$ uniformly at random and independently, we can use Proposition~\ref{pro:loops} to establish that the accessible subset construction has a super-polynomial number of states with visible probability. This is the focus of this section.

First, Grusho~\cite{grusho73} established that with high probability a uniform $n$-state transition structure has a unique terminal strongly connected component. When there is only one such strongly connected component, it is necessarily accessible from any state. We also rely on a result of Cai and Devroye~\cite[Theorem~2]{devroye17}, which ensures that with high probability there are no cycles of length {larger} than $\frac14\log_2n$ outside this unique terminal strongly connected component. 
More precisely, they show that the length of the longest cycle outside  the unique terminal strongly connected component of a random $n$-state transition structure is in $O_p(1)$ with the notations of \cite{randomgraphs}. By \cite[Remark 1.3]{randomgraphs}, this in particular implies that for any $\omega(n) \rightarrow \infty$, all the cycles outside the accessible part have length at most $\omega(n)$ with high probability.

Observe that if $\A\satisfies\tempT_{n}(p,q)$, then it has a  word $aw_\A$ that labels a cycle around $p$ of length at most $h=\lceil \log_2\sqrt{n}\,\rceil$. By Proposition~\ref{pro:loops}, this happens with visible probability. The  conjunction of a high-probability event with a visible event being a visible event, we  directly have the following lemma.

\begin{lem}\label{lm:accessible p}
If $\A$ is a  $n$-state transition structure, and $p,q,i_0$ are  states in $[n]$, all being taken uniformly at random and independently in their respective sets, then with visible probability, $\ell(\A,p,q)\neq\bot$ and $p$ is accessible from $i_0$. 
\end{lem}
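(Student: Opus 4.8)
The plan is to combine the visible-probability statement of Proposition~\ref{pro:loops} with the two high-probability structural facts about random transition structures recalled just above, namely Grusho's result on the uniqueness of the terminal strongly connected component and the Cai--Devroye bound on the length of cycles lying outside it. Concretely, I would introduce three events on the joint space of $(\A,p,q,i_0)$: the event $V$ that $\ell(\A,p,q)\neq\bot$, the event $U$ that $\A$ has a unique terminal SCC, and the event $C$ that every cycle of $\A$ outside the part accessible from $i_0$ has length at most $h=\lceil\log_2\sqrt{n}\,\rceil$ (this last one uses that $h\to\infty$, applying the remark from~\cite{randomgraphs} with $\omega(n)=h$). The goal is to show $V\cap U\cap C$ forces $p$ to be accessible from $i_0$, and that this intersection has visible probability.

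For the deterministic implication, suppose $V\cap U\cap C$ holds. By Proposition~\ref{pro:loops} (via its construction, since $\A\satisfies\tempT_n(p,q)$ whenever $\ell(\A,p,q)\neq\bot$), the word $aw_\A$ labels a cycle around $p$ of length between $h$ and $2h$; in particular there is a cycle through $p$ of length at least $h> \frac14\log_2 n$ for $n$ large — actually any cycle of length in $[h,2h]$ exceeds the threshold in~$C$. Since event $C$ says that all cycles outside the accessible part have length at most $h$, and since the $aw_\A$-cycle has length at least~$h$ but this is a cycle on which $p$ lies, I should be slightly careful with the inequality: I would instead invoke the Cai--Devroye remark with $\omega(n)$ chosen strictly below $h$, e.g. $\omega(n)=\lfloor\frac12\log_2 n\rfloor$ or simply note the cycle length is $\Theta(\log n)$ and exceeds $O_p(1)$, so with high probability $p$ cannot lie on a cycle outside the accessible part; hence $p$ is in the accessible part, i.e. accessible from $i_0$. (One also needs $U$ here only if the accessibility argument is routed through the terminal SCC; the cleaner route is just: a vertex on a long cycle is accessible, because the longest cycle outside the accessible region is short with high probability.)

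For the probabilistic bound, I would use that the intersection of an event of visible probability with finitely many events of high probability is again an event of visible probability. Event $V$ has visible probability $\geq c_\ell$ by Proposition~\ref{pro:loops}; events $U$ and $C$ have probability $\to 1$ by~\cite{grusho73} and~\cite{devroye17} respectively (note these are statements about $\A$ alone, and adding the independent uniform $i_0$ only makes $C$ a statement about the pair $(\A,i_0)$, still of probability $\to 1$ since for fixed $\A$ in the good set the accessible part from a uniform $i_0$ still avoids all long cycles with high probability — or more simply, Cai--Devroye already phrase it uniformly). Then $\proba(V\cap U\cap C)\geq \proba(V)-\proba(\overline U)-\proba(\overline C)\geq c_\ell - o(1)\geq c_\ell/2$ for $n$ large enough.

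I expect the main subtlety — not a deep obstacle, but the place where care is needed — to be the precise matching of the cycle-length threshold: the cycle around $p$ produced by the construction has length $\Theta(\log n)$ with explicit constants ($\geq h=\lceil\log_2\sqrt n\rceil$), while the Cai--Devroye statement is phrased as "length $O_p(1)$" and needs to be converted, via~\cite[Remark~1.3]{randomgraphs}, into "all cycles outside the accessible part have length $\leq\omega(n)$ with high probability" for a suitable slowly growing $\omega(n)$; one must pick $\omega(n)$ small enough to be dominated by the Cai--Devroye bound yet note that $h$ exceeds it for $n$ large, so that the $aw_\A$-cycle cannot lie outside the accessible region. Everything else is the standard union-bound bookkeeping for combining visible and high-probability events.
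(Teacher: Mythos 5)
Your proposal is correct and follows essentially the same route as the paper: combine the visible-probability event $\ell(\A,p,q)\neq\bot$ (which, via Proposition~\ref{pro:cycle}, places $p$ on a cycle of length $\Theta(\log n)$) with Grusho's unique terminal SCC and the Cai--Devroye/Remark~1.3 bound showing all cycles outside the accessible part are shorter with high probability, then use that the intersection of a visible event with high-probability events is visible. Your extra care about choosing $\omega(n)$ strictly below $h$ is a sensible refinement of the paper's terser argument, not a deviation from it.
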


At this point, to prove our first theorem, on the accessible powerset  construction applied to a random transition structure with an added transition, we need a result of probabilistic number theory. 
Tóth in~\cite{toth2002}  generalizes  the folklore result 
that two independent random numbers in $[N]$ are coprime with probability that tends to $\frac{6}{\pi^2}$ to a fixed number of independent random numbers.
\begin{thmC}[\cite{toth2002}]\label{th:toth}
For any $k\geq 2$, there exists some  constant $A_k>0$ such that $k$ integers taken uniformly at random and independently in $[N]$ are pairwise coprime with probability that tends to $A_k$ as $N$ tends to infinity, with
\[
A_k=\prod_{p \;\textrm{prime}} \left(1+\frac{k}{p-1}\right)\left(1-\frac{1}{p}\right)^k.
\]
\end{thmC}

This theorem was proven using probabilistic arguments in \cite[Theorem 3.3]{CaiB03}. We adapt this proof to obtain the following corollary.

\begin{cor}\label{cor:pairwise coprime}
For any $d\geq 1$,  $d+1$ integers taken uniformly at random and independently in $\intinter{\sqrt{n}}{2\sqrt{n}}$ are pairwise coprime with probability that tends to $A_{d+1}$ as $n$ tends to infinity.
\end{cor}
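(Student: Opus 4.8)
The plan is to transpose the probabilistic proof of Theorem~\ref{th:toth} given in~\cite[Theorem~3.3]{CaiB03} from the ground set $[N]$ to the interval $I_n:=\intinter{\lceil\sqrt n\rceil}{\lfloor 2\sqrt n\rfloor}$. The only property of the ground set used in that argument is that, for every integer $e\ge 1$, the proportion of its elements divisible by $e$ is $\tfrac1e+O(\tfrac1N)$; the same estimate holds for any interval of consecutive integers with $N$ replaced by its length, so the proof carries over with only cosmetic changes.

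Concretely, set $k=d+1$ and $m:=|I_n|=\lfloor 2\sqrt n\rfloor-\lceil\sqrt n\rceil+1$, which tends to infinity, and let $X_1,\dots,X_k$ be i.i.d.\ uniform in $I_n$. The argument starts from the remark that $X_1,\dots,X_k$ are pairwise coprime if and only if, for every prime $p$, at most one $X_i$ is divisible by $p$ (equal values are not coprime as $\min I_n>1$), and that only primes $p\le 2\sqrt n$ can divide some $X_i$. For squarefree $q$, the number of multiples of $q$ in $I_n$ is $\tfrac mq+O(1)$, so $\proba(q\mid X_i)=\tfrac1q+O(\tfrac1m)$, and the Chinese Remainder Theorem gives, for squarefree $q_1,\dots,q_k$,
\[
\proba\bigl(\forall i,\ q_i\mid X_i\bigr)=\prod_{i=1}^{k}\frac1{q_i}+O\!\left(\frac{q_1\cdots q_k}{m}\right).
\]

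Fix a threshold $z$ and split the primes. For $p\le z$, an inclusion--exclusion over which $X_i$'s are divisible by which of these finitely many primes, together with the estimate above, gives
\[
\proba\bigl(\forall p\le z:\ \#\{i:p\mid X_i\}\le 1\bigr)\ \xrightarrow[n\to\infty]{}\ \prod_{p\le z}\Bigl[(1-\tfrac1p)^k+k\tfrac1p(1-\tfrac1p)^{k-1}\Bigr]=\prod_{p\le z}\Bigl(1+\tfrac{k}{p-1}\Bigr)\Bigl(1-\tfrac1p\Bigr)^{k},
\]
the error being $O_z(1/m)\to 0$. For $p>z$, since $\proba(p\mid X_i)\le\tfrac1p+\tfrac1m$ we get $\proba(\#\{i:p\mid X_i\}\ge 2)\le 2\binom k2(\tfrac1{p^2}+\tfrac1{m^2})$, hence
\[
\proba\bigl(\exists p>z:\ \#\{i:p\mid X_i\}\ge 2\bigr)\le 2\binom k2\sum_{p>z}\frac1{p^2}+2\binom k2\,\frac{\pi(2\sqrt n)}{m^2},
\]
where the last term is $O\!\left(\tfrac1{\sqrt n\log n}\right)\to 0$ and the sum is small for large $z$, uniformly in $n$. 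Letting $n\to\infty$ and then $z\to\infty$ shows that $\proba(X_1,\dots,X_k\text{ pairwise coprime})\to\prod_p(1+\tfrac{k}{p-1})(1-\tfrac1p)^k=A_k=A_{d+1}$.

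I do not foresee any real difficulty here: the work is pure bookkeeping — keeping the inclusion--exclusion over the primes $p\le z$ and its truncation error uniform in $n$, and checking that the $O(1/m^2)$ errors accumulated over the $O(\sqrt n/\log n)$ relevant primes remain negligible. As all of this mirrors~\cite{CaiB03} with $N$ replaced by $m$ and $\pi(N)$ by $\pi(2\sqrt n)$, no genuinely new idea enters.
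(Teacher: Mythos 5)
Your proof is correct, and it follows the same overall architecture as the paper's (both adapt the Cai--Bach argument: split the primes at a threshold, control the small primes by a product and the large primes by a union bound whose main term is the tail of $\sum_p p^{-2}$). The difference lies in how the small-prime product is made rigorous. You fix a threshold $z$ independent of $n$, run an inclusion--exclusion over the finitely many primes $p\le z$ with rounding errors $O_z(1/m)$, and then take a double limit ($n\to\infty$ followed by $z\to\infty$). The paper instead lets the threshold $i_n$ grow with $n$ (chosen so that the primorial $x_n=\prod_{p\le i_n}p$ stays below $\sqrt[3]{n}$) and restricts the sample space to the largest sub-interval of $\intinter{\sqrt{n}}{2\sqrt{n}}$ that is an exact union of complete residue systems modulo $x_n$; on that sub-interval the divisibility events for distinct primes $p\le i_n$ are exactly independent with probability exactly $1/p$, so the small-prime probability is an exact product with no error terms, at the cost of conditioning on an event of probability $1-O(n^{-1/6})$. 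Your route is the more standard one and requires no arithmetic trick, but it does oblige you to keep the inclusion--exclusion error uniform in $n$ for each fixed $z$ and to argue the interchange of limits via the sandwich bound; the paper's route avoids the double limit and all rounding bookkeeping for the small primes. Both are complete and yield the same limit $A_{d+1}$.
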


\begin{proof}
For any real $x>0$, let $P_x$ be the set of prime numbers smaller than or equal to $x$: $P_x=\{p\in\mathbb{Z}_{\geq 1}: p\leq x\text{ and }p\text{ prime}\}$.
For any $n\geq 2$, let $i_n$ and $x_n$ be  defined by
\[
i_n = \max\{i\in\mathbb{Z}_{\geq 0}: \prod_{p\in P_i}p\leq \sqrt[3]{n}\}\quad\text{ and }\quad x_n=\prod_{p\in P_{i_n}}p.
\]
Observe that $(i_n)_{n\geq2}$ and $(x_n)_{n\geq 2}$ both 
tend to infinity as $n$ tends to infinity. 

We now follow the idea of Cai and Bach~\cite{CaiB03}. We consider the uniform distribution on $\intinter{\sqrt{n}}{2\sqrt{n}\,}^{d+1}$ and define the following events:
\begin{itemize}
\item $E_n$: the $d+1$ integers are pairwise coprime,
\item for all prime $p$, $G^p_n$: at most one of the $d+1$ integers is divisible by $p$,
\item for all prime $p$, $B^p_n$: at least two of the $d+1$ integers are divisible by $p$.
\end{itemize}
Notice that $G^p_n$ is the complement of $B^p_n$, and that $E_n=\bigcap_{\substack{p\leq 2\sqrt{n}\\p\text{ prime}}}G^p_n$. By separating the cases $p\leq i_n$ and $p>i_n$, let us introduce the events $R_n=\bigcap_{\substack{p\leq i_n\\p\text{ prime}}}G^p_n$ and $S_n=\bigcap_{\substack{p> i_n\\p\text{ prime}}}G^p_n$, so that $E_n=R_n\cap S_n$, and $\overline{S_n}=\bigcup_{\substack{p> i_n\\p\text{ prime}}}B^p_n$. Immediately, $\proba(E_n)\leq \proba(R_n)$, and as $\proba(R_n\cap S_n)\geq \proba(R_n)-\proba(\overline{S_n})$, we have, using the union bound:
\begin{equation}\label{eq:Cai}
\proba\bigg(\bigcap_{\substack{p\leq i_n\\p\text{ prime}}}G^p_n\bigg) - \sum_{\substack{p >i_n\\p\text{ prime}}} \proba\left(B^p_n\right) 
\leq \proba\left(E_n\right)\leq \proba\bigg(\bigcap_{\substack{p\leq i_n\\ p \text{ prime}}}G^p_n\bigg).
\end{equation}
{The end of the proof consists in two steps. First, we show that $\sum_{\substack{p >i_n\\p\text{ prime}}} \proba\left(B^p_n\right)$ tends to $0
$. Then we show that $\proba\Big(\bigcap_{\substack{p\leq i_n\\p\text{ prime}}}G^p_n\Big)$ tends to $A_{d+1}$.}
In an integer interval of length $m$ there are at most $m/p+1$ multiples of $p\geq 2$. Hence, by the union bound, if $m_n=\lfloor{2\sqrt{n}\rfloor-\lceil\sqrt{n}\rceil+1}$ denote the length of $\intinter{\sqrt{n}}{2\sqrt{n}}$:
\[
\sum_{\substack{p>i_n\\p\text{ prime}}} \proba\left(B^p_n\right) \leq \binom{d+1}{2} \sum_{\substack{p=i_n+1\\p\text{ prime}}}^{\lfloor2\sqrt{n}\rfloor}\left(\frac1p+\frac1{m_n}\right)^2
\leq \binom{d+1}{2}\sum_{p\geq i_n+1} \frac{9}{p^2},
\]
as $2m_n\geq p$. The sum  $\sum_{p\geq i_n+1}p^{-2}$ is the remainder of order $i_n\to\infty$ of a convergent series, hence
\[
\lim_{n\rightarrow\infty}\sum_{p>i_n} \proba\left(B^p_n\right) = 0.
\]

For $n\geq1$, let  $I_n:=\intinter{a_n}{b_n}$ be the largest interval of the form
$\intinter{ix_n+1}{jx_n}$ that is included in $\intinter{\sqrt{n}}{2\sqrt{n}}$:
\begin{align*}
a_n &= \min\{i x_n+1: i\in\mathbb{Z}_{\geq 0}\text{ and }i x_n+1\geq\sqrt{n}\},\\
b_n &= \max\{j x_n: j\in\mathbb{Z}_{\geq 0}\text{ and }j x_n\leq2\sqrt{n}\}.
\end{align*}
Observe that $I_n\neq\emptyset$ for $n$ sufficiently large, as  after some point   $\sqrt[3]{n}$ is less than $\sqrt{n}/3$.

For any non-empty integer interval $I$, we denote by $X_{I}$ the random variable  uniformly distributed on $I$. Observe that for any set of prime numbers $\{p_1,\ldots,p_t\}\in P_{i_n}$, 
the number of multiples in $I_n$ of the product $p_1\ldots p_t$ is exactly $\frac{{b_n-a_n+1}}{p_1\cdots p_t}$. Then we have, by direct counting
\[
\proba\left(\forall i\in[t],\ p_i\text{ divides } X_{I_n}\right) = \proba\left(p_1\cdots p_t\text{ divides } X_{I_n}\right)=\frac{1}{p_1\ldots p_t}=
 \prod_{i=1}^t \proba\left(p_i\text{ divides } X_{I_n}\right).
\]
Hence we built $I_n$ so that there is no rounding effects, to avoid slight alterations of the probabilities and get the independency. Moreover, for given $p\leq i_n$, the probability that none of the coordinates of a uniform random element of $I_n$ is divisible by $p$ is $(1-1/p)^{d+1}$, and that exactly one of them is divisible by $p$ is $(d+1)(1-1/p)^{d}/p=(1-1/p)^{d+1}(d+1)/(p-1)$. Hence, if we condition the integer vector to be in $I_n^{d+1}$ we have:
\[
\proba\bigg(\bigcap_{\substack{p\leq i_n\\p\text{ prime}}}G^p_n \mid I_n^{d+1}\bigg) = \prod_{\substack{p\leq i_n\\p\text{ prime}}}\left(1-\frac1p\right)^{d+1}\left(1+\frac{d+1}{p-1}\right).
\]
As $\left(1-\frac1p\right)^{d+1}\left(1+\frac{d+1}{p-1}\right)=1+O(p^{-2})$, the product has a limit as $n$, hence $i_n$, tends to infinity, which we call $A_{d+1}$  as in~\cite{toth2002}. Finally, observe that since we chose $x_n\leq\sqrt[3]{n}$, for a uniform vector of $\intinter{\sqrt{n}}{2\sqrt{n}}^{d+1}$ we have
\[
\proba\left(I_n^{d+1}\right) \geq  \left(\frac{m_n-2x_n}{m_n}\right)^{d+1}
\geq  \left(1-\frac2{n^{1/6}}\right)^{d+1}.
\]
Therefore, $\proba\left(I_n^{d+1}\right)=1-O(n^{-1/6})$. 
{
By the law of total probabilities we have
\[
\proba\bigg(\bigcap_{\substack{p\leq i_n\\p\text{ prime}}}G^p_n\bigg)=
\proba\bigg(\bigcap_{\substack{p\leq i_n\\p\text{ prime}}}G^p_n \mid I_n^{d+1}\bigg)  \proba\left(I_n^{d+1}\right) 
+ \proba\bigg(\bigcap_{\substack{p\leq i_n\\p\text{ prime}}}G^p_n \cap \overline{I_n^{d+1}}\bigg). 
\]
This concludes the proof by Equation~\eqref{eq:Cai} since
\[
\proba\bigg(\bigcap_{\substack{p\leq i_n\\p\text{ prime}}}G^p_n \mid I_n^{d+1}\bigg)  \proba\left(I_n^{d+1}\right) 
\xrightarrow[n\rightarrow\infty]{} A_{d+1},
\]
and
\[
\proba\bigg(\bigcap_{\substack{p\leq i_n\\p\text{ prime}}}G^p_n \cap \overline{I_n^{d+1}}\bigg)\leq
\proba\left( \overline{I_n^{d+1}}\right) = O(n^{-1/6}) \xrightarrow[n\rightarrow\infty]{} 0.
\]
}
This concludes the proof.
\end{proof}

If $\A=(n,\delta,p\xrightarrow{a}q)$ is an almost deterministic transition structure, let $\D(\A)$ denote its powerset transition structure, and let $\D(\A,i_0)$ denote its accessible powerset transition structure from the unique initial state $i_0\in[n]$.
Observe that if $\ell(\A,p,q)\neq\bot$, then $\A\satisfies\tempC_{p}(n)$, and the word $aw_\A$  labels a cycle around $p$. 
Hence, using the word $au=(aw_\A)^{d}$, as in Section~\ref{sec:forward}, we have $\delta(\{p\},au)=\{p_0,p_1,\ldots,p_d\}$, where $(p_0,\ldots,p_d)=s(\A,q)$ is the starting states tuple of $\A$. Hence, if $\A$ satisfies a template in $\tempT_{p,q}(n)$,
 for every $j\geq 0$, we have $\delta(\{p_0,p_1,\ldots,p_d\},b^j)=\{\delta(p_0,b^j),\ldots,\delta(p_d,b^j)\}$. %
For $j$ large enough, all the $\delta(p_i,b^j)$'s are in the cyclic part of their respective $b$-thread, forming a $b$-cycle of length $\textrm{lcm}(\ell_0(\A), \ldots, \ell_d(\A))$ in $\D(\A)$. 
Let $\C_b(\A)$ denote this $b$-cycle, which is defined when $\ell(\A,p,q)\neq\bot$.

\begin{thm}\label{thm:determinisation}
Let $\nu$ be a positive integer.
For the uniform distribution of almost deterministic transition structures with $n$-states with an added  initial state taken uniformly at random and independently,  the accessible powerset transition structure has more than $n^\nu$ states with visible probability. 

As a consequence, an algorithm that builds the accessible powerset automaton of a random almost deterministic automaton has super-polynomial average time complexity.
\end{thm}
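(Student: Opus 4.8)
The plan is to fix $d$ large enough that the $b$-cycle built by the constructions of Sections~\ref{sec:backward}--\ref{sec:b-cycles} is already longer than $n^\nu$, and then to assemble the relevant visible and high-probability events. Fix the positive integer $\nu$ and set $d=2\nu$, so that $\frac{d+1}{2}>\nu$. Let $\delta$, $p$, $q$, $i_0$ be chosen uniformly at random and independently in their respective sets, and write $\A=(n,\delta,p\xrightarrow{a}q)$ for the resulting almost deterministic transition structure with initial state $i_0$.

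The first step is to show that, with visible probability, the following event $\mathcal E$ holds: $\ell(\A,p,q)\neq\bot$, the coordinates $\ell_0,\ldots,\ell_d$ of $\ell(\A,p,q)$ are pairwise coprime, and $p$ is accessible from $i_0$. By Proposition~\ref{pro:loops}, $\proba(\ell(\A,p,q)\neq\bot)\geq c_\ell>0$ for $n$ large, and conditioned on that event the $\ell_i$'s are independent and uniform on $\intinter{\sqrt n}{2\sqrt n\,}$; hence by Corollary~\ref{cor:pairwise coprime} they are pairwise coprime with conditional probability tending to $A_{d+1}>0$, so ``$\ell(\A,p,q)\neq\bot$ and the $\ell_i$'s pairwise coprime'' has probability at least $\frac12 c_\ell A_{d+1}$ for $n$ large. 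It remains to intersect with the accessibility of $p$: whenever $\ell(\A,p,q)\neq\bot$, the structure satisfies a template of $\tempC_n(p)$ and therefore has a cycle around $p$ labelled $aw_\A$ whose length lies between $h=\lceil\log_2\sqrt n\,\rceil$ and $2h$, in particular exceeds $\frac14\log_2 n$ for $n$ large. By Grusho~\cite{grusho73}, with high probability there is a unique terminal strongly connected component, which is accessible from every state, and by Cai and Devroye~\cite{devroye17}, with high probability every cycle of length more than $\frac14\log_2 n$ lies inside it; on this high-probability event, $p$ is accessible from $i_0$. Subtracting the $o(1)$ probability of the complementary event from $\frac12 c_\ell A_{d+1}$ shows that $\proba(\mathcal E)\geq\frac14 c_\ell A_{d+1}$ for $n$ large, i.e.\ $\mathcal E$ is visible. (This is the argument behind Lemma~\ref{lm:accessible p} refined with the coprimality of Corollary~\ref{cor:pairwise coprime}.)

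The second step is to check that on $\mathcal E$ the accessible powerset transition structure $\D(\A,i_0)$ has more than $n^\nu$ states. Choosing a shortest word $v$ with $\delta(i_0,v)=p$, the path it labels in the deterministic base visits $p$ only at its last vertex, so the added transition $p\xrightarrow{a}q$ is never triggered along it and $\{i_0\}$ reaches exactly $\{p\}$ in $\D(\A,i_0)$. By the discussion preceding the theorem, reading $(aw_\A)^d$ from $\{p\}$ then reaches $\{p_0,\ldots,p_d\}=s(\A,q)$, and reading $b$'s from this set reaches the $b$-cycle $\C_b(\A)$ of $\D(\A)$, whose length is $\mathrm{lcm}(\ell_0,\ldots,\ell_d)$. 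Since the $\ell_i$'s are pairwise coprime and each is at least $\sqrt n$, this length equals $\prod_{i=0}^{d}\ell_i\geq n^{(d+1)/2}>n^\nu$. Hence $\D(\A,i_0)$ contains an accessible cycle of more than $n^\nu$ states, which establishes the first assertion.

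For the algorithmic consequence, any algorithm that outputs the accessible powerset automaton of $\A$ must in particular write down each of its states, so its running time is at least $|\D(\A,i_0)|$; its expected running time is therefore at least $n^\nu\,\proba\!\left(|\D(\A,i_0)|>n^\nu\right)\geq c\,n^\nu$ for $n$ large, where $c>0$ is the visibility constant from the first part. Since $\nu$ is an arbitrary positive integer, this expected running time grows faster than any polynomial. I expect the step requiring the most care to be the probabilistic bookkeeping of the second paragraph --- making sure that the visible event ``$\ell(\A,p,q)\neq\bot$ and pairwise coprime'' survives intersection with the accessibility of $p$, which is only high-probability \emph{conditioned} on $\ell(\A,p,q)\neq\bot$ --- together with the small but crucial observation that pairwise coprimality is exactly what upgrades $\mathrm{lcm}(\ell_0,\ldots,\ell_d)$ to the full product $\prod_i\ell_i$, hence to a fixed power of $n$ that can be pushed above any prescribed polynomial by enlarging $d$.
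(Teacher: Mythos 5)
Your proposal is correct and follows essentially the same route as the paper: make $d$ large enough that $n^{(d+1)/2}>n^\nu$, use Proposition~\ref{pro:loops} and Corollary~\ref{cor:pairwise coprime} to get pairwise coprime $b$-cycle lengths (so that the lcm is the full product), intersect with the high-probability accessibility of $p$ via Grusho and Cai--Devroye, and trace $\{i_0\}\to\{p\}\to\{p_0,\ldots,p_d\}\to\C_b(\A)$ in the accessible powerset construction. The only difference is cosmetic bookkeeping (you bundle coprimality into the visible event up front, while the paper conditions on $\ell(\A,p,q)\neq\bot$ and applies the corollary afterwards).
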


\begin{proof}
Let $\A=(n,\delta,p\xrightarrow{a}q)$ be an almost deterministic transition structure such that $\ell(\A,p,q)\neq\bot$, and $p$ is accessible from $i_0\in[n]$, which acts as its initial state. 
Then there exists a word $v\in\Sigma^*$ such that $\delta(\{i_0\},v)=\{p\}$, as $v$ can be chosen so that the path labeled by $v$ starting from $i_0$ is simple and does not encounter $p$ twice, and as $p$ is the only state with a non-deterministic transition.
As a consequence, for $u=w_\A(aw_\A)^{d-1}$, we have $\delta(\{i_0\},vau)=\{p_0,p_1,\ldots,p_d\}$, and therefore, $\C_b(\A)$ is in $\D(\A,i_0)$.
Recall that $|\C_b(\A)|=\textrm{lcm}(\ell_0(\A),\ldots,\ell_d(\A))$, we have
\begin{align*}
\proba\left(|\D(\A,i_0)|\geq n^{(d+1)/2}\right)
& \geq 
\proba\left(\ell(\A,p,q)\neq\bot \text{ and }|\D(\A)|\geq n^{(d+1)/2}\right)\\
& \geq 
\proba\left(\ell(\A,p,q)\neq\bot \text{ and }\textrm{lcm}(\ell_0(\A),\ldots,\ell_d(\A))\geq  n^{(d+1)/2}\right)
\end{align*}
By Proposition~\ref{pro:loops}, given $\ell(\A,p,q)\neq\bot$, $(\ell_0(\A),\ldots,\ell_d(\A))$ is a uniform random element of $\intinter{\sqrt{n}}{2\sqrt{n}}^{d+1}$.
Hence Corollary~\ref{cor:pairwise coprime} applies and yields that for some $c>0$ and $n$ sufficiently large,
\[
\proba\left(\textrm{lcm}(\ell_0(\A),\ldots,\ell_d(\A))\geq  n^{(d+1)/2}\mid \ell(\A,p,q)\neq\bot\right) \geq c.
\]
Therefore, we have
\[
\proba\left(|\D(\A)|\geq n^{(d+1)/2}\right)\geq c \ \proba(\ell(\A,p,q)\neq\bot),
\]
Hence by Proposition~\ref{pro:loops}, $|\D(\A)|\geq n^{(d+1)/2}$ with visible probability. Since $p$ is accessible with high probability, we have, as in Lemma~\ref{lm:accessible p}, that both properties hold with visible probability. This concludes the proof.
\end{proof}

\section{State complexity}\label{sec:state complexity}

We are now ready to randomly select which states are final, and to state a result on languages instead of on  transition structures, by studying the state complexity of a language recognized by a random almost deterministic automaton. 

In our model, for every $n$, each state is final with fixed probability $f_n$, which may depend on $n$ as long as it is not too close to either $0$ or $1$: we require that $f_n$ and $1-f_n$ are in $\Omega(\frac1{\sqrt{n}})$. Using a variant of the Birthday Problem again, this ensures that a uniform random set of states of size  $\Theta(\sqrt{n})$ contains both final and non-final states with visible probability.

In the proof of Theorem~\ref{thm:determinisation}, we exhibited the existence with visible probability of $d+1$ occurrences of $b$-cycles in a random almost deterministic transition structure, yielding a large $b$-cycle when applying the powerset construction. We will focus on $b$-cycles in the sequel, as it turns out to be sufficient to prove our main result. We rely on the notion of primitive words, which we now recall.

Let $\Gamma$ be a nonempty finite alphabet. If $w\in\Gamma^\ell$ is a word of length $\ell$, we write $w=w_0\cdots w_{\ell-1}$ and use the convention that all indices are taken modulo $\ell$: for instance $w_\ell$ is the letter $w_0$. A nonempty word $w$ is \emph{primitive} if it is not a non-trivial power of another word: it cannot be written $w=z^k$ for some word $z$ and some $k\geq 2$. If $w$ is primitive, it is easily seen that every circular permutation of $w$ is also primitive. See~\cite{lothaire1997} for a more detailed account on primitive words.

Primitive words appear in our proof with the following observation.
If $\C=(c_0,\ldots,c_{\ell-1})$ is a $b$-cycle of states starting at $c_0$, its \emph{associated binary word} is the size-$\ell$  word $v=v_0\ldots v_{\ell-1}$ of $\{0,1\}^\ell$ where $v_i=1$ if and only if
$c_i$ is a final state. Recall that if we start the same cycle elsewhere, at $c_i$, the associated word $v'=v_i\cdots v_{\ell}v_0\cdots v_{i-1}$ is primitive if and only if $v$ is primitive: reading the associated binary word from any starting state preserves primitivity. A $b$-cycle is said to be \emph{primitive} if one (equivalently, all) of its associated words is (are) primitive. Our study is based on the following statement.
\begin{lem}\label{lm:primitive state complexity}
Let $\A$ be a deterministic automaton on $\Sigma$ and $\alpha\in\Sigma$. If $\C$ is a primitive $\alpha$-cycle of $\A$, then the states of $\C$ are pairwise non-equivalent: the state complexity of the language recognized by $\A$ is at least $|\C|$, the number of states in $\C$.
\end{lem}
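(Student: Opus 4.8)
The plan is to show directly that two distinct states $c_i$ and $c_j$ of the $\alpha$-cycle $\C=(c_0,\dots,c_{\ell-1})$ are not equivalent, by exhibiting a word that separates them. Since the only letter I need to consider is $\alpha$, the relevant words are the powers $\alpha^m$ for $m\geq 0$. Reading $\alpha^m$ from $c_i$ lands on $c_{i+m}$ (indices mod $\ell$), which is final exactly when $v_{i+m}=1$, where $v=v_0\cdots v_{\ell-1}$ is the associated binary word. Hence $c_i$ and $c_j$ are equivalent with respect to words in $\alpha^*$ precisely when $v_{i+m}=v_{j+m}$ for all $m\geq 0$, i.e.\ when the cyclic shift of $v$ by $j-i$ positions equals $v$ itself. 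So it suffices to show that a primitive word $v$ of length $\ell$ has no nontrivial cyclic symmetry: if $v_{k+m}=v_m$ for all $m$ and $0<k<\ell$, then $v$ is not primitive.

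First I would record the elementary fact that the word $\alpha^m$ sends $c_i$ to $c_{(i+m)\bmod\ell}$; this is immediate from the definition of an $\alpha$-cycle. Then, for $i\neq j$ in $\{0,\dots,\ell-1\}$, set $k=(j-i)\bmod\ell$, so $0<k<\ell$. The key step is the combinatorial claim: if $v\in\{0,1\}^\ell$ satisfies $v_{m}=v_{m+k}$ for all $m$ (indices mod $\ell$), then $v$ is not primitive. This is a standard consequence of the Fine--Wilf / period-gcd phenomenon on circular words: the set of such valid shifts $k$ (together with $0$) forms a subgroup of $\mathbb{Z}/\ell\mathbb{Z}$, hence is generated by some $g\mid \ell$ with $g<\ell$, and $v$ is then the $(\ell/g)$-th power of its length-$g$ prefix, contradicting primitivity since $\ell/g\geq 2$. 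I would cite \cite{lothaire1997} for this, or give the two-line group argument.

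Combining, if $\C$ is primitive then for every $i\neq j$ there is some $m$ with $v_{i+m}\neq v_{j+m}$, so $\alpha^m$ is accepted from exactly one of $c_i,c_j$ once we move the initial state there; hence $c_i$ and $c_j$ are not equivalent. As the $|\C|=\ell$ states of $\C$ are pairwise non-equivalent and are accessible states of $\A$ (being on a cycle reachable within $\A$, or at least: the proposition in Section~\ref{sec:definitions} only requires a set of pairwise non-equivalent \emph{accessible} states, which is how we will invoke this lemma in context), the stated bound on the state complexity follows.

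I expect the only real content to be the combinatorial claim that a primitive word has trivial cyclic stabilizer; everything else is bookkeeping about reading $\alpha^m$ around the cycle. This claim is entirely classical, so the main "obstacle" is merely deciding how much of it to prove versus cite; there is no genuine difficulty.
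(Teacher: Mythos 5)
Your proposal is correct and follows essentially the same route as the paper: both arguments observe that equivalence of two cycle states, tested against words in $\alpha^*$, forces the associated binary word to coincide with a nontrivial cyclic shift of itself, which contradicts primitivity. The only cosmetic difference is the classical fact invoked at the end (your subgroup-of-shifts argument versus the paper's $uv=vu$ commutation lemma from Lothaire), and these are interchangeable formulations of the same result.
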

\begin{proof}
Let $\A=(n,\delta)$ and 
let $p$ and $q$ be two different states of $\C$. Let $x$ and $y$ be the associated binary words of $\C$ starting at $p$ and $q$, respectively. Let $k$ be the smallest positive integer such that $\delta(p,\alpha^k)=q$ and
let $u$ be the prefix of length $k$ of $x$, and $v$ be the associated suffix: $x=uv$. Then $y=vu$. Assume by contradiction that $p$ and $q$ are equivalent. This implies that $x=y$, as the automata obtained by placing the initial states either on $p$ or $q$ recognize the same elements of $\{\alpha\}^*$. Hence $uv=vu$, and therefore $u$ and $v$ are the power of the same word by a classical result on primitive words~\cite[Prop. 1.3.2 page 8]{lothaire1997}. This is in contradiction with the fact that $\C$ is primitive.
\end{proof}

\subsection{A composition of binary primitive words}

If $w^{(1)}$ and $w^{(2)}$ are two non-empty words of respective lengths $\ell_1$ and $\ell_2$ on the binary alphabet $\{0,1\}$, we denote by $w^{(1)}\uprod w^{(2)}$ the word $w$ of length $\ell = \text{lcm}(\ell_1,\ell_2)$ given by $w_i=1$ if and only if $w^{(1)}_i=1$ or $w^{(2)}_i=1$ (recall that the indices are taken modulo the length of the word). We will see in the sequel that this operation naturally happens when extending the notion of state equivalence from each $b$-cycle to the corresponding $b$-cycle in the powerset construction.

\begin{lem}\label{lm:primitive product}
Let $w^{(1)}$ and $w^{(2)}$ be two primitive words on $\{0,1\}$ of lengths at least $2$ that are coprime. Then the word $w^{(1)}\uprod w^{(2)}$ is primitive.
\end{lem}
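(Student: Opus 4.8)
The plan is to argue by contradiction: suppose $w := w^{(1)} \uprod w^{(2)}$ is not primitive, so $w = z^k$ for some word $z$ of length $m = \ell/k$ with $k \geq 2$, where $\ell = \operatorname{lcm}(\ell_1,\ell_2) = \ell_1 \ell_2$ (the last equality because $\ell_1,\ell_2$ are coprime). The key structural fact to exploit is that the positions carrying a $1$ in $w$ are exactly $S_1 \cup S_2$, where $S_j = \{\, i \bmod \ell : w^{(j)}_i = 1\,\}$ viewed inside $\mathbb{Z}/\ell\mathbb{Z}$; note $S_j$ is the preimage of the support of $w^{(j)}$ under the natural projection $\mathbb{Z}/\ell\mathbb{Z} \to \mathbb{Z}/\ell_j\mathbb{Z}$, hence $S_j$ is a union of cosets of the subgroup $\ell_j \mathbb{Z}/\ell\mathbb{Z}$ of index $\ell_j$. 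Periodicity of $w$ with period $m$ means $S_1 \cup S_2$ is invariant under translation by $m$ in $\mathbb{Z}/\ell\mathbb{Z}$, i.e. invariant under the subgroup $H = m\mathbb{Z}/\ell\mathbb{Z}$, which has order $k$.

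First I would reduce to the case where $m$ is of the form $\ell_1 \ell_2 / r$ with $r$ a prime dividing $k$; equivalently it suffices to derive a contradiction assuming $k$ is prime, since any nontrivial period gives a prime period. Write $k = p$ prime. Since $p \mid \ell_1 \ell_2$ and $\gcd(\ell_1,\ell_2)=1$, the prime $p$ divides exactly one of them, say $p \mid \ell_1$ (the other case is symmetric). The subgroup $H$ of order $p$ then sits inside the subgroup $\ell_1\mathbb{Z}/\ell\mathbb{Z}$ of order $\ell_2$? — one must check the divisibility: $H$ has order $p$, and $\ell_1\mathbb{Z}/\ell\mathbb{Z} \cong \mathbb{Z}/\ell_2\mathbb{Z}$ has order $\ell_2$, while $p \mid \ell_1$ means $p \nmid \ell_2$, so in fact $H \cap (\ell_1\mathbb{Z}/\ell\mathbb{Z})$ is trivial and instead $H \subseteq \ell_2\mathbb{Z}/\ell\mathbb{Z}$ (order $\ell_1$, divisible by $p$). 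The point: $S_2$ is a union of cosets of $\ell_2\mathbb{Z}/\ell\mathbb{Z} \supseteq H$, so $S_2$ is automatically $H$-invariant; therefore the $H$-invariance of $S_1 \cup S_2$ forces $(S_1 \setminus S_2)$ together with enough of $S_2$ to behave well, and with a little care one shows $S_1$ itself must be $H$-invariant on the complement of $S_2$. The cleanest route is to project: apply the projection $\pi_1 : \mathbb{Z}/\ell\mathbb{Z} \to \mathbb{Z}/\ell_1\mathbb{Z}$. Since $H \cong \mathbb{Z}/p\mathbb{Z}$ maps injectively? no — $\pi_1(H)$ is the order-$p$ subgroup of $\mathbb{Z}/\ell_1\mathbb{Z}$. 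Under $\pi_1$, the set $S_1$ pushes to $\operatorname{supp}(w^{(1)})$ and $S_2$ pushes to all of $\mathbb{Z}/\ell_1\mathbb{Z}$ iff $w^{(2)}$ is not all-zero — which it is not since it has length $\geq 2$ and... wait, $w^{(2)}$ could still be $00$. Here I must use primitivity: a primitive binary word of length $\geq 2$ is neither $0^{\ell_2}$ nor $1^{\ell_2}$, so $\operatorname{supp}(w^{(2)})$ is a proper nonempty subset of $\mathbb{Z}/\ell_2\mathbb{Z}$, hence $S_2$ is a proper nonempty subset of $\mathbb{Z}/\ell\mathbb{Z}$, and likewise for $S_1$.

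The heart of the argument is then: pick an index $i$ with $w^{(2)}_i = 0$ (exists), and look at the whole coset $i + \ell_2\mathbb{Z}/\ell\mathbb{Z}$; on this coset $w^{(2)}$ contributes only $0$'s, so $w$ restricted to this coset equals $w^{(1)}$ restricted to it, which (as the coset is a transversal-like copy) reproduces a full period of $w^{(1)}$, i.e. all of $w^{(1)}$ up to rotation. Since $w$ has period $m = \ell/p$ and this coset has size $\ell_1$ with $p \mid \ell_1$, restricting the period-$m$ structure of $w$ to this coset yields that $w^{(1)}$ (up to rotation) has period $\ell_1/p$, contradicting primitivity of $w^{(1)}$. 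Symmetrically, picking $i$ with $w^{(1)}_i = 0$ and using $p \mid \ell_2$ in the other case handles the remaining situation. I expect the main obstacle — and the step needing the most care — to be making precise the claim ``restricting the period-$m$ structure of $w$ to the coset $i + \ell_2\mathbb{Z}$ yields a period $\ell_1/p$ of $w^{(1)}$'': this is a small lemma about how a period of a word on $\mathbb{Z}/\ell\mathbb{Z}$ descends along the quotient $\mathbb{Z}/\ell\mathbb{Z} \twoheadrightarrow \mathbb{Z}/\ell_1\mathbb{Z}$ when one restricts to a subgroup coset on which the other factor is constant, and it relies on the arithmetic $\gcd(\ell_1,\ell_2)=1$ together with $p \mid \ell_1$ to identify $\gcd(m,\ell_1) = \gcd(\ell/p,\ell_1) = \ell_1/p$. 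Once that arithmetic identity is in hand, everything else is bookkeeping with the Fine–Wilf periodicity lemma or a direct Bézout argument. An alternative, possibly slicker, presentation avoids the group language entirely: observe that the exponent (number of $1$'s) of $w^{(1)} \uprod w^{(2)}$, or rather a counting/divisibility argument on $1$-positions modulo primes, pins down the period; but I would go with the coset argument above as it makes the role of coprimality and of ruling out $0^\ell$/$1^\ell$ most transparent.
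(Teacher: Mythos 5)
Your proposal is correct and follows essentially the same route as the paper's proof: reduce to a prime period $p$ dividing $k$, note $p$ divides exactly one of $\ell_1,\ell_2$ (say $\ell_1$), use primitivity of $w^{(2)}$ to find a position where it is $0$, restrict to the arithmetic progression $i_0+j\ell_2$ (your coset of $\ell_2\mathbb{Z}/\ell\mathbb{Z}$) on which $w$ coincides with $w^{(1)}$, and transfer the period $\ell/p$ of $w$ into a nontrivial rotation-invariance of $w^{(1)}$ by $\gcd(\ell/p,\ell_1)=\ell_1/p<\ell_1$, contradicting primitivity. The paper performs the same index-chasing without the group-theoretic packaging, and the arithmetic identity you single out as the delicate step is exactly its observation that $r\ell_2\not\equiv 0 \pmod{\ell_1}$.
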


\begin{proof}
Let $w=w^{(1)}\uprod w^{(2)}$. Assume by contradiction that there exists some word $z$ and some $k\geq 2$ such that $w=z^k$. {Letting} $p$ be a prime number that divides $k$, we have $w=(z^{k/p})^p$. This
yields that $p$ divides $\ell=\text{lcm}(\ell_1,\ell_2)=\ell_1\times\ell_2$ and that for every non-negative integer $i$, $w_i=w_{i+\ell/p}$ (indices taken modulo
$\ell$). Obviously, $p$ divides either $\ell_1$ or $\ell_2$, but not both as they are coprime. By symmetry, assume that it divides $\ell_1$: $\ell_1=pr$ and
$\ell/p = r\ell_2$. 

Since $w^{(2)}$ has length at least $2$ and is primitive, there exists an index $i_0\in\{0,\ldots,\ell_2-1\}$ such that
$w^{(2)}_{i_0}=0$. Define $i_j=i_0+j\ell_2$, for any $j\geq 0$. 
As indices in $w^{(2)}$ are taken modulo $\ell_2$, we have
$w^{(2)}_{i_j}=0$ for all $j\geq 0$. Therefore, $w^{(1)}_{i_j}=1$
if and only if $w_{i_j} = 1$, so that $w_{i_j}=w^{(1)}_{i_j}$ for all $j\geq 0$. 
As $w_{i_j}=w_{i_j+\ell/p}$ and $\ell/p=r\ell_2$, this yields that $w^{(1)}_{i_j}=w^{(1)}_{i_j+r\ell_2}$ for all $j\geq 0$. Moreover,
$r\ell_2$ is not a multiple of $\ell_1$: let $\alpha\geq 1$ be the largest integer such that $p^\alpha$ divides $\ell_1=pr$, then $p^\alpha$ does not divide $r\ell_2$, since $p$ does not divide $\ell_2$.

Let $s:= r\ell_2 \mod \ell_1$, we just established that $s\neq 0$, so we have
the non-trivial relation $w^{(1)}_{i_j}=w^{(1)}_{i_j+s}$ for all $j\geq 0$. Recall that $i_j=i_0+j\ell_2$. As $\ell_1$ and $\ell_2$ are coprime, the $i_j$'s take all  the values modulo $\ell_1$ when $j$ ranges from $0$ to $\ell_1-1$. Hence, for all $k\in\{0,\ldots,\ell_1-1\}$, $w^{(1)}_{k}= w^{(1)}_{k+s}$, for some {$0<s<\ell_1$}.
This is a contradiction with the fact that $w^{(1)}$ is primitive, concluding the proof.
\end{proof}

\begin{rem}    
Lemma~\ref{lm:primitive product} does not hold if the lengths are not coprime. For instance, if $w^{(1)}=011111$ and $w^{(2)}=1011$,
then $w^{(1)}\uprod w^{(2)}=\underbrace{1\ldots 1}_{12\text{ times}}$, which is not primitive.
\end{rem}

\subsection{Primitivity of random words} From a probabilistic point of view, it is well known~\cite{lothaire1997} that a uniform random word is primitive with exponentially high probability. We rely on the following finer result.
\begin{lemC}[
\cite{sven16}]\label{lm:primitive proba old}
Let $\mu$ be a probability measure on $\{0,1\}^n$ such that
$\mu(0^n) = \mu(1^n) = 0$ and such that two words with the same number of $0$'s have  same probability. Then the probability that a word is not primitive under $\mu$ is at most $\frac2n$.  
\end{lemC}
We adapt it to our needs as follows:
\begin{cor}\label{cor:proba primitive}
Let $f_n$ be a sequence of real numbers in $(0,1)$ such that
$f_n = \Omega(\frac1{\sqrt{n}})$ and $1-f_n = \Omega(\frac1{\sqrt{n}})$. {Consider the distribution on random words of length $n$ where each letter is chosen independently to be $1$ with probability $f_n$ and $0$ with probability $1-f_n$. 
There exists  $c>0$ such that, for $n$ sufficiently large and for any integer $\ell\geq\sqrt{n}$, a random  binary word of length $\ell$  is primitive with probability at least $c$. }
\end{cor}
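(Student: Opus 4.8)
The plan is to deduce this from Lemma~\ref{lm:primitive proba old}. The distribution in which each of the $\ell$ letters is independently $1$ with probability $f_n$ gives equal mass to words with the same number of $0$'s, but it fails the hypothesis $\mu(0^\ell)=\mu(1^\ell)=0$, so the lemma does not apply directly. First I would fix $\alpha>0$ with $f_n\geq\alpha/\sqrt n$ and $1-f_n\geq\alpha/\sqrt n$ for $n$ large (this is exactly the $\Omega(1/\sqrt n)$ assumption), fix an integer $\ell\geq\sqrt n$ (so $\ell\geq2$ for $n$ large, and length-$\ell$ constant words are non-primitive), and write $\mu$ for this product distribution on $\{0,1\}^\ell$. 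I would then pass to the distribution $\mu'$ obtained by conditioning a $\mu$-random word to be non-constant: conditioning on this symmetric event keeps equal masses on words with the same number of $0$'s while making both constant words have probability $0$, so Lemma~\ref{lm:primitive proba old} applies to $\mu'$ and gives that a $\mu'$-random word is non-primitive with probability at most $2/\ell$.

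Next, since $0^\ell$ and $1^\ell$ are themselves non-primitive, I would split a $\mu$-random word according to whether it is constant, obtaining
\[
\proba_\mu(\text{word non-primitive})\ \leq\ \mu(\{0^\ell,1^\ell\})+\frac2\ell\ =\ f_n^\ell+(1-f_n)^\ell+\frac2\ell .
\]
The remaining work is to bound the right-hand side strictly below $1$, uniformly in $\ell\geq\sqrt n$ and in the choice of $f_n$. Whichever of $f_n$, $1-f_n$ is at most $\tfrac12$ contributes a term at most $2^{-\ell}\leq2^{-\sqrt n}$, while the other is at most $1-\alpha/\sqrt n$ and hence contributes at most $(1-\alpha/\sqrt n)^\ell\leq e^{-\alpha\ell/\sqrt n}\leq e^{-\alpha}$, using $\ell\geq\sqrt n$. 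Thus $\proba_\mu(\text{non-primitive})\leq e^{-\alpha}+2^{-\sqrt n}+2/\sqrt n$, and for $n$ large the last two terms fall below $\tfrac12(1-e^{-\alpha})$, so a $\mu$-random word is primitive with probability at least $c:=\tfrac12(1-e^{-\alpha})>0$, as required; note $c$ depends only on $\alpha$ and the bound is monotone in $\ell$.

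I expect the only genuine obstacle to be the treatment of the two constant words $0^\ell$ and $1^\ell$: they are the sole place where $\mu$ violates the hypotheses of Lemma~\ref{lm:primitive proba old}, and it is precisely the interplay of $\ell\geq\sqrt n$ with the two-sided bound $f_n,1-f_n=\Omega(1/\sqrt n)$ that forces $f_n^\ell+(1-f_n)^\ell$ to stay bounded away from $1$ — a weaker hypothesis such as $f_n$ merely bounded away from $0$ and $1$ by a fixed constant would not suffice once $\ell$ grows. Everything else is a short computation.
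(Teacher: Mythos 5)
Your proposal is correct and follows essentially the same route as the paper: isolate the event that the word is constant, bound its probability by $f_n^\ell+(1-f_n)^\ell\leq 2^{-\sqrt n}+e^{-\alpha}$ using $\ell\geq\sqrt n$ and the two-sided $\Omega(1/\sqrt n)$ hypothesis, and apply Lemma~\ref{lm:primitive proba old} to the conditional (non-constant) distribution to get the extra $2/\ell$ term. The only cosmetic difference is that the paper invokes the $0\leftrightarrow 1$ symmetry to assume $f_n\leq\frac12$, where you handle the two terms directly.
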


\begin{proof}
Let $X$ be the event that $w=0^\ell$ or $w=1^\ell$. We have $\mathbb{P}(X) = f_n^\ell+(1-f_n)^\ell$. Since changing the $0$'s in $1$'s and the $1$'s in $0$'s preserves primitivity, we can assume by symmetry that $f_n\leq \frac12$.
By hypothesis, there exists some constant
$\beta>0$ such that $\frac{\beta}{\sqrt{n}}\leq f_n$ and 
$\frac{\beta}{\sqrt{n}}\leq 1-f_n$ hence, as $f_n\leq \frac12$, we have 
\[
f_n^\ell \leq \frac1{2^{\sqrt{n}}}\text{ and }(1-f_n)^\ell \leq \left(1-\frac{\beta}{\sqrt{n}}\right)^{\sqrt{n}}.
\]
Since $(1-\frac{\beta}{\sqrt{n}})^{\sqrt{n}} = e^{-\beta}+O(\frac1{\sqrt{n}})$, there exists some constant $\delta<1$ such that $\mathbb{P}(X) \leq \delta$, for $n$ sufficiently large. 

Let $W$ be a random word under our distribution. For any $w\in\{0,1\}^\ell$, the conditional probability that $W$ values $w$ given that $W\notin\{0^\ell,1^\ell\}$ is
\[
\mathbb{P}(W=w\mid \overline{X}) = 
    \begin{cases}
    0 & \text{if }w=0^\ell \text{ or } w=1^\ell,\\
    \frac{\mathbb{P}(W=w)}{1-\mathbb{P}(X)}&\text{otherwise}.
    \end{cases}
\]
Hence we are in the settings of Lemma~\ref{lm:primitive proba old}, and the probability that $w$ is not primitive, given that $w\notin\{0^\ell,1^\ell\}$ is at most $\frac{2}\ell$. Moreover, we have:
\[
\mathbb{P}(w \text{ not primitive}) = \mathbb{P}(X) +
\mathbb{P}(w \text{ not primitive}\mid\overline{X})\mathbb{P}(\overline{X})
\leq \delta  + \frac{2}\ell,
\]
where $\delta  + \frac{2}\ell<1$, for $n$ large enough. This concludes the proof.
\end{proof}

\subsection{Finalizing the proof of Theorem~\ref{th:main}}

If $\A=(n,\delta)$ is a transition structure such that $\ell(\A,p,q)\neq\bot$, recall that $\C_b(\A)$ is the $b$-cycle of its powerset construction $\D(\A)$ starting from its set of starting states $P=\{p_0,\ldots,p_d\}$. Let $j_0$ be a positive integer such that $\delta(p_i,b^{j_0})$ is in the $b$-cycle of its $b$-thread for all $i\in\{0,\ldots,d\}$, and let $P_0=(\delta(p_0,b^{j_0}),\ldots,\delta(p_d,b^{j_0}))$. 
We use a tuple for $P_0$ as it is easier for the upcoming arguments, but since the $b$-threads of $\A$ are pairwise disjoint, and since we only read $b$'s from $P_0$ in the sequel, they can be identified with the associated sets of $\D(\A)$. 
Let us write $c_j^{(i)}=\delta(p_i,b^{j_0+j})$ for all $i\in\{0,\ldots,d\}$ and all $j\geq 0$. The $b$-cycle $\C_i$ of the $i$-th $b$-thread has length $\ell_i$, so if $j=j'\mod\ell_i$, then $c_j^{(i)}=c_{j'}^{(i)}$. 
Moreover, the $b$-cycle $\C_b(\A)$ is made of the $(c_j^{(0)},\ldots,c_j^{(d)})$ for $j\geq 0$.

We now consider a set of final state $F\subseteq[n]$. For this $F$, and for every 
$i\in\{0,\ldots,d\}$, let $w^{(i)}$ denote the binary word associated to $\C_i$, starting at $c_0^{(i)}$. Let $w$ denote the binary word associated with $\C_b(\A)$, starting at $P_0$. Since a state $X$ in the powerset construction is final if and only if one of its state is in $F$, by a direct induction we have
\[
w = w^{(0)}\odot w^{(1)}\odot\ldots\odot w^{(d)}.
\] 
This is depicted in Figure~\ref{fig:primitive} for two $b$-threads.

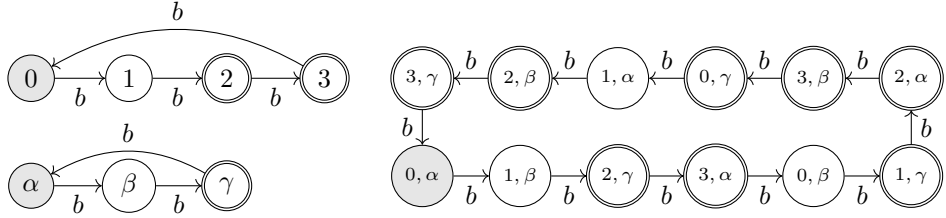
\begin{figure}[h]
\centering
    {\small
\begin{tikzpicture}[scale=.65]
	\node[draw,circle,fill=black!10] (s0) at (0,0) {$0$};
	\node[draw,circle] (s1) at (2,0) {$1$};
	\node[draw,circle,accepting] (s2) at (4,0) {$2$};
	\node[draw,circle,accepting] (s3) at (6,0) {$3$};
	
	\draw[->] (s0) -- node[below]{$b$} (s1);
	\draw[->] (s1) -- node[below]{$b$} (s2);
	\draw[->] (s2) -- node[below]{$b$} (s3);
	\draw[->] (s3) edge[bend right=30] node[above]{$b$} (s0);
	
	\node[draw,circle,fill=black!10] (t0) at (0,-2.2) {$\alpha$};
	\node[draw,circle] (t1) at (2,-2.2) {$\beta$};
	\node[draw,circle,accepting] (t2) at (4,-2.2) {$\gamma$};
	
	\draw[->] (t0) -- node[below]{$b$} (t1);
	\draw[->] (t1) -- node[below]{$b$} (t2);
	\draw[->] (t2) edge[bend right=30] node[above]{$b$} (t0);
	
	\node[draw,circle,fill=black!10] (p00) at (8,-2) {\tiny $0,\alpha$};
	\node[draw,circle] (p11) at (10,-2) {\tiny$1,\beta$};
	\node[draw,circle,accepting] (p22) at (12,-2) {\tiny$2,\gamma$};
	\node[draw,circle,accepting] (p30) at (14,-2) {\tiny$3,\alpha$};
	\node[draw,circle] (p01) at (16,-2) {\tiny$0,\beta$};
	\node[draw,circle,accepting] (p12) at (18,-2) {\tiny$1,\gamma$};
	
	\node[draw,circle,accepting] (p32) at (8,0) {\tiny $3,\gamma$};
	\node[draw,circle,accepting] (p21) at (10,0) {\tiny$2,\beta$};
	\node[draw,circle] (p10) at (12,0) {\tiny$1,\alpha$};
	\node[draw,circle,accepting] (p02) at (14,0) {\tiny$0,\gamma$};
	\node[draw,circle,accepting] (p31) at (16,0) {\tiny$3,\beta$};
	\node[draw,circle,accepting] (p20) at (18,0) {\tiny$2,\alpha$};

	\draw[->] (p00) -- node[below]{$b$} (p11);
	\draw[->] (p11) -- node[below]{$b$} (p22);
	\draw[->] (p22) -- node[below]{$b$} (p30);
	\draw[->] (p30) -- node[below]{$b$} (p01);
	\draw[->] (p01) -- node[below]{$b$} (p12);
	\draw[->] (p12) -- node[right]{$b$} (p20);
	
	\draw[->] (p20) -- node[above]{$b$} (p31);
	\draw[->] (p31) -- node[above]{$b$} (p02);
	\draw[->] (p02) -- node[above]{$b$} (p10);
	\draw[->] (p10) -- node[above]{$b$} (p21);
	\draw[->] (p21) -- node[above]{$b$} (p32);
	\draw[->] (p32) -- node[left]{$b$} (p00);
\end{tikzpicture}
}
    \caption{On the left, two primitive $b$-cycles (accepting states are denoted by double circles) whose associated words are $0011$ (top) and $001$ (bottom), starting at $0$ and $\alpha$, respectively. On the right, the $b$-cycle of $\{0,\alpha\}$ of associated word $0011\uprod 001 = 001101111011$, which is primitive by Lemma~\ref{lm:primitive product}.\label{fig:primitive}}
\end{figure}

{
To finish the proof of Theorem~\ref{th:main}, we follow the proof of Theorem~\ref{thm:determinisation} and add the independent choice of which states are final. By Corollary~\ref{cor:proba primitive}, all the $b$-cycles used to build $\C_b(\A)$ are also primitive, still with visible probability. Hence 
by Lemma~\ref{lm:primitive product}, the product cycle $\C_b(\A)$ is also primitive with visible probability. This concludes the proof by Lemma~\ref{lm:primitive state complexity}.
}

\section{{Remark on dense random DFAs}}\label{sec:dense models}

{
In this short section, we  illustrate the degenerate nature of \emph{dense} random NFAs by considering the model with a unique initial state and in which each transition is added independently with a fixed probability $p \in (0,1)$. 
A uniform dense random NFA (with one initial state) corresponds to $p=\frac{1}{2}$. 
The following proof is very similar to the analysis of the diameter of random dense undirected graphs~\cite[Corollary 10.11]{Bollobas01}.
}

{
Consider a random $n$-state NFA $\A$ under this model, with initial state $i_0$ and transition function $\delta$. Let $w=w_0w_1$ be any word of length $2$ on $\Sigma$ and let $q$ be any state of $\A$. We bound from above the probability that $q\notin\delta(i_0,w)$ as follows~: 
\begin{align*}
\proba\left(q\notin\delta(i_0,w)\right) & =
\proba\left(\forall r\in [n],\ r\notin\delta(i_0,w_0)
\text{ or } q\notin\delta(r,w_1)\right) \\
&\leq \proba\left(\forall r\in [n]\setminus\{i_0\},\ r\notin\delta(i_0,w_0)
\text{ or } q\notin\delta(r,w_1)\right) \\
& = \prod_{r\in [n]\setminus\{i_0\}} \proba\left(r\notin\delta(i_0,w_0)
\text{ or } q\notin\delta(r,w_1)\right) \\
& = \prod_{r\in [n]\setminus\{i_0\}} \left(1- \proba\left(r\in\delta(i_0,w_0)
\text{ and } q\in\delta(r,w_1)\right)\right)\\
& = \prod_{r\in [n]\setminus\{i_0\}} (1-p^2) = (1-p^2)^{n-1}.
\end{align*}
Thus, by the union bound we  obtain that with high probability, the image of $\{i_0\}$ by any word of length $2$ is the set of all states of $\A$. We therefore obtain the following statement, which establishes the degeneracy of the dense model.
\begin{prop}\label{pro:dense}
The accessible subset automaton of a random $n$-state dense NFA with one initial state has at most $|\Sigma|+2$ states with high probability. Hence, with high probability, the state complexity of the language is bounded by a constant independent of $n$ and $p$.
\end{prop}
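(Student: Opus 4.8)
The plan is to build on the bound already derived in the preceding paragraph, namely that for any fixed word $w$ of length $2$ and any fixed state $q$, we have $\proba(q\notin\delta(i_0,w))\leq (1-p^2)^{n-1}$. The argument then proceeds by a union bound followed by a simple structural observation about the accessible subset automaton.

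First I would apply the union bound over all states $q\in[n]$ and all words $w$ of length $2$ (there are $|\Sigma|^2$ such words, a constant): the probability that there exists some length-$2$ word $w$ and some state $q$ with $q\notin\delta(i_0,w)$ is at most $n\,|\Sigma|^2\,(1-p^2)^{n-1}$, which tends to $0$ as $n\to\infty$ since $1-p^2<1$. Hence, with high probability, $\delta(i_0,w)=[n]$ for \emph{every} word $w$ of length exactly $2$.

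Next I would translate this into a statement about the accessible subset automaton $\D(\A,i_0)$. Its states are the sets reachable from $\{i_0\}$ by reading words; sorting these by the length of the shortest word reaching them, the only sets reachable by a word of length $0$ or $1$ are $\{i_0\}$ and the sets $\delta(i_0,\alpha)$ for $\alpha\in\Sigma$ (at most $1+|\Sigma|$ sets, some possibly coinciding or equal to $[n]$). On the high-probability event above, every word of length $\geq 2$ maps $\{i_0\}$ to $[n]$: indeed if $|v|\geq 2$, writing $v=v'w$ with $|w|=2$, monotonicity of $\delta$ on sets gives $\delta(i_0,v)=\delta(\delta(i_0,v'),w)\supseteq\delta(i_0,w)=[n]$, so $\delta(i_0,v)=[n]$. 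Therefore the set of accessible states of $\D(\A,i_0)$ is contained in $\{\{i_0\}\}\cup\{\delta(i_0,\alpha):\alpha\in\Sigma\}\cup\{[n]\}$, which has at most $|\Sigma|+2$ elements. Since the state complexity of the recognized language is at most the number of accessible states of any automaton recognizing it, the second assertion follows immediately, and the bound $|\Sigma|+2$ does not depend on $n$ or $p$.

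There is no serious obstacle here: the only minor point to handle carefully is the case distinction for words of length $0$ and $1$ versus $\geq 2$, and the observation that reaching $[n]$ is absorbing under further transitions (monotonicity of the powerset transition map). Both are routine. One could even sharpen the count by noting that $[n]$ is always among the reachable sets once $n\geq 1$ and the length-$2$ property holds, but the statement as written asks only for the upper bound $|\Sigma|+2$, so I would leave it at that.
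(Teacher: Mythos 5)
Your overall approach is exactly the paper's: the displayed bound plus a union bound over the $n|\Sigma|^2$ pairs $(q,w)$ gives that, with high probability, $\delta(i_0,w)=[n]$ for every word $w$ of length exactly $2$, and the accessible subset automaton then consists of $\{i_0\}$, the at most $|\Sigma|$ sets $\delta(i_0,\alpha)$, and $[n]$, for at most $|\Sigma|+2$ states. The paper leaves the structural part of this argument implicit; you spell it out, but the one step you spell out is wrong as written. You factor $v=v'w$ with the length-$2$ block $w$ as a \emph{suffix} and claim $\delta(\delta(i_0,v'),w)\supseteq\delta(i_0,w)$ ``by monotonicity''. Monotonicity gives $\delta(X,w)\supseteq\delta(Y,w)$ only when $X\supseteq Y$, so this step requires $i_0\in\delta(i_0,v')$, which nothing guarantees: already for $|v|=3$ one has $|v'|=1$, and there is no reason that $i_0\in\delta(i_0,\alpha)$. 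So your argument does not actually establish that $\delta(i_0,v)=[n]$ for all $|v|\geq 2$.

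The claim is nevertheless true and the repair is short. Factor from the front instead: $v=wv'$ with $|w|=2$, so on the good event $\delta(i_0,v)=\delta([n],v')$. Then observe that $[n]$ is absorbing: for any letters $\beta,\gamma$,
$\delta([n],\gamma)\supseteq\delta(\delta(i_0,\beta),\gamma)=\delta(i_0,\beta\gamma)=[n]$,
where monotonicity is now applied in the valid direction ($[n]\supseteq\delta(i_0,\beta)$); an induction on $|v'|$ gives $\delta([n],v')=[n]$. With that substitution, the rest of your write-up --- the union bound, the count $1+|\Sigma|+1$, and the deduction that the state complexity is at most the number of accessible subset states --- is correct and coincides with the paper's (largely implicit) proof.
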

}

\section{Conclusion and discussion}

Our main theorem states that the state complexity of a random almost deterministic automaton is greater than $n^d$ with probability at least $c_d>0$ for $n$ sufficiently large. One can wonder how small the constant $c_d$ is and for which sizes the lower-bound holds. As we said in the introduction, we did not try to estimate $c_d$ nor did we try to optimize its value in this article.  Since the powerset construction quickly generates very large automata which would need to be minimized, a proper experimental study does not seem feasible. However, we did generate $1000$ almost deterministic transition structures with $n=100$ states and apply the accessible powerset construction: in $78.6\%$ of the $1000$ cases  the output had more than $n^3$ states. This would lead us to guess that even if the constant $c_3$ that can be derived from our proof is very small, 
combinatorial explosion does occur frequently in practice. 

Also, as noticed above, in our settings it is certain that the property does not hold with high probability, as there is an asymptotically constant probability that the source of the added transition is not accessible. However, this probability is roughly $20.4\%$, not too far from what we obtained in our experiment on size-$100$ structures: it is very possible that if we condition the source of the added transition to be accessible, then our result holds with high probability. However, our proof techniques, based on an intensive use of the Birthday Problem  cannot prove this: completely new ideas are necessary to establish such a result.

{It is natural to ask whether our result could be strengthen from super-polynomial size to exponential size. We do not know if this generalization holds, but it seems out of reach of the techniques developed in this paper.}

Another natural direction is to consider the case when there are \emph{few} final states, as $\Theta(\sqrt{n})$ final states may be considered too large for a random deterministic automaton. The extreme case is to allow exactly one final state by choosing it uniformly at random. If we do so, our analysis using primitive words fails: with high probability the $b$-cycles we built have no final state at all, and neither has the associated $b$-cycle $\C$ in the powerset construction. However, we are confident that our techniques can be used to capture this distribution: by studying the paths ending in this final state, we should be able to find for each $b$-cycle $\C_i$ a word $w_i$ that maps exactly one state to the final state, and such that the $w_i$'s are all different. This would be enough to establish that the states of $\C$ are pairwise non-equivalent and prove the conjecture. Completely formalizing and proving this idea  is an ongoing work.

\section*{Acknowledgment}
  \noindent The authors would like to thank the anonymous reviewers for their valuable feedback. This work was partially supported by the French ANR grant ASPAG (ANR-17-CE40-0017). 

\bibliographystyle{alphaurl}
\bibliography{biblio}
\end{document}